\newcommand*\norm[1]{ \left|\left| #1 \right|\right| }
\newcommand*\Z{ \mathbb{Z} }
\newcommand*\Q{ \mathbb{Q} }
\newcommand*\R{ \mathbb{R} }
\newcommand*\C{ \mathbb{C} }
\newcommand*\N{ \mathbb{N} }
\newcommand*\T{\mathbb T }
\newcommand\set[1]{\left\{ #1 \right\}}
\theoremstyle{definition}
\newtheorem{mydef}{Definition}[section]
\newtheorem{Remark}{Remark}
\theoremstyle{theorem}
\newtheorem{mythm}{Theorem}[section]
\newtheorem{mylemma}{Lemma}[section]
\newtheorem{mycor}{Corollary}[section]
\title[Quantum dynamical bounds for ergodic operators]{Logarithmic quantum dynamical bounds for arithmetically defined ergodic Schr\"odinger operators with smooth potentials}
\author{Svetlana Jitomirskaya and Matthew Powell}
\address{Department of Mathematics, University of California, Irvine CA, 92717}
\date{\today}
\begin{document}
\maketitle

\begin{abstract}
We present a method for obtaining power-logarithmic bounds on the growth of the moments of the position operator for one-dimensional ergodic Schr\"odinger operators. We use Bourgain's semi-algebraic method to obtain such bounds for operators with  multifrequency shift or skew shift underlying dynamics with arithmetic conditions on the parameters. 
\end{abstract}

\section{Introduction}  

It is well known that Anderson localization (pure point spectrum with exponentially decaying eigenfunctions) is highly  unstable with respect to various perturbations. For quasiperiodic operators, it very sensitively depends on the arithmetics of the phase ( a seemingly irrelevant parameter from the point of view of the physics of the problem), and doesn't hold generically \cite{JS}. It can also be destroyed by generic rank one perturbations \cite{Gordon, DELRIO2}. This instability is therefore also present for the - very physically relevant - notion of dynamical localizatio, defined as non-spread of the initially localized wave packet or boundedness in time of the moments of the position operator (see (\ref{mom})).

Thus moments of the position operator for generic rank one perturbations of many operators with a.e. dynamical localization are unbounded in time. This bizarre situation is partially rescued by a result of \cite{DelRioJitLastSim, LOC}:  when eigenfunctions have an additional SULE (semi-uniform localization)  property, the moments of the position operators of {\it all} rank-one perturbations grow at most power-logarithmically. Indeed SULE has since been proved for all operators with localization that come from physically realizable models. From this point of view, power-logarithmic bounds of the moments, are the stable - and therefore physically relevant - property, making it worthwhile to prove directly for operator families with (expected) a.e. localization, bypassing the localization proof. This, in particular, includes one-dimensional ergodic operator families $H_{\omega,x}: \ell^2(\Z) \to \ell^2(\Z)$ given by
\begin{equation}
\label{SOpsDef}
(H_{\omega,x}\psi)(n) = \psi(n-1) + \psi(n + 1) + V(T_\omega^n(x))\psi(n),
\end{equation}
where $T_\omega$ is an ergodic transformation and $V$ is a real-valued function,  in the regime of positive Lyapunov exponents. 

Direct proofs of upper quantum dynamical bounds for quasiperiodic and other ergodic operators with positive Lyapunov exponents have been done, in increasing generality in \cite{DamanikTcherem, mavi2, LanaHan1}. In all these cases, the results featured the desired stability in phase and often were also arithmetic in frequency (in contrast with many localization proofs). All the papers mentioned above obtain vanishing of the transport exponents $\beta(p)$ (see \eqref{TransExpDef}), which implies {\it sub-polynomial} growth of the moments. Here we present a method that allows to improve this to the desired {\it power-logarithmic} bounds. We note that our results are also phase-stable and our frequency conditions are arithmetic. The only previous direct proof of power-logarithmic bounds was done for the Anderson model in \cite{JitomirskayaBaldez} based on different considerations, but we note that for the Anderson model localization always holds (\cite{carmonakleinmartinelli} or see a very simple recent argument in \cite{jzhu}). Thus, to the best of our knowledge, we present the first proof of power-logarithmic quantum dynamical bounds for models without localization.

To get such bounds we, inspired by the theory of logarithmic dimension developed in \cite{Landrigan-Powell}, introduce the notion of logarithmic transport exponents (see \eqref{LogTransExpDef}) and obtain estimates for them.

Technically, our method goes back to \cite{LanaLast2} where the existence of  transfer matrices growing appropriately along a subsequence was first used to prove zero Hausdorff dimension of spectral measures for one-frequency quasiperiodic operators, including in situations where localization cannot hold. The ideas of \cite{LanaLast2} were first applied in \cite{DamanikTcherem} to obtain vanishing transport exponents for those models, and then this was further modified and developed in \cite{mavi2} to allow very rough functions. These methods however required continued fraction techniques and did not extend naturally even to the case of higher-dimensional tori.  This was tackled in \cite{LanaHan1} which developed a method allowing to handle general dynamics of zero topological entropy. Here, for our one-frequency result we go back to the approach of \cite{LanaLast2,DamanikTcherem,mavi2}. The method of \cite{LanaHan1} however is too rough for the logarithmic scale. It turns out that  for higher-dimensional shifts and skew-shifts already the basics of the Bourgain's semi-algebraic/large deviations method \cite{BourgainBook} are ideally suited to obtain the desired power-logarithmic bounds on the moments.

The key estimate from Bourgain's method used here is the sublinear bound \eqref{DiscBound} on the number of hits of a semi-alebraic set by a shift (\cite{BourgainBook}) or skew-shift (\cite{WencaiDisc}) trajectory. In fact, all we need is a much weaker statement: the existence of at least one miss in sublinear time, which of course follows from the sublinear bound. We make some explicit estimates on the power used in the sublinear bound (\eqref{DiscBound}) in Section \ref{section:SA}. The sublinear bound was also fruitfully used  in a recent work \cite{LanaWencaiDynamics} to establish vanishing of transport exponents $\beta(p)$ (thus subpolynomial bounds on the moment growth)  for long-range quasiperiodic operators, for which the authors of \cite{LanaWencaiDynamics} developed a non-transfer-matrix based approach. It is an interesting question whether power-logarithmic bounds can be also obtained in that case.

We cover all scenarios where a.e. Anderson localization has been proved for one-dimensional operators with analytic  quasiperiodic and skew-shift potentials as described in Bourgain's book \cite{BourgainBook} and with Gevrey extensions in \cite{SKleinOneD,SKleinMultiD}. For all these models the a.e. dynamical localization was also shown to hold \cite{BourgainJitomirskaya}. Essentially, what we demonstrate by this work is that power-logarithmic bounds on transport can be viewed as {\it dynamical localization-light}, since the proof is considerably simpler than that of localization and in fact can be obtained in many known scenarios as a part of the latter proof. Yet the results are phase-stable and presumably optimal as far as phase-stable results go. Just as with Anderson localization, our theorems are non-perturbative (obtained as a corollary of positive Lyapunov exponents) for analytic potentials over toral shifts and Gevrey potentials for one-frequency shifts, while they require large coupling constants dependent on the frequency for the multifrequency Gevrey and skew shift cases. We note however, that all such dependence comes from the large deviation estimates that we use as a black box; we don't add any further ``perturbative'' components through our technique.

We proceed to formulate our main results. Consider the time-averaged quantity:
\begin{equation}
a(n,t) = \frac 2 T \int_0^\infty e^{2t/T}\frac 12 \left(\left|\left\langle e^{itH_{\omega, x}} \delta_0, \delta_n \right\rangle \right|^2 + \left|\left\langle e^{itH_{\omega, x}} \delta_1, \delta_n \right\rangle \right|^2 \right) dt,
\end{equation}
where $\delta_n(m) = 1$ when $m = n$ and 0 otherwise.

Dynamical localization is characterized by boundedness in time of the moments of the position operator:

\begin{equation}\label{mom}
\left\langle |X|^p(T)\right\rangle = \sum_{n \in \Z} (1 + |n|)^p a(n,T).
\end{equation}
For simplicity, we are restricting our attention to time-averaged quantities, but our analysis can be carried through for non-time-averaged quantities as well. We only consider time-averaging for a small simplification.

Dynamical localization always implies Anderson localization, but is strictly stronger \cite{DelRioJitLastSim, jss} . When dynamical localization does not hold, the moments of the position are unbounded in time and a natural quantity of interest is how fast this growth is. Classically, this is captured by the upper and lower transport exponents:
\begin{equation}\label{TransExpDef}
\beta^+(p) = \limsup_{t \to \infty} \frac{\ln\left\langle |X|^p (t)\right\rangle}{p\ln t}; \quad \beta^-(p) = \liminf_{t\to\infty} \frac{\ln\left\langle |X|^p(t)\right\rangle}{p \ln t},
\end{equation}
which describe power-law bounds on the growth of the moments. It is known that, under very relaxed conditions (c.f. \cite{LanaHan1}), the transport exponents vanish when the Lyapunov exponent is positive. 
Let us refine the notion of transport exponents by defining the logarithmic transport exponents as
\begin{equation}\label{LogTransExpDef}
\beta^+_{\ln}(p) = \limsup_{t\to\infty} \frac{\ln\left\langle |X|^p (t)\right\rangle}{p\ln \ln t}; \quad \beta^-_{\ln}(p) = \liminf_{t\to\infty} \frac{\ln\left\langle |X|^p(t)\right\rangle}{p \ln \ln t}.
\end{equation}
Our first result is that positivity of the Lyapunov exponent will imply that this exponent is finite for every $p.$ 

Let $T_\omega$ represent either the shift or the skew-shift on the torus, $\T^\nu,$ $G^\sigma(\T^\nu)$ denote the Gevrey class, $L(E)$ denote the Lyapunov exponent, and $DC(A,c)$ and $SDC(A,c)$ denote Diophantine conditions (see Section \ref{section:Prelim} for the relevant definitions). In this regime, we have the following.

\begin{mythm}\label{THM3}
Let $H_{\omega, x}$ be an operator of the form \eqref{SOpsDef} with $T_\omega$ given by the shift on $\T,$ and either $f$ is analytic or $f\in G^\sigma(\T), \sigma > 1,$ and obeys the transversality condition \eqref{TransversalityCond}. Suppose that $L(E) > 0$ for every $E \in \R.$  Then for any $x\in \T, \epsilon > 0$ and $m > 0,$
\begin{enumerate}
\item if $\omega \in \R\backslash \Q,$ then $\liminf_{T\to \infty}\frac{\left\langle |X|^m(T) \right\rangle}{\ln(T)^{m(\sigma + 1 + \epsilon)}} < \infty;$
\item if $\omega \in DC(A,c),$ then $\limsup_{T\to \infty}\frac{\left\langle |X|^m(T) \right\rangle}{\ln(T)^{m(\sigma + 1 + \epsilon)}} < \infty.$
\end{enumerate} 
\end{mythm}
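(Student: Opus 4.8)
The plan is to convert a quantitative upper bound on transfer matrix growth, valid along a suitable sequence of scales, into a lower bound on the time needed for the wave packet to reach a given site, and then sum the resulting estimate over $n$. First I would recall the standard Combes–Thomas / Parseval-type reduction (as in \cite{LanaLast2,DamanikTcherem}): to bound $\langle |X|^m(T)\rangle$ from above it suffices to show that, for the relevant range of $n$, the quantity $a(n,T)$ is small unless $|n|$ is at most some explicit function $L(T)$ of $T$; concretely one uses the inequality relating $a(n,T)$ to $\sup_{|z|=1/T}\|(H_{\omega,x}-E-iz)^{-1}\delta_{0,1}\|$ and the fact that, when the transfer matrices $A_{[1,L]}(E)$ are bounded by $e^{o(L)}$ along a sequence $L_k\to\infty$, the Green's function at energy $E+i/T$ decays on the box $[-L,L]$ once $L$ is large compared with $\ln T$. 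The key input is positivity of the Lyapunov exponent together with the large deviation estimate / semi-algebraic machinery: by the uniform upper semicontinuity of $\frac1L\ln\|A_{[1,L]}(E)\|$ and the avalanche principle one gets, for every $\epsilon>0$, a subexponential (indeed, for the Diophantine case, a \emph{power-logarithmic}) control on the scales $L_k$ at which $\|A_{[1,L_k]}(E,x)\|\le e^{\epsilon L_k}$ holds \emph{uniformly in $E$ in a compact set and uniformly in $x\in\T$}. This is exactly where the ``existence of one miss in sublinear time'' statement quoted from Bourgain's book is used: the exceptional set of phases where the transfer matrix is large is contained in a semi-algebraic set of controlled complexity, and the shift trajectory $\{x+j\omega\}$ avoids it within a window of sublinear (for irrational $\omega$) or, when $\omega\in DC(A,c)$, power-logarithmic length.

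The second step is the precise scale-counting. For the shift on $\T$ with $f$ analytic or Gevrey-$\sigma$ obeying transversality, the large deviation estimates give $|\{x: \|A_{[1,L]}(E,x)\|< e^{(L(E)-\epsilon)L}\}|$ exponentially (resp. sub-exponentially, with Gevrey loss $L^{1/\sigma}$) small, and correspondingly a semi-algebraic description of degree $\lesssim e^{L^{1/\sigma+o(1)}}$; plugging the Diophantine condition into \eqref{DiscBound} shows that a good scale $L$ exists within every interval $[R, R^{C}]$ for an appropriate constant, which after taking logarithms twice translates the $\ln T$ budget into a site-radius budget of order $\ln(T)^{\sigma+1+\epsilon}$. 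I would then run the Parseval/Green's function estimate at this $L=L(T)\sim \ln(T)^{\sigma+1+\epsilon}$: the contribution of sites with $|n|>L(T)$ to $\sum_n(1+|n|)^m a(n,T)$ is summably small (the Green's function decays like $e^{-c|n|/L}$ on that range after one more application of the avalanche principle to pass from one good scale to a union of boxes), while the contribution of $|n|\le L(T)$ is trivially $\le (1+L(T))^m$. For part (1) the good scale only exists along a subsequence $T_k$, giving the $\liminf$; for part (2) the Diophantine condition upgrades this to \emph{every} large $T$, hence the $\limsup$. One must be slightly careful to make all estimates uniform in $E$ over the (a.e.-relevant compact) spectrum — this is handled by the compactness of the spectrum and the joint continuity/semi-algebraicity in $(E,x)$ — and to absorb the Gevrey loss $L^{1/\sigma}$, which is precisely the source of the exponent $\sigma+1$ rather than $2$ (with an extra $+1$ from the Diophantine denominators and the arbitrary $+\epsilon$ to swallow lower-order factors).

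The main obstacle I anticipate is not any single estimate but the bookkeeping that ties together three separate ``scales'': the physical time $T$, the transfer-matrix length $L$, and the semi-algebraic degree (which for Gevrey potentials is itself subexponential in $L$). Getting the composition of these to land exactly on $\ln(T)^{m(\sigma+1+\epsilon)}$ requires choosing $L(T)$ as the correct inverse function and checking that the error terms in the avalanche principle (which are multiplicative in the number of blocks, hence exponential in $L/L_{\mathrm{good}}$) do not overwhelm the $e^{-cL(E)|n|}$ decay — i.e. one needs the good scale $L_{\mathrm{good}}$ to be genuinely $o(L(T))$, which is why the sublinearity (rather than mere density) of the hitting bound is essential. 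A secondary technical point is ensuring the argument is genuinely phase-stable: the semi-algebraic set of bad phases and the hitting estimate must be set up so that the conclusion holds for \emph{every} $x\in\T$, not just a.e.\ $x$, which is automatic here because \eqref{DiscBound} is a statement about the orbit of an arbitrary point.
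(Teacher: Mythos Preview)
Your overall architecture is recognizably in the right family, but there are several concrete problems, and one of them is exactly the issue the paper singles out as its main technical contribution.

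First, the direction of your transfer-matrix statements is inverted. You write that ``when the transfer matrices $A_{[1,L]}(E)$ are bounded by $e^{o(L)}$'' the Green's function decays, and that ``the exceptional set of phases where the transfer matrix is large'' is the bad set to avoid. It is the opposite: the good event is $\|A_k(x)\|\ge e^{(L(E)-\epsilon)k}$, the bad set is where the norm is \emph{small}, and the point of the orbit argument is to find some $j$ with $T_\omega^j(x)$ in the \emph{good} set so that $\|A_k(T_\omega^j x)\|$ is large. Once one has a large norm at some $1\le j\le N$, the Damanik--Tcheremchantsev integral bound $P(N,T)\lesssim e^{-cN}+T^3\int (\max_{1\le n\le N}\|A_n\|^2)^{-1}\,dE$ converts it into smallness of the outside probability. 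Your proposal never invokes this integral bound; instead you appeal to an ``$e^{-c|n|/L}$'' Green's function decay obtained by ``one more application of the avalanche principle''. Neither the avalanche principle nor pointwise Green's function decay is used or needed here; the paper works entirely through the integral bound on $P(N,T)$.

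Second, for the one-frequency shift the paper does \emph{not} use semi-algebraic sets. It uses the continued-fraction argument of Jitomirskaya--Last (their Lemma~9): the set $V_k^f(E,aL(E))$ contains an interval of length $\gtrsim k^{-(1+\sigma+\epsilon)}$, and for $k\sim q_n^{1/(1+\sigma+\epsilon)}$ every orbit hits that interval within $\sim k^{1+\sigma+\epsilon}$ steps. Your semi-algebraic description with ``degree $\lesssim e^{L^{1/\sigma+o(1)}}$'' is both unnecessary here and quantitatively off (after polynomial truncation the relevant degree is polynomial, $\sim k^{\sigma\nu+1+}$, not exponential).

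Third, and most importantly, you do not address the issue the paper flags explicitly: there is no a~priori ``log-ballistic'' bound $\beta^{\pm}_{\ln}(p)<\infty$. Your split into $|n|\le L(T)$ and $|n|>L(T)$ with ``the tail is summably small'' presupposes exactly such a bound (or something equivalent) to control $\sum_{|n|>L(T)}(1+|n|)^m a(n,T)$. The paper handles this by an abstract two-step bootstrap (Theorems \ref{THM:APrioriEstimate} and \ref{THM:BetaAlphaBound}): first show $P((\ln T)^\gamma,T)=O(T^{-\delta})$ for every $\delta$, use this together with the ordinary ballistic bound $\langle|X|^p(T)\rangle\lesssim T^p$ to deduce $\beta^{+}_{\ln}(p)<\infty$ for all $p$, and only then run the $\alpha^{\pm}_{\ln}$ argument to pin down $\beta^{\pm}_{\ln}(p)\le 1+\sigma$. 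Without this step your tail estimate is a gap, not a detail.
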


\begin{Remark}
We can rewrite the conclusions of Theorem \ref{THM3} as follows:
\begin{enumerate}
\item if $\omega \in \R\backslash \Q,$ then $\beta^-_{\ln}(p) \leq 1 + \sigma$ for every $p > 0$ and $x\in \T.$ 
\item if $\omega \in DC(A,c),$ then $\beta^+_{\ln}(p) \leq 1 + \sigma$ for every $p > 0$ and $x\in \T.$ 
\end{enumerate}
\end{Remark}
\begin{Remark} For analytic $f$ the conclusion holds with $\sigma=1.$
\end{Remark}

We have similar logarithmic quantum-dynamical bounds for non-constant analytic potentials on higher-dimensional tori.

\begin{mythm}\label{THM4Analytic}
Let $H_{\omega, x}$ be an operator of the form \eqref{SOpsDef} with $T_\omega$ given by the shift on $\T^\nu$ with $\nu > 1.$ Suppose also that  $f$ is a non-constant analytic function on $\T^\nu,$ $\omega \in DC(A,c),$ and that $L(E) > 0$ for every $E \in \R.$  Then there exists $\gamma = \gamma(\nu, A)$ such that, for every $m > 0,$
\begin{equation}
\beta^{\pm}_{\ln}(m) \leq \gamma.
\end{equation}
for all $x\in \T^\nu.$
\end{mythm}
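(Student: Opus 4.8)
The plan is to derive Theorem~\ref{THM4Analytic} from a quantum dynamical bound phrased in terms of transfer matrix growth --- a Damanik--Tcheremchantsev type estimate (recalled in Section~\ref{section:Prelim}) asserting, roughly, that $\langle |X|^m(T)\rangle \lesssim_m K^m$ whenever the transfer matrices $A_K(E,x)$ at a \emph{single} scale $K$ are exponentially large, of size $\gg T^{C}$, for all energies $E$ in a subset of an interval $[-R,R] \supset \sigma(H_{\omega,x})$ of $E$-measure at least $2R - T^{-C}$ --- combined with Bourgain's large-deviation theorem and the sublinear hitting bound \eqref{DiscBound}. Given $T$, I would take $K = K(T) \asymp (\ln T)^{1/\rho}$, where $\rho = \rho(\nu) > 0$ is the exponent in the large-deviation estimate, so that simultaneously $e^{\underline{L}K/4} \gg T^{C}$ and the large-deviation bound $e^{-cK^\rho}$ dominates every fixed negative power of $T$; the conclusion then reads $\langle |X|^m(T)\rangle \lesssim_m K^m \asymp (\ln T)^{m/\rho}$, i.e.\ $\beta^{\pm}_{\ln}(m) \le 1/\rho =: \gamma(\nu,A)$, the $A$-dependence entering only through \eqref{DiscBound} and the constants in the large-deviation theorem.

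First I would fix $R$ with $\sigma(H_{\omega,x}) \subset [-R,R]$ uniformly in $x$; continuity of $E \mapsto L(E)$ for analytic potentials together with the positivity hypothesis gives $L(E) \ge \underline{L} > 0$ there. Bourgain's large-deviation theorem yields $\mathrm{mes}\{y \in \T^\nu : \|A_K(E,y)\| < e^{\underline{L}K/2}\} < e^{-cK^\rho}$ for all $E \in [-R,R]$ and all large $K$; integrating in $E$ and applying Chebyshev, the set $\mathcal{B}_K := \{y : \mathrm{mes}_E\{E \in [-R,R] : \|A_K(E,y)\| < e^{\underline{L}K/2}\} > e^{-cK^\rho/2}\}$ has measure $< CR\,e^{-cK^\rho/2}$. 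Approximating $f$ by a trigonometric polynomial of degree $\asymp K$ (with exponentially small error, by analyticity) and realizing $\T^\nu$ as a real algebraic variety, the entries of the resulting approximate transfer matrix become polynomials of degree $\asymp K^2$ in $(E,y)$; the crude product bound makes the approximation error in $\|A_K\|$ at most $1 \ll e^{\underline{L}K/2}$, so --- using that fibre measures of semi-algebraic sets are themselves semi-algebraic functions of controlled degree --- $\mathcal{B}_K$ is contained in a semi-algebraic set $\widetilde{\mathcal{B}}_K$ of degree $\le K^{C'}$ with $C'$ independent of $T$, and $\mathrm{mes}(\widetilde{\mathcal{B}}_K) < e^{-cK^\rho/4}$.

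Next, since $\omega \in DC(A,c)$, I would apply \eqref{DiscBound} with $M \asymp (\deg \widetilde{\mathcal{B}}_K)^{C_1} \asymp K^{C'C_1}$: the trajectory $\{x + n\omega : 0 \le n < M\}$ meets $\widetilde{\mathcal{B}}_K$ at most $M^{1-\delta_0}$ times, so there is a good translate $n_0 \le M^{1-\delta_0} \le K^{\theta}$ with $x + n_0\omega \notin \widetilde{\mathcal{B}}_K$, where the explicit estimate on the power in \eqref{DiscBound} from Section~\ref{section:SA} is exactly what forces $\theta = \theta(\nu,A) < 1$. Writing $A_{n_0+K}(E,x) = A_K(E,x+n_0\omega)\,A_{n_0}(E,x)$ and bounding $\|A_{n_0}(E,x)\| \le C^{n_0} = e^{o(K)}$, the estimate $\|A_K(E,x+n_0\omega)\| \ge e^{\underline{L}K/2}$ on a set of $E$-measure $\ge 2R - e^{-cK^\rho/2}$ upgrades to $\|A_{n_0+K}(E,x)\| \ge e^{\underline{L}K/4} \gg T^{C}$ on essentially the same energy set; feeding this into the dynamical estimate of the first paragraph, at scale $K' = n_0 + K \asymp K$, gives $\langle |X|^m(T)\rangle \lesssim_m K'^m \asymp (\ln T)^{m/\rho}$ uniformly in $x$, which is the assertion.

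The hard part is the bookkeeping that links the three ingredients: one must hold the degree of $\widetilde{\mathcal{B}}_K$ to a power of $K$ with small enough exponent and invoke \eqref{DiscBound} with a strong enough sublinear gain, so that the good translate is genuinely $n_0 = o(K)$, not merely $n_0 = K^{O(1)}$ --- otherwise the cocycle loss $C^{n_0}$ swamps the large-deviation gain $e^{\underline{L}K}$ and the argument collapses. This is precisely why the explicit control of the power in \eqref{DiscBound} is needed. A secondary but essential point is that the dynamical input must only require transfer matrix growth on a set of energies of nearly full $E$-measure, not on all of $\sigma(H_{\omega,x})$: for $\nu > 1$ no large-deviation argument can supply a pointwise-in-$E$ lower bound across an energy net fine enough to resolve time $T$ at the logarithmic scale.
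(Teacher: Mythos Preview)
Your argument has a genuine gap at the step where you claim the good translate satisfies $n_0 \le K^\theta$ with $\theta < 1$. The semi-algebraic set $\widetilde{\mathcal{B}}_K$ has degree at least $K^{\nu+1}$ (after truncating $f$ to a trigonometric polynomial of degree $\asymp K$ and passing to real polynomials, the entries of the approximate $A_K$ have degree $\asymp K^{\nu+1}$; the norm and fiber-measure conditions only raise this), while the explicit computation in Section~\ref{section:SA} gives $\delta \le 1/(A+\nu)$ for the shift. Feeding this into \eqref{DiscBound}, a miss is guaranteed only by time $\asymp B^{C/\delta} \gtrsim K^{C(\nu+1)(A+\nu)}$, a \emph{large} power of $K$, not a sublinear one. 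Thus exactly the collapse you warned about occurs: $C^{n_0}$ swamps $e^{\underline L K}$. The explicit control of $\delta$ shows $\delta$ is small, not close to $1$; it works against you here, not for you.

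The paper circumvents this by abandoning the lossy factorization $\|A_{n_0+K}\| \ge \|A_K(x+n_0\omega)\|/\|A_{n_0}\|$. Since $A$ is an $SL_2$-cocycle one has $\|A_j\| = \|A_j^{-1}\|$, hence $\|A_k(x+j\omega)\| \le \|A_{k+j}(x)\|\cdot\|A_j(x)\|$, and therefore
\[
\max\bigl\{\|A_{k+j}(x)\|,\ \|A_j(x)\|\bigr\} \ge \|A_k(x+j\omega)\|^{1/2}.
\]
This inequality is lossless in $j$, so one accepts $j \le k^{\gamma}$ with $\gamma = C(\nu+1)/\delta \gg 1$ and obtains $\max_{1\le m\le (\ln T)^{\gamma+\epsilon}}\|A_m(x)\|^2 \ge T^\xi$; Theorem~\ref{THM:DynamicsAverages} then gives $\beta_{\ln}^\pm(p) \le \gamma$. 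The price is that $\gamma$ is this large exponent, not the $1/\rho$ you aimed for --- but that is exactly the statement of Theorem~\ref{THM4Analytic}. A secondary point: the paper works at fixed $E$ throughout (so the relevant set $V_k^{\tilde f_{N_1}}(E,cL(E))$ is a plain polynomial sublevel set), then perturbs to nearby complex $z$ via Lemma~\ref{GoodSetInclusion}, and integrates in $E$ only at the very end through \eqref{DynamicsIntegralBound}. This avoids your appeal to semi-algebraicity of fiber-measure functions of $y$, which would require additional input beyond what Section~\ref{subsection:SASets} provides.
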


\begin{Remark}
For analytic $f,$ the condition $L(E) > 0$ for every $E\in \R$ is satisfied for $\lambda f,$ where $\lambda > \lambda_0(f).$ Also we have as an immediate corollary that there exists $\gamma(\nu)$ such that for a.e. $\omega \in T^\nu, \beta^\pm_{\ln}(m) \leq \gamma(\nu)$ for every $m > 0.$
\end{Remark}
Things become a bit more technical when we consider the multi-frequency shift with potentials in the Gevrey class, or when considering the skew shift instead of the shift.

\begin{mythm}\label{THM4}
Let $x\in \T^\nu.$ Let $H_{\omega, x}$ be an operator of the form \eqref{SOpsDef} with $T_\omega$ given by the shift on $\T^\nu$ with $\nu > 1.$ Suppose also that  $f = \lambda f_0 \in G^\sigma(\T^\nu)$ such that $f_0$ obeys the transversality condition \eqref{TransversalityCond}, $\omega \in DC(A,c),$ and that $L(E) > 0$ for every $E \in \R.$  Then there exists $\lambda_0 > 0$ and $\gamma = \gamma(\sigma, \nu, A)$ such that, for every $\lambda > \lambda_0(f_0,\omega)$ and $m > 0,$
\begin{equation}
\beta^{\pm}_{\ln}(m) \leq \gamma.
\end{equation}
\end{mythm}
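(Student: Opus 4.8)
The plan is to reduce Theorem \ref{THM4} to the same mechanism that drives Theorems \ref{THM3} and \ref{THM4Analytic}, namely: power-logarithmic quantum dynamical bounds follow once we know that transfer matrices grow (roughly) like $e^{cn}$ along a \emph{sufficiently dense} subsequence of scales, where ``sufficiently dense'' means that the gaps between consecutive good scales $n_k$ grow at most polynomially, $n_{k+1} \le n_k^C$. The standard Combes--Thomas / Damanik--Tcherem\-issine argument then converts the condition ``for energies in the (a.e.) spectrum, the transfer matrix at scale $n$ is bounded below by $e^{cn}$ for some $n$ between $N$ and $N^C$'' into the bound $\langle |X|^m(T)\rangle \lesssim (\ln T)^{mC'}$; this is exactly the logarithmic transport exponent estimate $\beta^\pm_{\ln}(m) \le \gamma$. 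So the real content is: (i) produce, via the large deviation theorem (LDT) for Gevrey potentials on $\T^\nu$ from \cite{SKleinMultiD} at coupling $\lambda > \lambda_0(f_0,\omega)$, an upper bound on the measure of the set of $(E,x)$ where the transfer matrix fails to be large at scale $n$; (ii) encode the ``bad'' set as a semi-algebraic set (or an approximation thereof, since Gevrey functions are not polynomials — this is where the transversality condition \eqref{TransversalityCond} and polynomial approximation of $f_0$ enter); (iii) invoke the sublinear bound \eqref{DiscBound} on the number of hits of a shift trajectory of length $n^{\text{(something)}}$ on this semi-algebraic set to conclude that for \emph{every} starting phase $x$ there is at least one scale $n_k$ in the window $[N, N^C]$ that is ``good'' for $x$; (iv) feed this into the dynamical bound.

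Concretely, the steps in order: First, fix the energy window — since $L(E)>0$ for all real $E$ and we only care about $E$ in a compact interval containing the spectrum, uniform positivity $L(E) \ge \gamma_0 > 0$ holds on that interval by upper semicontinuity/compactness. Second, invoke the multi-frequency Gevrey LDT: for $\lambda$ large depending on $f_0$ and the Diophantine constants of $\omega$, there are constants such that $\big|\{x \in \T^\nu : |\tfrac1n \ln\|M_n(E,x)\| - L(E)| > \delta\}\big| < e^{-n^{\tau}}$ for some $\tau>0$, uniformly for $E$ in the window; here I expect $\tau$ (hence ultimately $\gamma$) to depend on $\sigma, \nu$, and the Diophantine exponent $A$, matching the statement. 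Third, replace $f_0$ by a polynomial truncation $P$ of its Fourier/Taylor expansion of degree $d = d(n)$ chosen so that $\|f_0 - P\|_\infty$ is smaller than $e^{-n^{\tau}}$-scale perturbations of the transfer matrix norm — Gevrey regularity gives $d(n)$ polynomial in $n$ up to the Gevrey exponent $\sigma$, and transversality ensures the perturbation doesn't destroy the lower bound on a set of comparably small complementary measure. This makes $\{x : \|M_n(E,x)\| \le e^{\gamma_0 n/2}\}$ (for a dense set of $E$, discretized) a semi-algebraic set $\mathcal{B}_n$ of degree polynomial in $n$ and measure $< e^{-n^{\tau/2}}$. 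Fourth, apply the sublinear hitting bound \eqref{DiscBound} for the shift on $\T^\nu$: since $\operatorname{mes}(\mathcal{B}_n) < e^{-n^{\tau/2}}$ and $\deg \mathcal{B}_n$ is polynomial in $n$, a trajectory $\{x + j\omega : 0 \le j < n^{B}\}$ (for suitable $B = B(\sigma,\nu,A)$) meets $\mathcal{B}_n$ at most $n^{B}\cdot\operatorname{mes}(\mathcal{B}_n)^{c} + \text{(lower order)} < n^{B}$ times — in particular it misses $\mathcal{B}_n$ for some $j$, i.e.\ there exists $j_0 < n^B$ with $\|M_n(E, x+j_0\omega)\| \ge e^{\gamma_0 n/2}$. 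Translating back, this says: for the operator $H_{\omega,x}$ and every scale $N$, there is a scale $n \in [N, N^{B+1}]$ at which the transfer matrix (started from a controlled site) is exponentially large. Fifth, run the Damanik--Tcherem\-issine-type dynamical argument (as already done for Theorem \ref{THM3}) with this polynomial gap structure to get $\langle |X|^m(T)\rangle \lesssim (\ln T)^{m\gamma}$ with $\gamma = \gamma(\sigma,\nu,A)$, giving $\beta^\pm_{\ln}(m) \le \gamma$; the $\pm$ (both limsup and liminf) comes precisely because the Diophantine hypothesis $\omega \in DC(A,c)$ makes the good-scale structure hold at \emph{every} large scale $N$, not just along a subsequence.

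The main obstacle I expect is step three — the passage from the genuinely non-polynomial Gevrey potential to a semi-algebraic description of the bad set with \emph{quantitative control on both the degree and the measure simultaneously}. One has to choose the polynomial approximation degree $d(n)$ large enough that the truncation error is negligible against the LDT scale $e^{-n^\tau}$ (so $d(n)$ must grow like a power of $n$ governed by $\sigma$), yet the semi-algebraic set built from a degree-$d(n)$ polynomial in $\nu$ variables has complexity growing with $d(n)$, and the exponent in the sublinear bound \eqref{DiscBound} degrades with complexity; one must check these two effects balance so that $B$, and hence $\gamma$, stays finite (and is a genuine function of $\sigma,\nu,A$ only). The transversality condition \eqref{TransversalityCond} is the tool that keeps the sublevel sets of the approximating polynomials from having anomalously large measure (no flat pieces), and the large-$\lambda$ hypothesis is what makes the LDT applicable at all for multi-frequency Gevrey potentials; both are used precisely here. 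Once the semi-algebraic complexity bookkeeping is in place, the remaining steps are routine adaptations of the one-frequency argument and of Bourgain's semi-algebraic machinery, and the skew-shift variant follows identically using the skew-shift sublinear bound of \cite{WencaiDisc} in place of \eqref{DiscBound}.
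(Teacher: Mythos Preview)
Your proposal is correct and follows essentially the same route as the paper: Klein's LDT from \cite{SKleinMultiD}, polynomial truncation of the Gevrey potential to make the bad set semi-algebraic (the paper's Lemma \ref{GoodSetInclusion}), Bourgain's sublinear hitting bound (Theorem \ref{DiscrepancyBound}) to find a good iterate $j \le k^{\gamma+}$ for every $x$, the $SL_2$ cocycle identity $A_{k+j}(x) = A_k(x+j\omega)A_j(x)$ to transfer this to growth of $A_m(x)$ at some $m \le (\ln T)^{\zeta\gamma}$, and then the Damanik--Tcheremchantsev mechanism via Theorems \ref{THM:APrioriEstimate} and \ref{THM:BetaAlphaBound}. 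One small correction to your architecture: the transversality condition \eqref{TransversalityCond} is not what controls the measure in the semi-algebraic approximation step (that step uses only the Gevrey Fourier decay $|\hat f(n)| \le e^{-|n|^{1/\sigma}}$ to choose the truncation degree $N_1 \sim k^{\sigma\nu+}$); rather, transversality is a hypothesis of Klein's LDT itself, so the smallness $|\mathcal{B}_k| < e^{-k^\alpha}$ of the bad set is obtained directly from \eqref{GevreyLDT} and then transferred to the polynomial approximant via the telescoping estimate.
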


\begin{Remark}
The condition on $\lambda_0$ comes from \cite{SKleinMultiD} and  is necessary to obtain and use a large deviation estimate which is critical to our proof. See Theorem \ref{MultiDGevreyLDT}.
\end{Remark}

\begin{mythm}\label{THM:SkewShiftResult}
 Let $H_{\omega, x}$ be an operator of the form \eqref{SOpsDef} with $T_\omega$ given by the skew-shift on $\T^\nu,$  suppose $f = \lambda f_0\in G^\sigma(\T^\nu)$ such that $f_0$ obeys \eqref{TransversalityCond}, and $\omega \in SDC(A,c).$  Suppose that $L(E) > 0$ for every $E \in \R.$  Then there exists $\lambda_0 > 0$ and $\gamma = \gamma(\sigma, \nu, A)$ such that for every $\lambda > \lambda_0(f_0,\omega)$ and $m > 0,$
\begin{equation}
\beta^{\pm}_{\ln}(m) \leq \gamma.
\end{equation}
for all $x\in \T^\nu.$
\end{mythm}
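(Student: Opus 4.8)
\emph{Outline.} I would follow the scheme of \cite{LanaLast2,DamanikTcherem,mavi2}, with the continued-fraction input replaced by Bourgain's semi-algebraic machinery from \cite{BourgainBook}. Everything reduces to the following transfer-matrix statement: for every $x\in\T^\nu$, every $m>0$, and every sufficiently large $T$, there is a length $\ell=\ell(T)\asymp(\ln T)^{\gamma}$ such that for \emph{every} $E$ in a fixed compact interval $I$ containing the spectrum, the interval $[-\ell,\ell]$ contains a ``regular'' block for $H_{\omega,x}$ at $E$ of length $n\asymp\ln T$ --- a block over which the transfer matrices of $H_{\omega,x}$ grow at rate at least $\bar L/2$, where $\bar L:=\inf_{E\in I}L(E)>0$ (equivalently, a block with exponential Green's-function decay at $E$). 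Granting this, the Damanik--Tcheremchantsev estimate for the time-averaged transport quantities (the complexification $E\mapsto E+i/T$ is harmless at scales $\asymp\ln T$; cf.\ \cite{DamanikTcherem,mavi2}) bounds the time-averaged probability of escaping $[-\ell,\ell]$ by $CT^{C}e^{-\bar L n/2}$, so choosing $n\asymp_m\ln T$ large enough to beat the power of $T$ that the $m$-th moment demands gives $\langle|X|^m(T)\rangle\le C(m)\,\ell^m\asymp(\ln T)^{\gamma m}$, i.e.\ $\beta^{\pm}_{\ln}(m)\le\gamma$. The bound holds for every large $T$, not merely along a subsequence --- which is what gives $\beta^{+}_{\ln}$ as well as $\beta^{-}_{\ln}$ --- and, as every step below is uniform in the initial phase, for all $x\in\T^\nu$.

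\emph{Large deviations and semi-algebraic approximation.} Fix a scale $n$. The Gevrey large deviation theorem for the skew-shift with $\omega\in SDC(A,c)$ and coupling $\lambda>\lambda_0(f_0,\omega)$ (Theorem \ref{MultiDGevreyLDT}), together with $L_n(E)\ge L(E)\ge\bar L$, shows that for all large $n$ and all $E\in I$ the set $\mathcal B_{n,E}:=\{\,y\in\T^\nu:\tfrac1n\ln\|A_n(E,y)\|<\bar L/2\,\}$ has measure $<\eta_n$, with $\eta_n$ decaying super-polynomially in $n$ ($\ln(1/\eta_n)/\ln n\to\infty$). Since $f=\lambda f_0\in G^\sigma(\T^\nu)$, truncating the Fourier series of $f_0$ at a suitable length approximates $\|A_n(E,y)\|$, up to an error negligible at scale $n$, by a polynomial in the coordinate functions of degree $\le n^{C_0}$, $C_0=C_0(\sigma,\nu)$; hence for each $E\in I$ the set $\mathcal B_{n,E}$ is contained in a semi-algebraic subset of $\T^\nu$ of degree $\le n^{C_0}$ and measure $<2\eta_n$. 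Quantifying this degree/accuracy trade-off for $G^\sigma$ potentials is the purpose of Section \ref{section:SA}.

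\emph{The sublinear bound and the regular block.} Now feed the above semi-algebraic set into the sublinear hitting bound \eqref{DiscBound} for skew-shift trajectories (\cite{WencaiDisc}, which uses $\omega\in SDC(A,c)$): there is $C_1=C_1(\nu,A)$ such that for $\ell:=n^{C_0C_1}$ --- a value that lies in the admissible range precisely because $\eta_n$ is super-polynomially small --- one has $\#\{\,0\le j<\ell:T_\omega^{j}x\in\mathcal B_{n,E}\,\}<\ell^{1-\delta}$ for some $\delta=\delta(\nu,A)>0$. In particular, for every $E\in I$ there is $j^{\ast}=j^{\ast}(E,n)<\ell$ with $T_\omega^{j^{\ast}}x\notin\mathcal B_{n,E}$, i.e.\ the block $[\,j^{\ast},\,j^{\ast}+n\,]\subset[0,\ell]$ is a regular block for $H_{\omega,x}$ at $E$; the same argument applied to $T_\omega^{-j}$ produces one inside $[-\ell,0]$. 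Taking $n\asymp_m\ln T$ then yields $\ell\asymp(\ln T)^{C_0C_1}$, so the conclusion holds with $\gamma=C_0C_1=\gamma(\sigma,\nu,A)$. (For non-constant analytic $f$ one has $C_0=C_0(\nu)$, which is why the analytic-potential statement of Theorem \ref{THM4Analytic} carries no dependence on $\sigma$.)

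\emph{Main obstacle.} The delicate part is the bookkeeping of scales just described: one must run the semi-algebraic approximation of $\ln\|A_n(E,\cdot)\|$ for $G^\sigma$ potentials keeping the degree a fixed power $n^{C_0(\sigma,\nu)}$, and then check that the power $C_1(\nu,A)$ in the skew-shift form of \eqref{DiscBound} genuinely produces a regular block within $n^{C_0C_1}$ sites, so that $\ell$ is a fixed power of $\ln T$, with exponent independent of $m$. Both of these require the imported large deviation estimate to be super-polynomially strong, and it is exactly here that the coupling hypothesis $\lambda>\lambda_0(f_0,\omega)$ of \cite{SKleinMultiD} is used; this is the only source of that hypothesis, our technique adding no further perturbative component. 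The quantum-dynamical estimate invoked in the outline, by contrast, concerns only the fixed operator $H_{\omega,x}$, is insensitive to the skew-shift structure, and may be taken off the shelf from \cite{DamanikTcherem,mavi2}.
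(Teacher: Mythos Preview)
Your proposal is correct and follows essentially the paper's approach: semi-algebraic approximation of the bad set (this is Lemma \ref{GoodSetInclusion} and Remark \ref{N1Choice}, giving degree $\sim k^{\sigma\nu+}$ --- not Section \ref{section:SA}, which only computes the explicit $\delta$ in \eqref{DiscBound}), Klein's LDT (Theorem \ref{MultiDGevreyLDT}), the skew-shift sublinear bound from \cite{WencaiDisc}, and the Damanik--Tcheremchantsev Lemma \ref{DamanikTcherProbIdentity}. The only cosmetic difference is that the paper routes the final moment estimate through the abstract quantities $\alpha^{\pm}_{\ln}$ (Theorems \ref{THM:BetaAlphaBound}, \ref{THM:APrioriEstimate}, \ref{THM:DynamicsAverages}), explicitly flagging the absence of an a priori ``log-ballistic'' bound on $\beta^{\pm}_{\ln}$ as the technical point, whereas your direct splitting $\langle|X|^m(T)\rangle\le C\ell^m+(CT)^m P(\ell,T)+(\text{ballistic tail})$ is the same computation unwrapped --- it works precisely because $P((\ln T)^\gamma,T)=O(T^{-a})$ for \emph{every} $a$, which is exactly the hypothesis of Theorem \ref{THM:APrioriEstimate}.
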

 \begin{Remark} As mentioned earlier, the perturbative nature of Theorems \ref{THM4, THM:SkewShiftResult} is fully captured in the $\omega$-dependence of $\lambda_0$ that comes from \cite{SKleinOneD,SKleinMultiD}, while the bound $\gamma$ that we prove to exist is constant for a.e. Diophantine $\omega.$
  \end{Remark}

\begin{Remark}
We will see in our proof that the $\gamma$ that appears in Theorems \ref{THM4} and \ref{THM:SkewShiftResult} has $\omega$-dependence which appears precisely as the constant $\delta$ from \eqref{DiscBound}. It is possible to explicitly compute $\gamma = C(\sigma\nu + 1)\left(\frac 1 \delta\right).$  Here $C$ is a universal constant $C = C(\nu).$ The constant $\delta$ is different for the shift and skew shift, and will be obtained by semialgebraic methods in section \ref{section:SA}, where we obtain the explicit estimates $\delta \leq \frac{1}{A + \nu}$ for the shift and $\delta < \frac{1}{A\nu2^{\nu -1}}$ for the skew-shift.
\end{Remark}

\begin{Remark}
One of the only places where there is still room for improvement in this approach is the estimate on $\delta$ in Theorem \ref{DiscrepancyBound}. The closer $\delta$ is to 1, the smaller $\gamma$ will be, and thus the better the localization result. Our estimate for the shift follows from a harmonic analysis approach given by Bourgain. For $\omega \in DC(A,c),$ other estimates have been obtained by other authors using alternative methods (c.f. \cite{LanaHan1} and \cite{WencaiDisc}) but when $A \gg 1,$ our localization result is stronger. 
\end{Remark}

We note that the method in \cite{LanaHan1} while applicable to all our models and a lot more, is insufficient to obtain $\ln$-type estimates which we are after here, largely because it allows to find the required exponential growth of the transfer matrix only on polynomially-large length scales, whereas the growth needs to be on logarithmic length scales to obtain $\ln$-type estimates. 


Related to dynamical bounds are dimensional bounds on spectral measures. It is known that positive Lyapunov exponent implies that the spectral measures have Hausdorff dimension zero for every phase. A finer notion, introduced in \cite{LandriganThesis} and explored in more generality in \cite{Landrigan-Powell}, is the logarithmic dimension. In short, we say that the upper logarithmic dimension of a measure, $\mu,$ is less than $\alpha$ if the measure is supported on a set of logarithmic dimension less than $\alpha.$ A result due to Simon \cite{SimonDim} says that spectral measures for 1D quasiperiodic operators with positive Lyapunov exponent are supported on a set of logarithmic capacity $0$ for a.e. phase. This implies that the upper logarithmic dimension of the spectral measures is at most $1$ for a.e. phase. It leaves unclear what happens on this null set of phases. Moreover, while upper bounds on quantum dynamics imply suitable upper bounds on upper dimension of spectral measures, the reverse is not, in general, true. Indeed, examples are known where the spectral measure is pure point but quantum dynamics is even quasi-ballistic (see \cite{DelRioJitLastSim}). Since we prove power-logarithmic quantum dynamics bounds for all phase, a consequence is a (weaker) bound on the upper logarithmic dimension for every phase. Thus, while we obtain weaker dimensional estimates this way, we are able to handle every phase, not just a.e. phase.

By Theorem 2.6 from \cite{Landrigan-Powell}, we have the following corollary.
\begin{mycor}
Under the assumptions of Theorem \ref{THM3}, with $\omega \in DC(A,c),$ we have $\dim^+_{\ln}(\mu) \leq 1 + \sigma,$ where $\mu$ is the spectral measure related to $\delta_0$ and $H_{\omega,x}.$
Under the assumptions of Theorem \ref{THM4}, we have $\dim^+_{\ln}(\mu) \leq \gamma.$
\end{mycor}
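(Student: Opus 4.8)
The plan is to obtain the corollary as an immediate consequence of \cite[Theorem 2.6]{Landrigan-Powell}, which is the logarithmic-scale analogue of the classical Guarneri--Combes--Last-type inequality bounding a (suitable upper) dimension of a spectral measure by the upper transport exponents. All of the analysis has already been done in Theorems \ref{THM3} and \ref{THM4}; what remains is a short translation, so I do not expect a genuine obstacle, only some bookkeeping.

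First I would record the dynamical input in the form required by \cite[Theorem 2.6]{Landrigan-Powell}. Part (2) of Theorem \ref{THM3} (equivalently, the Remark following it) gives, when $\omega\in DC(A,c)$, the bound $\beta^+_{\ln}(p)\le 1+\sigma$ for every $p>0$ and every $x\in\T$; in the analytic case one takes $\sigma=1$. Theorem \ref{THM4} gives $\beta^+_{\ln}(p)\le\gamma$ for every $p>0$. The Diophantine hypothesis is genuinely needed for the statement coming from Theorem \ref{THM3}: part (1) only controls the $\liminf$ quantity $\beta^-_{\ln}$, which is insufficient to bound the \emph{upper} logarithmic dimension; it is precisely the upgrade to the $\limsup$ bound $\beta^+_{\ln}$ in part (2) that requires $\omega\in DC(A,c)$.

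Next I would apply \cite[Theorem 2.6]{Landrigan-Powell} to the operator $H_{\omega,x}$ and the vector $\delta_0$: it yields $\dim^+_{\ln}(\mu)\le\beta^+_{\ln}(p)$ for every $p>0$ (hence $\dim^+_{\ln}(\mu)\le\inf_{p>0}\beta^+_{\ln}(p)$), where $\mu$ is the spectral measure of $\delta_0$. Combining with the bounds of the previous step gives $\dim^+_{\ln}(\mu)\le 1+\sigma$ under the hypotheses of Theorem \ref{THM3} and $\dim^+_{\ln}(\mu)\le\gamma$ under those of Theorem \ref{THM4}. The one point to verify is that the dynamical quantity controlled by our theorems is the one \cite[Theorem 2.6]{Landrigan-Powell} asks for: our moments \eqref{mom} are built from the time-averaged $a(n,T)$ and from both initial vectors $\delta_0$ and $\delta_1$, but since $\{\delta_0,\delta_1\}$ is a cyclic set for the Jacobi operator $H_{\omega,x}$, the measure $\mu_{\delta_0}+\mu_{\delta_1}$ is a maximal spectral measure; thus a bound built from both vectors controls $\mu_{\delta_0}$, and the logarithmic-dimension conclusion descends to $\mu=\mu_{\delta_0}$. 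This matching of conventions --- time-averaging, the pair of initial conditions, the passage to the cyclic subspace --- is the only place any care is needed, and it is routine.
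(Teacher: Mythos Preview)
Your proposal is correct and follows exactly the paper's approach: the corollary is stated immediately after the sentence ``By Theorem 2.6 from \cite{Landrigan-Powell}, we have the following corollary,'' with no further proof given. Your additional remarks about matching conventions (time-averaging, the pair $\delta_0,\delta_1$, the cyclic subspace) are reasonable bookkeeping that the paper leaves implicit.
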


Other quantities have been proposed for studying dynamical localization-type estimates, see \cite{BarGermTcherem, DamanikTcherem}, but one of the major advantages of $\beta^\pm_{\ln}(p)$ is that, similar to $\beta^\pm(p),$ it is stable under perturbations in certain circumstances. See Theorem \ref{THM:DynamicsAverages} part (b) for a precise statement.

One transfer-matrix based way to approach upper dynamical bounds goes back to a scheme by Damanik and Tcheremchantsev \cite{DamanikTcherem} wherein the quantity $\beta^{\pm}(p)$ was related to suitable growth of the transfer matrices along suitable length scales  (see also \cite{JitomirskayaBaldez}) . In this paper, we refine this scheme to allow us to obtain finer dynamical estimates. Our contribution is the following theorem, which required us to address certain technical limitations in the original argument (see Section \ref{subsection:TEXP} for the relevant definitions and Section \ref{section:TEXPProofs} for full details). 

\begin{mythm} \label{THM:DynamicsAverages} Suppose $H_1$ is of the form \eqref{SOpsDef} with bounded potential and $\sigma(H_1) \subset [-K + 1, K - 1].$ 
\begin{enumerate}
\item[(a)]
Suppose for all $\delta < \infty$ and $T > T_0,$ we have
\begin{equation}\label{DynamicsIntegralBound}
\int_{-K}^K \left(\min_{l = \pm 1} \max_{1 \leq lj \leq \ln(T)^\gamma} \norm{A_j^{f,E + i/T}(x)}^2\right)^{-1} dE = O(T^{-\delta})
\end{equation} for some $\gamma > 1.$ 
Then 
$\beta^+_{\ln,1}(p) \leq \gamma,$ where $\beta^+_{\ln,1}(p)$ 
is the transport exponent associated to $H_1.$ If the above condition holds for a sequence $T_n \to \infty,$ then $\beta^-_{\ln,1} (p) \leq \gamma.$
\item[(b)]
In addition to the above, suppose also that $H_2$ is an operator of the form \eqref{SOpsDef} with bounded potential such that $\sigma(H_2) \subset [-K+1, K - 1]$ and suppose that there exists $A > 0$ such that for all $E\in [-K +1, K - 1], 0 < \epsilon \leq 1,$ and $|n| \leq \ln(\epsilon^{-1}),$
\begin{equation}
\epsilon^A \norm {A_{n}^{v_1, E + i \epsilon}} \lesssim \norm{A_{n}^{v_2, E + i \epsilon}} \lesssim \epsilon^{-A}\norm{A_{n}^{v_1, E + i\epsilon}}.
\end{equation}
Then $\beta^\pm_{\ln,2}(p) \leq \gamma$ for every $p > 0,$ where $\beta^\pm_{\ln,2}(p)$ is the transport exponent associated to $H_2.$
\end{enumerate}
\end{mythm}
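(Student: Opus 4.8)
The plan is to run a refinement of the Damanik--Tcheremchantsev scheme relating the moments $\langle|X|^p(T)\rangle$ to transfer‑matrix growth, carried out on a logarithmic length scale. For part (a), I would first pass to the resolvent: by Plancherel, with $z=E+i/T$,
\begin{equation*}
a(n,T)=\frac{1}{\pi T}\int_{\R}\frac12\left(\left|\left\langle(H_1-z)^{-1}\delta_0,\delta_n\right\rangle\right|^2+\left|\left\langle(H_1-z)^{-1}\delta_1,\delta_n\right\rangle\right|^2\right)dE,
\end{equation*}
and since $\sigma(H_1)\subset[-K+1,K-1]$ a Combes--Thomas bound makes the contribution of $\{|E|>K\}$ exponentially small in $T$, so it suffices to integrate over $[-K,K]$. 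Then split the moment at the logarithmic scale $L:=\FLOOR{\ln(T)^\gamma}$. The sum over $|n|\le L$ is at most $(1+L)^p\le(1+\ln(T)^\gamma)^p$ because $\sum_n a(n,T)=1$; this is exactly the claimed power‑logarithmic size, so everything reduces to showing that the tail $\sum_{|n|>L}(1+|n|)^p a(n,T)$ is negligible.

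For the tail I would decompose into $L<|n|\le T^2$ and $|n|>T^2$. The far piece is controlled by Combes--Thomas, which gives $a(n,T)\lesssim KTe^{-c|n|/T}$ and hence $\sum_{|n|>T^2}(1+|n|)^p a(n,T)\lesssim KT^{2p+2}e^{-cT}\to 0$. For the middle range I would invoke the refined transfer‑matrix estimate established in Section~\ref{section:TEXPProofs}: up to a universal power $T^{C_0}$,
\begin{equation*}
\sum_{L<|n|\le T^2}a(n,T)\;\lesssim\;T^{C_0}\int_{-K}^{K}\left(\min_{l=\pm1}\max_{1\le lj\le L}\norm{A_j^{f,E+i/T}(x)}^2\right)^{-1}dE,
\end{equation*}
the key point being that $\max_{1\le lj\le L'}$ is nondecreasing in $L'$, so one may use the single smallest window $L$ uniformly over the whole dyadic decomposition of $L<|n|\le T^2$ (using the larger, better window $\sim|n|$ at each shell would only help). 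By hypothesis \eqref{DynamicsIntegralBound} this integral is $O(T^{-\delta})$ for every $\delta$; choosing $\delta>2p+C_0+1$ yields $\sum_{L<|n|\le T^2}(1+|n|)^p a(n,T)\le T^{2p+C_0}O(T^{-\delta})=O(T^{-1})$. Hence $\langle|X|^p(T)\rangle\le(1+\ln(T)^\gamma)^p+o(1)$ for $T>T_0$, so $\beta^+_{\ln,1}(p)\le\gamma$; and if \eqref{DynamicsIntegralBound} holds only along a sequence $T_n\to\infty$, the same estimate along $T_n$ gives $\beta^-_{\ln,1}(p)\le\gamma$.

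For part (b), the plan is to deduce the hypothesis of part (a) for $H_2$ from that for $H_1$ and then apply part (a) to $H_2$. Taking $\epsilon=1/T$ in the comparability assumption gives $\norm{A_j^{v_2,E+i/T}}\gtrsim T^{-A}\norm{A_j^{v_1,E+i/T}}$ for $1\le lj\le\ln T$, and this should transfer \eqref{DynamicsIntegralBound} from $H_1$ to $H_2$ up to a polynomial‑in‑$T$ factor, which is absorbed by the faster‑than‑polynomial decay available in \eqref{DynamicsIntegralBound} (so that a new but still arbitrary $\delta$ is obtained). The point requiring care is that the window $\ln(T)^\gamma$ in \eqref{DynamicsIntegralBound} is larger than the scale $\ln(\epsilon^{-1})=\ln T$ on which comparability is postulated, so one must argue that, for the energies controlling the integral, the relevant transfer‑matrix growth already manifests on the scale $\sim\ln T$; alternatively, one compares the dynamical amplitudes $a_1(n,T)$ and $a_2(n,T)$ directly via the Green's functions on that scale.

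The step I expect to be the main obstacle is the refined transfer‑matrix estimate used in part (a). The original Damanik--Tcheremchantsev argument is tuned to polynomial time scales $T^\gamma$, where losses that are polynomial in the length scale, or even of size $T^{o(1)}$, are invisible; on the logarithmic scale $\ln(T)^\gamma$ those steps --- the choice of intermediate scales, the passage between half‑line and whole‑line Green's functions, and the bookkeeping of the powers of $T$ --- must be reorganized so that the final bound is expressed purely in terms of the window $\ln(T)^\gamma$ with prefactors that are genuinely polynomial in $T$, since otherwise the faster‑than‑polynomial hypothesis \eqref{DynamicsIntegralBound} cannot be used. This is presumably what is meant by addressing ``certain technical limitations in the original argument,'' and the length‑scale matching in part (b) is a secondary instance of the same phenomenon.
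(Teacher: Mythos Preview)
Your approach to part (a) is correct, and it takes a different, more direct route than the paper. The paper introduces the auxiliary quantities $S^\pm_{\ln}(\alpha)$ and $\alpha^\pm_{\ln}$, proves the abstract relation $\beta^\pm_{\ln}(+\infty)\le\alpha^\pm_{\ln}$ (Theorem~\ref{THM:BetaAlphaBound}) \emph{assuming} an a~priori finite bound on $\beta^+_{\ln}(p)$, and then supplies that a~priori bound as a separate result (Theorem~\ref{THM:APrioriEstimate}). The paper stresses this a~priori issue as the main technical point. You bypass the whole $\alpha^\pm_{\ln}$ framework: your three-scale split $|n|\le L$, $L<|n|\le T^2$, $|n|>T^2$ bounds $\langle|X|^p(T)\rangle$ directly, and the ``a~priori'' obstacle simply never arises. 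The ingredients are identical --- the Damanik--Tcheremchantsev outside-probability bound (Lemma~\ref{DamanikTcherProbIdentity}) for the middle shell and the ballistic/Combes--Thomas bound for the far tail --- but your organization is leaner. What the paper's route buys is the general inequality $\beta^\pm_{\ln}(+\infty)\le\alpha^\pm_{\ln}$, which may be of independent interest; your route proves the theorem with less scaffolding.

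Two small remarks on (a). First, the ``refined transfer-matrix estimate'' you cite is exactly Lemma~\ref{DamanikTcherProbIdentity}, which is quoted from \cite{DamanikTcherem} unmodified; no reworking of the Damanik--Tcheremchantsev argument is needed, and the ``technical limitations'' mentioned in the paper refer to the a~priori issue above, not to this lemma. Second, you do not need the dyadic decomposition you allude to: since $\sum_{L<|n|\le T^2}a(n,T)\le P(L,T)$, Lemma~\ref{DamanikTcherProbIdentity} with $N=L$ already gives the bound $e^{-cL}+T^3\cdot O(T^{-\delta})$, and $e^{-c\ln(T)^\gamma}$ is superpolynomially small since $\gamma>1$.

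For part (b) you correctly flag the scale mismatch: the comparability hypothesis is stated for $|n|\le\ln(\epsilon^{-1})=\ln T$, while the window in \eqref{DynamicsIntegralBound} is $\ln(T)^\gamma$. The paper's own proof of (b) uses the comparability for $|n|\le\ln(T)^\gamma$ without comment, so the discrepancy is present there as well; either the hypothesis should read $|n|\le\ln(\epsilon^{-1})^\gamma$, or one must argue, as you suggest, that the max over $1\le lj\le\ln(T)^\gamma$ is already realized at scale $\lesssim\ln T$ in the situations of interest.
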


Theorem \ref{THM:DynamicsAverages} is similar to Theorem 1 in \cite{DamanikTcherem}, but there is a major  issue with just repeating the proof of Theorem 1 in \cite{DamanikTcherem} using $\ln(T)^\gamma$ in place of $T^\gamma.$ The problem is that the result in \cite{DamanikTcherem} a priori assume that $\beta^{\pm}(p) < \infty$ for every $p > 0.$ This is the well-known ballistic upper bound. We do not, unfortunately, have a similar a priori estimate on $\beta^{\pm}_{\ln}(p),$ even when $\beta^{\pm}(p) = 0,$ which means the original argument is insufficient. Our main technical achievement on the way to a proof of Theorem \ref{THM:DynamicsAverages} is a sufficient condition (Theorem \ref{THM:APrioriEstimate}) under which we can say $\beta^{\pm}_{\ln}(p) < C < \infty$ for every $p > 0.$ Once we have this, we can use the ideas from \cite{DamanikTcherem} to obtain Theorem \ref{THM:DynamicsAverages}.

This essentially reduces the problem of bounding log-transport exponents to obtaining lower bounds on the growth of the transfer matrix along particular length scales. 
This will be done in a two-step process. First, we will demonstrate that, for a fixed energy and frequency, transfer matrix growth can be suboptimal only for a set of phases of small measure.
This will be captured by so-called large deviation estimates. Then we will show that every phase will correspond to a transfer matrix with good growth after at most power-log many iterates of the transformation.

The rest of our paper is organized in the following way. In Section \ref{section:Prelim} we introduce the relevant definitions needed for our paper. Section \ref{subsection:TEXP} is devoted to those definitions needed for the proof of Theorem \ref{THM:DynamicsAverages}. Section \ref{subsection:SASets} recalls facts about semialgebraic sets which will be necessary for the proof of Theorem \ref{THM4}. Section \ref{subsection:LDT} recalls the large deviation theorems needed for measure estimates. We prove Theorem \ref{THM:DynamicsAverages} in Section \ref{section:TEXPProofs}. We explicitly compute discrepancy bounds in Section \ref{section:SA}. We prove two technical lemmas regarding the set of \say{good} phases in Section \ref{section:TechLemmas}. Finally, we prove Theorem \ref{THM3} in Section \ref{section:1DCase} and Theorem \ref{THM4} in Section \ref{section:MultiDCase}. Proofs of theorems \ref{THM4Analytic} and \ref{THM:SkewShiftResult} are essentially identical to that of theorem \ref{THM4}. However, we describe the small changes needed in, correspondingly, Section \ref{AnalyticCaseProof} and Section \ref{section:MultiDSkewShift}.


\section{Preliminaries}\label{section:Prelim}

\subsection{Schr\"odinger operators and transfer matrices}

We consider the Schr\"odinger operator, $H_{\omega, x}: \ell^2(\Z) \to \ell^2(\Z)$ given by
\begin{equation}\label{QPDef}
(H\psi)(n) = \psi(n - 1) + \psi(n + 1) + f(T_\omega^nx) \psi(n),
\end{equation}
where $x,\omega \in \T^\nu,$ $\omega= (\omega_1,...,\omega_\nu)$ and $(\omega_1,...,\omega_\nu, 1)$ are rationally independent,  $f \in G^\sigma(\T^\nu)$ and $T$ is either the shift:  $T_\omega x = x + \omega,$ or skew shift:  $T(x_1,...,x_\nu) = (x_1 + \omega, x_2 + x_1, x_3 + x_2,..., x_\nu + x_{\nu + 1}).$

Here $G^\sigma(\T^\nu)$ denotes the Gevrey class:
$$G^\sigma(\T^\nu) = \set{f: \T^\nu \to \R: \norm{D^\alpha f}_\infty < C^{|\alpha| + 1}(\alpha!)^\sigma}.$$
An equivalent definition of $G^\sigma$ which we will take advantage of is:
$$G^\sigma(\T^\nu) = \set{f: \T^\nu \to \R: |\hat{f}(n)| \leq e^{-|n|^{1/\sigma}}}.$$

For technical reasons, we will further restrict our attention to those Gevrey class functions that obey a transversality condition:

\begin{equation}\label{TransversalityCond}
D^\alpha f(x) \ne 0 \quad \text{ for any } x \in \T^\nu, \alpha \in \N^\nu.
\end{equation}
From this point forward, when discussing $f\in G^\sigma(\T^\nu),$ we will mean those $f\in G^\sigma(\T^\nu)$ that satisfy \eqref{TransversalityCond}.
Recall that, for any $E\in \C,$ any solution to the eigen-equation $H_{\omega,x}\psi = E\psi$ can be reconstructed from the $n$-step transfer matrix:  
\begin{equation} A_n^{f,E}(x) = \prod_{k = n}^1 \begin{pmatrix} f_\omega^k(x) - E & -1 \\ 1 & 0\end{pmatrix}\end{equation}
by 
\begin{equation}
\begin{pmatrix} \psi(n + 1) \\ \psi(n)\end{pmatrix} = A_n^{f, E}(x) \begin{pmatrix} \psi(1) \\ \psi(0)\end{pmatrix}.
\end{equation}

We can then define 
$$L_n(E) = \frac 1 n \int \ln\norm{A_n^{f,E}(x)} dx$$
and the Lyapunov exponent is given by 
$$L(E) = \lim L_n(E) = \inf L_n(E).$$

We will also need a Diophantine condition. We say that $\omega \in DC(A,c)$ if $\norm{k \cdot \omega} > c|k|^{-A}$ for every $k \in \Z^\nu \backslash \{0\}.$ We say that $\omega \in SDC(A,c)$ if $\norm{k \cdot \omega} > c\frac{1}{|k|(\ln|k|)^A}.$ We will only consider $\omega \in SDC(A,c)$ for $A \leq 2.$

In what follows, $C,$ and $c$ will denote finite constants and $\epsilon$ will denote a small constant, all of which can only depend on $f, \nu, \omega,$ or $E.$ Moreover, these constants may change throughout a proof, but $\epsilon$ will always denote a small constant, and boundedness of $C$ and $c$ will be unchanged.

\subsection{Transport exponents}\label{subsection:TEXP}

Recall that we have defined
\begin{equation*}
\beta^+_{\ln}(p) = \limsup \frac{\ln\left\langle |X|^p (t)\right\rangle}{p\ln \ln t}; \quad \beta^-_{\ln}(p) = \liminf \frac{\ln\left\langle |X|^p_t\right\rangle}{p \ln \ln t}.
\end{equation*}
It is simple to verify that $\beta^{\pm}_{\ln}(p)$ is non-decreasing in $p,$ so obtaining a bound on $\beta^{\pm}_{\ln}(+\infty)$ is sufficient for bounding $\beta^{\pm}_{\ln}(p)$ for any $p > 0.$

To bound $\beta^{\pm}_{\ln}(+\infty),$ for general operators, we will need to define the so-called outside probabilities:
\begin{align}
P_l(N,T) &= \sum_{n < -N} a(n,T)\\
P_r(N,T) &= \sum_{n > N} a(n,T)\\
P(N,T) &= P_l(N,T) + P_r(N,T) \\
&= \sum_{|n| > N} a(n,T)
\end{align}
along with associated $\log$-transport quantities:
\begin{align}
S^+_{\ln}(\alpha) &= -\limsup \frac{\ln(P(\ln(T)^\alpha - 2, T))}{\ln\ln T}\\
S^-_{\ln}(\alpha) &= -\liminf \frac{\ln(P(\ln(T)^\alpha - 2, T))}{\ln\ln T}\\
\alpha^{\pm}_{\ln} &= \sup\set{\alpha \geq 0: S^{\pm}_{\ln}(\alpha) < \infty}.
\end{align}
A quick note on our convention here; we use $\ln(T)^\alpha - 2$ so that $S^{\pm}_{\ln}(0) = 0$ as in \cite{DamanikTcherem}.

Our goal in Section \ref{section:TEXPProofs} will be to show that, under suitable conditions, $\beta^{\pm}_{\ln}(p) \leq \alpha^{\pm}_{\ln}$ for every $p > 0,$ which will be used to establish Theorem \ref{THM:DynamicsAverages}.
 
\subsection{Semialgebraic sets}\label{subsection:SASets}
\begin{mydef}
We say that a set $\mathcal{S} \subset \R^n$ is semialgebraic if it can be written as a finite union of polynomial inequalities. More precisely, suppose $P = \set{p_1,\dots,p_s}\subset \R[X_1,\dots, X_n],$ is a finite collection of real polynomials in $n$ variables, whose degrees are bounded by $d.$ A closed semialgebraic set, $\mathcal{S} \subset \R^n,$ is given by an expression of the form
\begin{equation}\mathcal{S} = \bigcup_{j = 1}^k \bigcap_{m \in Q_j}\set{x \in \R^n: p_m s_{jm} 0},\label{SADef}\end{equation}
where $Q_j \subset \set{1,...,s}$ and $s_{jm} \in \set{\leq, =, \geq}$ are arbitrary. Moreover, we say that $\mathcal{S}$ has degree at most $sd,$ and its degree is the infimum of $sd$ over all representations as in \eqref{SADef}.
\end{mydef}

\begin{mythm}[\cite{BourgainBook} Corollary 9.6]\label{SACovering} Let $\mathcal{S}\subset [0,1]^n$ be semialgebraic of degree $B.$ Let $\epsilon > 0$ be a small number and $|\mathcal{S}| < \epsilon^n,$ where $|\cdot|$ represents Lebesgue measure. Then there exists $C = C(n)$ such that $\mathcal{S}$ may be covered by at most $B^C\epsilon^{1 - n}$ $\epsilon$-balls.
\end{mythm}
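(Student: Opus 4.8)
This is Corollary 9.6 of \cite{BourgainBook}; I would sketch its proof, importing from the surrounding sections of \cite{BourgainBook} the quantitative theory of semialgebraic sets---quantitative Tarski--Seidenberg and the Yomdin--Gromov reparametrization---as black boxes. The argument is an induction on the ambient dimension $n$, with the measure hypothesis entering precisely to convert the trivial covering bound $\epsilon^{-n}$ into the claimed $\epsilon^{1-n}$. For $n=1$: a semialgebraic subset of $[0,1]$ of degree $B$ is a union of $O(B)$ points and intervals (it has $O(B)$ sign changes), so $|\mathcal{S}|<\epsilon$ forces the total length to be $<\epsilon$; covering the $j$-th interval (length $\ell_j$) by $\lceil\ell_j/\epsilon\rceil+1$ intervals of length $\epsilon$ and summing gives at most $1+(\sum_j\ell_j)/\epsilon+O(B)\le B^C$ balls, which is the assertion since $\epsilon^{1-n}=1$.

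For the inductive step, write $[0,1]^n=[0,1]^{n-1}_y\times[0,1]_t$, let $\pi$ be the projection onto the $y$-factor, and split $\mathcal{S}$ according to the size of its vertical fibers $\mathcal{S}_y$. By quantitative Tarski--Seidenberg the ``heavy base'' $G=\{y:|\mathcal{S}_y|\ge\epsilon\}$ is semialgebraic of degree $\le B^{C_1(n)}$, and Fubini together with Chebyshev's inequality gives $|G|<\epsilon^{n-1}$; applying the inductive hypothesis to $G$ and then slicing $[0,1]_t$ into $\lceil 1/\epsilon\rceil$ layers covers $\mathcal{S}\cap\pi^{-1}(G)$ by at most $B^{C_1 C(n-1)}\epsilon^{2-n}\cdot\lceil 1/\epsilon\rceil\lesssim B^{C}\epsilon^{1-n}$ balls. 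Over the complementary ``light base'' every fiber is a union of $\le B$ intervals of total length $<\epsilon$, hence of individual length $<\epsilon$, so the corresponding part $\mathcal{S}^\sharp$ of $\mathcal{S}$ is contained in the vertical $\epsilon$-neighborhood of the set $\Gamma$ of left endpoints of these fiber-intervals. Since the left-endpoint branches form $\le B$ semialgebraic functions of $y$ of degree $\le B^{C_2(n)}$, the set $\Gamma$ is a semialgebraic subset of $[0,1]^n$ of dimension $\le n-1$ and degree $\le B^{C_2}$, and the number of $\epsilon$-balls needed for $\mathcal{S}^\sharp$ is at most a dimensional constant times that needed for $\Gamma$.

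It then remains to cover an $(n-1)$-dimensional semialgebraic set $\Gamma$ of degree $D=B^{C_2}$ by $\lesssim D^{C_3(n)}\epsilon^{1-n}$ balls; here I would invoke the Yomdin--Gromov reparametrization lemma, which yields $\le D^{C_3(n)}$ maps $[0,1]^{n-1}\to\Gamma$ of $C^1$-norm $\le 1$ whose images cover $\Gamma$. Subdividing each parameter cube into $\epsilon^{-(n-1)}$ subcubes of side $\epsilon$, each of which maps to a set of diameter $\lesssim_n\epsilon$, gives the bound. Summing the two contributions and choosing $C(n)$ large enough yields the covering number $\le B^{C(n)}\epsilon^{1-n}$. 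The main obstacle is entirely inside the imported machinery: arranging that the degree of the heavy base $G$, of the endpoint branches, and of the reparametrizing data all grow only polynomially in $B$ (with exponents depending on $n$), and the reparametrization step itself, which quantifies the heuristic that a semialgebraic function oscillating a lot does so only on a set of controlled low complexity---exactly the phenomenon responsible for the gain of one factor of $\epsilon$. All of this is carried out in \cite{BourgainBook}.
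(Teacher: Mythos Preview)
The paper does not give its own proof of this statement: it is quoted verbatim as Corollary~9.6 of \cite{BourgainBook} and used as a black box, so there is no in-paper argument to compare your proposal against. Your sketch is a reasonable outline of the proof in \cite{BourgainBook}, with the same ingredients (quantitative cell decomposition/Tarski--Seidenberg and the Yomdin--Gromov reparametrization) and the same mechanism by which the measure bound $|\mathcal{S}|<\epsilon^n$ buys back one factor of $\epsilon$; for the purposes of this paper that is more detail than is needed, since the authors only invoke the statement, not its proof.
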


Using these results for general semialgebraic sets, we can obtain sublinear bounds for the shift and skew-shift.

\begin{mythm}\label{DiscrepancyBound} Let $T_\omega$ represent either the shift or the skew-shift. Let $\mathcal{S} \subset [0,1]^n$ be semialgebraic of degree $B$ and $|\mathcal{S}| < \eta.$ Let $\omega \in DC(A,c)$ (when considering the shift) or $\omega \in SDC(A,c)$ (when considering the skew-shift), and let $N$ be an integer such that
$$ \ln B \leq \ln N < \ln \frac 1 \eta.$$
Then there is $C = C(n)$ and $\delta = \delta(\omega)$ such that for any $x_0 \in \T^n,$
\begin{equation}\label{DiscBound} \# \set{k = 1,...,N: T_\omega^k(x_0) \in \mathcal{S}} < N^{1 - \delta}B^C.\end{equation}
\end{mythm}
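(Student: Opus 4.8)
\emph{Proof sketch.} The plan is to separate the arithmetic of the rotation vector $\omega$ from the geometry of $\mathcal{S}$: I would combine (i) a quantitative equidistribution (isotropic discrepancy) estimate for the orbit $\{T_\omega^k(x_0)\}_{k=1}^{N}$ applied to \emph{boxes}, whose rate comes from exponential-sum bounds along the orbit, with (ii) the semialgebraic covering theorem (Theorem \ref{SACovering}), which replaces $\mathcal{S}$ by $B^{C}$-many boxes of a size $\epsilon$ that is then optimized.

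\emph{Step 1 (exponential sums along the orbit)}. First I would bound $\bigl|\sum_{k=1}^{N} e^{2\pi i\, j\cdot T_\omega^k(x_0)}\bigr|$ uniformly in $x_0$, for $0\ne j\in\Z^\nu$. For the shift, $T_\omega^k(x_0)=x_0+k\omega$, so the sum factors as $e^{2\pi i\, j\cdot x_0}\sum_{k=1}^{N} e^{2\pi i\, k\,(j\cdot\omega)}$, and its modulus is at most $\min\bigl(N,\norm{j\cdot\omega}^{-1}\bigr)\le c^{-1}|j|^{A}$ by $DC(A,c)$. For the skew-shift, $j\cdot T_\omega^k(x_0)$ is a polynomial in $k$ of degree $\le\nu$ whose leading coefficient is a fixed nonzero rational multiple of $\omega$; applying van der Corput/Weyl differencing $\le\nu-1$ times, and inserting the $SDC(A,c)$ lower bound on the resulting small divisor at the bottom of the recursion, yields a bound of the form $|j|^{C}\,N^{\,1-\delta_1}$ with $\delta_1=\delta_1(\nu,A)>0$; this is where the factor $2^{\nu-1}$ enters.

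\emph{Step 2 (box discrepancy and covering)}. Feeding the Step 1 bounds into the classical Fourier-analytic (Erd\H{o}s--Tur\'an--Koksma) discrepancy inequality, and optimizing the Fourier cutoff, gives for every axis-parallel box $J\subset[0,1]^\nu$
\[
\Bigl|\#\{1\le k\le N:\ T_\omega^k(x_0)\in J\}\ -\ N|J|\Bigr|\ \le\ C\, N^{\,1-\delta_0},
\]
with $\delta_0=\delta_0(\omega)>0$ (tracking constants yields the explicit exponents of Section \ref{section:SA}, comparable to $(A+\nu)^{-1}$ for the shift and to $(A\nu\,2^{\nu-1})^{-1}$ for the skew-shift). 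Now, for any $\epsilon$ with $\eta<\epsilon^\nu$, Theorem \ref{SACovering} covers $\mathcal{S}$ by at most $B^{C}\epsilon^{1-\nu}$ balls of radius $\epsilon$, each enclosed in a box of sidelength $2\epsilon$; summing the displayed estimate over these boxes,
\[
\#\{1\le k\le N:\ T_\omega^k(x_0)\in\mathcal{S}\}\ \le\ C\, B^{C}\bigl(N\epsilon\ +\ N^{\,1-\delta_0}\epsilon^{1-\nu}\bigr).
\]
Since $\ln N<\ln(1/\eta)$ forces $\eta<N^{-1}\le N^{-\delta_0}$, the choice $\epsilon\sim N^{-\delta_0/\nu}$ is admissible; it balances the two terms and produces $\#\{\cdots\}<N^{1-\delta}B^{C}$ with $\delta=\delta(\omega)>0$ satisfying $\delta\le\frac{1}{A+\nu}$ in the shift case and $\delta<\frac{1}{A\nu 2^{\nu-1}}$ in the skew-shift case. (The hypothesis $\ln B\le\ln N$ keeps the $B^{C}$ factor, and the accumulated powers of $|j|$ in the skew-shift case, from swamping the main term.)

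I expect the main obstacle to be Step 1 in the skew-shift case: one must carry the $\le\nu-1$ rounds of Weyl differencing through while keeping the dependence on $j$ polynomial and, at each stage, extracting from $SDC(A,c)$ a rational approximation $|\theta-a/q|\le q^{-2}$ of the leading coefficient $\theta$ with $q$ in the range $N^{c}\le q\le N^{\nu-c}$ in which Weyl's inequality delivers a saving, so that the logarithmic factors built into the $SDC$ condition do not eat the power gain (this is precisely where $A\le 2$ is used). For the shift, Step 1 is elementary and the only real work is making the exponent $\delta$ explicit in the Erd\H{o}s--Tur\'an--Koksma estimate of Step 2.
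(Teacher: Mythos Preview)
Your approach is correct and is essentially the same two-step strategy the paper records: an equidistribution/hit-count estimate for a single small set (ball or box), followed by the semialgebraic covering lemma (Theorem~\ref{SACovering}) to pass to $\mathcal{S}$. The paper itself does not prove Theorem~\ref{DiscrepancyBound} directly but cites Bourgain (Corollary~9.7 of \cite{BourgainBook}) for the shift and \cite{WencaiDisc} for the skew-shift; the explicit argument the paper does give, in Section~\ref{section:SA}, is the shift case of your plan.

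There is one methodological difference worth noting. In Step~2 you package the exponential-sum input through the Erd\H{o}s--Tur\'an--Koksma discrepancy inequality and then sum the uniform box-discrepancy error over the $B^{C}\epsilon^{1-\nu}$ boxes. The paper (following Bourgain) instead majorizes the indicator $\chi_{B(0,\epsilon)}$ directly by a tensor product of Fej\'er kernels, so that the error term becomes a single $\max_{0<|m|<R}\norm{m\cdot\omega}^{-1}\lesssim R^{A}$ rather than a sum $\sum_{0<|j|\le H}\norm{j\cdot\omega}^{-1}$. Both yield \eqref{DiscBound} with some $\delta>0$, but the Fej\'er route gives the cleaner exponent $\delta=1/(A+\nu)$ for the shift, whereas going through ETK and then dividing by $\nu$ as you do typically costs a little (so ``comparable to $(A+\nu)^{-1}$'' is right in spirit but not sharp). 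For the skew-shift your Weyl-differencing outline is exactly the mechanism behind the cited result in \cite{WencaiDisc}; the paper does not reproduce those details and simply quotes the outcome.
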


The case where $T_\omega$ is the shift is due to Bourgain [\cite{BourgainBook} Corollary 9.7] and the case for the skew-shift follows from Lemma 8.4 in \cite{WencaiDisc}. The particular $\delta$ obtained differs between the shift and skew-shift, as we will show in Section \ref{section:SA}.

\begin{Remark}
Different authors obtain different values of $\delta$ for the shift (c.f. \cite{WencaiDisc} and \cite{LanaHan1}) depending on what method they use.
In Section \ref{section:SA} we explicitly estimate $\delta$ for the shift using the approach from \cite{BourgainBook}, which turns out to be better than the values from \cite{WencaiDisc} and \cite{LanaHan1} when $\omega \in DC(A,c), A \gg 1.$
\end{Remark}

\subsection{Large deviation theorems}\label{subsection:LDT}
Throughout the section, we will assume that the energy, $E,$ is such that $L(E) > 0.$

The estimate we will obtain in section \ref{section:SA} will rely on estimates on the measure of semialgebraic sets. The particular semialgebraic sets we are interested in are the set of phases, $x,$ for which $\frac 1 n \norm{A_n^{f,E}(x)}$ converges to $L(E)$ slowly. 
To this end, we recall the following large deviation theorems, the first of which is due to Bourgain, Goldstein, and Schlag, and the second is due to S. Klein, which quantitatively measure the rate of convergence.

For the shift model with non-constant analytic potential, there is a well-known large deviation estimate.

\begin{mythm}[\cite{BourgainBook} Theorem 5.5]\label{THM:AnalyticLDT} Assume $\omega \in \T^\nu$ satisfies $\omega \in DC(A,c).$  Let $f$ be a non-constant real analytic function on $\T^\nu.$ Then there is $\alpha = \alpha(A) > 0$ such that 
\begin{equation}\label{AnalyticLDT} \left|\set{x \in \T^\nu: \left|\frac 1 N \ln \norm{A_N^{f, E}(x)} - L_N(E) \right| < N^{-\alpha}}\right| < e^{-N^{\alpha}}.\end{equation}
\end{mythm}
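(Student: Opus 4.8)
This is the Bourgain--Goldstein--Schlag large deviation estimate, and the route I would take is the classical subharmonicity-plus-avalanche-principle argument. The starting point is that analyticity of $f$ lets one extend the cocycle to a complex strip $\set{z \in \C^\nu : |\IM z_j| < \rho}$, on which every entry of $A_N^{f,E}(z)$ is holomorphic; hence $u_N(z) := \frac 1 N \ln\norm{A_N^{f,E}(z)}$ is subharmonic there, and one has the uniform two-sided bound $0 \le u_N(x) \le C$ for $x \in \T^\nu$ (the lower bound because $\det A_N = 1$ forces $\norm{A_N} \ge 1$, the upper because each transfer factor has norm at most $C$). Feeding the $L^\infty$ bound into the Riesz representation of a subharmonic function on the strip bounds the mass of its Riesz measure uniformly in $N$, which gives a uniform bound $\norm{u_N}_{\mathrm{BMO}(\T^\nu)} \le C$, equivalently the Fourier decay $|\hat u_N(k)| \le C/|k|$. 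This is the only place analyticity is used, and it is exactly what makes the final estimate non-perturbative.

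Next I would use the near-invariance of $u_N$ under $T_\omega$: since $A_N(x+\omega)$ and $A_N(x)$ share a common block of $N-1$ consecutive transfer factors, $|u_N(x+\omega) - u_N(x)| \le C/N$, hence $|u_N(x+j\omega) - u_N(x)| \le Cj/N$. Averaging over $j = 0, \dots, R-1$ produces $\bar u_R := \frac 1 R \sum_{j < R} u_N(\,\cdot + j\omega)$ with $\norm{\bar u_R - u_N}_\infty \le CR/N$ and Fourier coefficients $\widehat{\bar u_R}(k) = \hat u_N(k)\cdot\frac 1 R \sum_{j<R} e^{2\pi i jk\cdot\omega}$; splitting the Fourier series at a level $K$, the high modes have $L^2$ norm $\lesssim (\sum_{|k|>K}|k|^{-2})^{1/2} \lesssim K^{-1/2}$, while on the low modes $\big|\frac 1 R\sum_{j<R} e^{2\pi i jk\cdot\omega}\big| \lesssim (R\norm{k\cdot\omega})^{-1} \lesssim |k|^A/R$ by $\omega \in DC(A,c)$. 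Choosing $R \approx N^{1-\epsilon}$ and $K$ an appropriate power of $N$, and using $\hat u_N(0) = L_N(E)$, one obtains an initial ``weak'' estimate of the shape $|\set{x : |u_N(x) - L_N(E)| > N^{-\tau_0}}| < N^{-\tau_1}$ for some $\tau_0, \tau_1 = \tau(A) > 0$; at this stage the exceptional set is only polynomially small, since on the high-frequency tail only Chebyshev is available.

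The main obstacle, and the technical heart of the theorem, is upgrading this to the stretched-exponential bound $e^{-N^{\alpha}}$ while retaining the $N^{-\alpha}$ precision. For this I would run the standard bootstrap over geometrically growing scales: writing $A_{n\ell}^{f,E}(x) = A_\ell^{f,E}(x+(n-1)\ell\omega)\cdots A_\ell^{f,E}(x)$ and invoking the avalanche principle — valid off the small set where the single-scale factors fail to be hyperbolic or exhibit cancellation, which is excluded using the previous step together with $L(E)>0$ — one obtains
\[
u_{n\ell}(x) = \frac 1 n \sum_{i=1}^{n-1} \bigl( 2 u_{2\ell} - u_\ell \bigr)\!\bigl(x + (i-1)\ell\omega\bigr) + O\!\left(\tfrac 1 \ell\right)
\]
up to $O(1/n)$ boundary terms. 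The exceptional set at scale $n\ell$ is then contained in a union of $n$ shifted copies of the exceptional sets at scales $\ell$ and $2\ell$; since the mean of $2u_{2\ell} - u_\ell$ lies within $O(1/\ell)$ of $L_{n\ell}(E)$ (by convergence of the Lyapunov averages) and a small-measure semialgebraic set is met by a Diophantine orbit with controlled frequency, an inductive choice of scales drives the measure of the bad set down to stretched-exponential size and the deviation down to $N^{-\alpha}$ simultaneously. Making the two stretched exponents match and survive the iteration uniformly in $E \in \R$ is the delicate bookkeeping that makes up the Bourgain--Goldstein--Schlag argument; in the present paper the statement is simply quoted as a black box from \cite{BourgainBook}.
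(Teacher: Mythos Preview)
The paper does not prove this theorem; it is simply quoted as Theorem~5.5 of \cite{BourgainBook} and used throughout as a black box, exactly as you acknowledge in your closing sentence. There is therefore no proof in the paper to compare against, and your outline of the subharmonicity--averaging--bootstrap route is a reasonable high-level summary of the argument in the cited reference.

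Two small corrections to your sketch, for the record. First, the statement carries no hypothesis $L(E)>0$, so that assumption should not appear in your bootstrap step. Second, the upgrade from the polynomial ``weak'' estimate to the stretched-exponential bound is, in the one-frequency case, obtained directly from the BMO control via the John--Nirenberg inequality rather than through an avalanche-principle iteration; in the multifrequency case treated here the argument in \cite{BourgainBook} is a more elaborate plurisubharmonic one. The avalanche principle is central to the continuity of $L(E)$ and to the localization proofs, but it is not the mechanism that produces the LDT itself.
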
 

For the shift model with Gevrey class potential and skew shift with analytic or Gevrey class potential satisfying a transversality condition, we have:

\begin{mythm}[\cite{SKleinMultiD} Theorem 6.1] \label{MultiDGevreyLDT}Assume $f \in G^\sigma(T^\nu)$ satisfies a transversality condition, and suppose $f = \lambda f_0,$ for some $\lambda \in \R$ and $f_0 \in G^\sigma$ fixed. Let $\omega \in DC(c,A)$ (for the shift) or $\omega \in SDC(A,c)$ (for the skew-shift). Then there exists $\lambda_0 = \lambda_0(f_0, A)$ such that for every fixed $|\lambda| > \lambda_0$ and for every energy $E$ we have
\begin{equation}\label{GevreyLDT} \left|\set{x \in \T^\nu: \left|\frac 1 N \ln \norm{A_N^{f, E}(x)} - L_N(E) \right| < N^{-\tau}}\right| < e^{-N^{\alpha}},\end{equation}
for some constants $\tau, \alpha > 0$ depending only on $\nu,$ and every $N > N_0(\lambda, c, f_0, \sigma, \nu).$
\end{mythm}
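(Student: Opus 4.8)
The plan is to follow the Bourgain--Goldstein--Schlag inductive multiscale scheme for large deviation estimates, adapted to the Gevrey setting as in \cite{SKleinOneD,SKleinMultiD}. Write $u_N(x)=\frac1N\ln\norm{A_N^{f,E}(x)}$, so that the quantity to control is the deviation $u_N-L_N(E)$. For analytic $f$ one uses that $u_N$ extends to a subharmonic function on a fixed complex strip whose width controls its BMO-type norm, and then a Cartan/Riesz-mass estimate together with the John--Nirenberg inequality yields the sub-exponential bound on the exceptional set. The obstruction in the Gevrey case is that $f$ only admits an \emph{almost-analytic} extension: truncating the Fourier series of $f$ at frequencies $|n|\le D$ gives a trigonometric polynomial $f_D$ that is analytic on a strip of width $\rho$ but with $\norm{f_D}$ growing like $e^{D\rho}$ there, while $\norm{f-f_D}_\infty\lesssim e^{-D^{1/\sigma}}$. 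One therefore runs the analytic argument for $f_D$ on a strip of width $\rho_N\sim N^{-\kappa}$ with $D=D_N$ chosen so the two errors balance, and tracks how all constants degrade as $\rho_N\to 0$; this is what fixes the admissible exponents $\tau,\alpha$ (depending only on $\nu$).

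First I would establish the \emph{base step}: positivity of $L_N(E)$ with an effective, $E$-uniform lower bound, together with the LDT at an initial scale $N_0$. This is where the coupling constant enters: for $|\lambda|>\lambda_0(f_0,A)$ one has a large-coupling bound $L_N(E)\ge c\ln|\lambda|-O(1)$ uniformly in $E$, giving a gap large enough to absorb the polynomially small errors produced by the loss of analyticity (for analytic $f$ on a fixed strip this margin is automatic, which is why Theorem \ref{THM:AnalyticLDT} needs no coupling hypothesis). The transversality condition \eqref{TransversalityCond} is used precisely here, and again in the iteration: it yields a quantitative Remez-/{\L}ojasiewicz-type sublevel-set estimate $|\{x:|g(x)|<\epsilon\}|\lesssim\epsilon^{c}$ for the relevant compositions $g$ of $f$ with the dynamics, which prevents the transfer-matrix entries from being simultaneously small on a large set and so lets one propagate the lower bound on $u_N$.

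Next comes the \emph{inductive step} from scale $N$ to scale $N'\sim N^{1+\theta}$, carried out via the avalanche principle: writing $A_{N'}$ as a product of $\approx N'/N$ blocks of length $\sim N$, if each block satisfies $u_N\approx L_N$ off a set of measure $<e^{-N^\alpha}$, the avalanche principle upgrades this to $u_{N'}\approx L_{N'}$ off a set of comparable size, \emph{provided} $|L_N(E)-L_{N'}(E)|$ is smaller than the spectral gap available at scale $N$. Hence one simultaneously bootstraps the rate-of-convergence estimate $|L_N(E)-L(E)|\lesssim N^{-\tau}$, which follows from the LDT at scale $N$ by the standard averaging argument. For the multifrequency shift the step is combined with the semi-algebraic covering Theorem \ref{SACovering} and the condition $DC(A,c)$ to control the frequency average; for the skew-shift one additionally uses that $k\mapsto T_\omega^k x$ is polynomial, so the bad sets are semi-algebraic of controlled degree, and one invokes $SDC(A,c)$ (this, together with the weaker control of the non-translation dynamics, is the source of the extra $\omega$-dependence of $\lambda_0$ there).

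The main obstacle is the interplay between the shrinking analyticity width $\rho_N\sim N^{-\kappa}$ and closing the induction: every pass through the Cartan/BMO machinery costs a factor $\rho_N^{-O(1)}\sim N^{O(\kappa)}$, and one must show the avalanche-principle gains (of order $e^{-N^\alpha}$) dominate these polynomial losses at \emph{every} scale, which is what ultimately forces the large-coupling hypothesis to provide a large enough initial margin and pins down $\tau$ and $\alpha$. A secondary, skew-shift-specific difficulty is that the dynamics is not a translation, so the clean Fourier-analytic control of the frequency average is unavailable and the semi-algebraic count \eqref{DiscBound} must be used instead; keeping the degrees of the semi-algebraic sets under control through the iteration is the delicate point of that case.
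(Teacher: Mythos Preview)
The paper does not prove this statement: Theorem~\ref{MultiDGevreyLDT} is quoted verbatim from \cite{SKleinMultiD} (Theorem~6.1) and used as a black box, exactly as Theorem~\ref{THM:AnalyticLDT} is quoted from \cite{BourgainBook}. There is therefore no proof in the paper to compare your proposal against.

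That said, your sketch is a fair high-level summary of the strategy in \cite{SKleinOneD,SKleinMultiD}: truncate $f$ to a trigonometric polynomial, work on a shrinking strip of width $\rho_N\sim N^{-\kappa}$, run the Bourgain--Goldstein--Schlag subharmonic/BMO machinery at each scale, and close the induction via the avalanche principle, with the transversality condition~\eqref{TransversalityCond} supplying the Remez-type sublevel estimates and the large coupling $|\lambda|>\lambda_0$ providing enough initial margin to absorb the polynomial losses coming from the shrinking analyticity width. If your intent was to reconstruct the cited proof rather than the present paper's (nonexistent) argument, the outline is on the right track, though of course the actual work in \cite{SKleinMultiD} lies in the quantitative bookkeeping you allude to but do not carry out.
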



\section{Transport exponents}\label{section:TEXPProofs}
Our first goal in this section is to relate $\beta^{\pm}_{\ln}(p)$ to $S^{\pm}_{\ln}.$ Observe that, if $S^-_{\ln}(\alpha) < +\infty$ we have:
\begin{equation}
P(\ln(T)^\alpha - 2, T) > \ln(T)^{- S^-_{\ln}(\alpha) -}
\end{equation}
and so
\begin{align}
\left\langle |X|^p (T)\right\rangle &= \sum_{n = -\infty}^{+\infty} (|n| + 1)^p a(n,T) \\
&\geq \sum_{|n| > \ln(T)^\alpha - 2} (|n| + 1)^p a(n,T) \\
&\geq C\ln(T)^{\alpha p} P(\ln(T)^\alpha - 2, T) \\
&\geq C\ln(T)^{\alpha p} \ln(T)^{-S^-_{\ln}(\alpha) -}\\
&= C\ln(T)^{\alpha p - S^-_{\ln}(\alpha) -}
\end{align}
and thus
\begin{equation}
\beta^-_{\ln}(p) \geq \alpha - \frac {S^-_{\ln}(\alpha)} p.
\end{equation}

A similar analysis for $S^+_{\ln}(\alpha) < +\infty$ shows 
\begin{equation}
\beta^+_{\ln}(p) \geq \alpha - \frac{S^+_{\ln}(\alpha)} p.
\end{equation}

Together, this shows that
\begin{equation}\label{eq:betatalphalowerbound}
\beta^{\pm}_{\ln}(+\infty) \geq \alpha^{\pm}_{\ln}.
\end{equation}

On the other hand, it is possible to use $\alpha^{\pm}_{\ln}$ to bound $\beta^{\pm}_{\ln}(+\infty)$ from above:
\begin{mythm}\label{THM:BetaAlphaBound}
Let $H$ be an operator of the form \eqref{SOpsDef} with bounded potential and suppose that for some $\eta > 0,$ and for all $p > 0,$ we have
\begin{equation}\label{AsymptoticBoundLog}
\left\langle |X|^p (T)\right\rangle < C_p \ln(T)^{\eta p}.
\end{equation}
Then $0 \leq \alpha^{\pm}_{\ln} \leq \eta$ and 
\begin{equation}\label{BetaAlphaBound}
\beta^{\pm}_{\ln}(+\infty) \leq \alpha^{\pm}_{\ln}.
\end{equation}
\end{mythm}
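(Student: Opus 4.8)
The plan is to follow the Damanik--Tcheremchantsev scheme relating the moments $\langle|X|^p(T)\rangle$ to the outside probabilities $P(N,T)$, with the crucial twist that the hypothesis \eqref{AsymptoticBoundLog} plays the role that the universal ballistic bound $\langle|X|^{2p}(T)\rangle\lesssim T^{2p}$ plays in \cite{DamanikTcherem} --- precisely the ingredient whose absence for the logarithmic scale is flagged in the introduction. The two-sided bound $0\le\alpha^\pm_{\ln}\le\eta$ is the easy part: $\alpha^\pm_{\ln}\ge 0$ because $S^\pm_{\ln}(0)=0<\infty$ by the chosen normalization, while $\alpha^\pm_{\ln}\le\eta$ follows by combining the already-proved lower bound \eqref{eq:betatalphalowerbound}, namely $\beta^\pm_{\ln}(+\infty)\ge\alpha^\pm_{\ln}$, with the observation that \eqref{AsymptoticBoundLog} immediately yields $\beta^\pm_{\ln}(p)\le\eta$ for every $p$ and hence $\beta^\pm_{\ln}(+\infty)\le\eta$. (Alternatively, Chebyshev gives $P(\ln(T)^\alpha-2,T)\le(\tfrac12\ln(T)^\alpha)^{-p}\langle|X|^p(T)\rangle\le C_p 2^p\ln(T)^{(\eta-\alpha)p}$ for large $T$, so for $\alpha>\eta$ letting $p\to\infty$ forces $\limsup_T$ and $\liminf_T$ of $\frac{\ln P}{\ln\ln T}$ to equal $-\infty$, i.e. $S^\pm_{\ln}(\alpha)=\infty$.)

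For \eqref{BetaAlphaBound} it suffices, bounding $\beta^\pm_{\ln}(p)$ for each finite $p$, letting $p\to\infty$, and then letting $\alpha$ decrease to $\alpha^\pm_{\ln}$ at the end, to fix $\alpha>\alpha^\pm_{\ln}$ and $p>0$ and show $\beta^\pm_{\ln}(p)\le\alpha$. Since $\alpha>\alpha^\pm_{\ln}$ we have $S^\pm_{\ln}(\alpha)=+\infty$; unwinding the definition, in the $+$ case this says that for every $M$ one has $P(\ln(T)^\alpha-2,T)\le(\ln T)^{-M}$ for all sufficiently large $T$, and in the $-$ case it produces a sequence $T_k\to\infty$ and exponents $M_k\to\infty$ with $P(\ln(T_k)^\alpha-2,T_k)\le(\ln T_k)^{-M_k}$. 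Writing $N=N(T)=\ln(T)^\alpha-2$, split
\begin{equation*}
\langle|X|^p(T)\rangle=\sum_{|n|\le N}(1+|n|)^p a(n,T)+\sum_{|n|>N}(1+|n|)^p a(n,T).
\end{equation*}
The first sum is at most $(1+N)^p\sum_n a(n,T)\le\ln(T)^{\alpha p}$, using $\sum_n a(n,T)\le 1$ (the time-averaging kernel has unit mass and $\|e^{itH_{\omega,x}}\delta_j\|=1$). For the second sum, Cauchy--Schwarz together with \eqref{AsymptoticBoundLog} applied with exponent $2p$ gives
\begin{equation*}
\sum_{|n|>N}(1+|n|)^p a(n,T)\le\langle|X|^{2p}(T)\rangle^{1/2}P(N,T)^{1/2}\le C_{2p}^{1/2}\,\ln(T)^{\eta p}\,P(N,T)^{1/2}.
\end{equation*}
Choosing $M$ (resp. restricting to $k$ with $M_k$) larger than $2\eta p$ makes the right-hand side $O(1)$, for all large $T$ in the $+$ case and along $T_k$ in the $-$ case. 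Hence $\langle|X|^p(T)\rangle\le 2\ln(T)^{\alpha p}$ on the relevant range of $T$, so $\frac{\ln\langle|X|^p(T)\rangle}{p\ln\ln T}\le\alpha+o(1)$, giving $\beta^+_{\ln}(p)\le\alpha$ (resp. $\beta^-_{\ln}(p)\le\alpha$, since $\liminf$ is bounded above by the limit along $T_k$).

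I expect the only genuinely delicate point to be the Cauchy--Schwarz estimate on the tail sum: controlling $\langle|X|^{2p}(T)\rangle$ is exactly where one needs an a priori growth bound, and this is supplied here only by the standing hypothesis \eqref{AsymptoticBoundLog} rather than by anything universal. The remainder is bookkeeping --- chiefly translating "$S^\pm_{\ln}(\alpha)=\infty$" correctly into a "for all large $T$" statement in the $\limsup$ case versus a "for a subsequence $T_k$" statement in the $\liminf$ case, and making sure the same $T_k$ is carried through the entire $-$ argument (including the $o(1)$ error in $\ln\langle|X|^p(T_k)\rangle/(p\ln\ln T_k)$).
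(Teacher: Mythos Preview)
Your proof is correct and takes a genuinely different route from the paper's. The paper fixes $0\le\alpha\le\alpha^+_{\ln}$ and an auxiliary $\epsilon>0$, splits the moment sum into \emph{four} ranges $|n|\le\ln(T)^\alpha-2$, $\ln(T)^\alpha-2<|n|\le\ln(T)^{\alpha^+_{\ln}+\epsilon/2}$, $\ln(T)^{\alpha^+_{\ln}+\epsilon/2}<|n|\le\ln(T)^{\eta+\epsilon}$, and $|n|>\ln(T)^{\eta+\epsilon}$, and controls the far tail by comparison with a higher moment $\langle|X|^{p+m}(T)\rangle$ (Chebyshev, not Cauchy--Schwarz). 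This yields the intermediate inequality $\beta^+_{\ln}(p)\le\max\{\alpha,\;\alpha^+_{\ln}+\epsilon/2-S^+_{\ln}(\alpha)/p\}$ valid for each finite $p$, before sending $p\to\infty$. By contrast you fix $\alpha>\alpha^\pm_{\ln}$, use $S^\pm_{\ln}(\alpha)=\infty$ to kill the tail outright, and need only a two-way split with Cauchy--Schwarz against $\langle|X|^{2p}(T)\rangle^{1/2}$. Your argument is shorter and cleaner for the stated conclusion; the paper's four-part decomposition buys a quantitative finite-$p$ relation between $\beta^+_{\ln}(p)$ and $S^+_{\ln}(\alpha)$ (mirroring the Germinet--Kiselev--Tcheremchantsev framework it cites), though that extra information is not used elsewhere in the paper. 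Both approaches lean on \eqref{AsymptoticBoundLog} in exactly the place you identify as delicate --- to control a higher moment in the tail --- and your handling of the $\limsup$/$\liminf$ dichotomy via ``all large $T$'' versus ``along $T_k$'' is correct.
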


\begin{Remark}
We can replace \eqref{AsymptoticBoundLog} with the condition $\beta^{+}_{\ln}(p) < \eta$ for every $p > 0.$
\end{Remark}

\begin{Remark}
The following proof uses the same ideas as the proof of Theorem 4.1 in \cite{GKT2004}.
\end{Remark}

\begin{proof}
The bound $0 \leq \alpha^{\pm}_{\ln} \leq \eta$ follows from the computation performed above, so we will focus on proving \eqref{BetaAlphaBound}.

Fix $ 0 \leq \alpha \leq \alpha^{+}_{\ln}, \epsilon > 0$ and consider the following:
\begin{align}
\left\langle |X|^p (T)\right\rangle &= \sum_{n = -\infty}^{+\infty} (|n| + 1)^p a(n,T) \\
&= \sum_{|n| \leq \ln(T)^\alpha - 2} 
+ \sum_{\ln(T)^\alpha - 2 < |n| \leq \ln(T)^{\alpha^{+}_{\ln} + \epsilon/2}} \\
&\quad+ \sum_{\ln(T)^{\alpha^{+}_{\ln} + \epsilon/2} < |n| \leq \ln(T)^{\eta + \epsilon}} 
+ \sum_{\ln(T)^{\eta + \epsilon} < |n|}.
\end{align}

Let us label these sums 1 - 4. A few notes before we start bounding these sums. First, we will assume $\alpha > 0.$ If $\alpha = 0,$ then we may proceed by removing the second sum and replacing $\alpha$ with $\alpha^{+}_{\ln}$ in the first sum. Second, if $\alpha^{+}_{\ln} = \eta,$ then the third sum is unnecessary.

We can bound sum 1 by 
$$\sum_{|n| \leq \ln(T)^\alpha - 2} < C \ln(T)^{\alpha p}.$$ 
We can bound sum 2:  
$$\sum_{\ln(T)^\alpha - 2 < |n| \leq \ln(T)^{\alpha^{+}_{\ln} + \epsilon/2}} \leq C \ln(T)^{p \alpha^{+}_{\ln} + p\epsilon/2} P(\ln(T)^\alpha - 2,T).$$ 
If $\alpha^{+}_{\ln} = \eta,$ then sum 3 is unnecessary. If $\alpha^{+}_{\ln} < \eta,$ then we can bound sum 3 by 
$$\sum_{\ln(T)^{\alpha^{+}_{\ln} + \epsilon/2} < |n| \leq \ln(T)^{\eta + \epsilon}} \leq \ln(T)^{\eta p + p\epsilon} P(\ln(T)^{\alpha^{+}_{\ln} + \epsilon/2},T),$$ 
and by definition of $\alpha^{+}_{\ln},$ the right hand side goes to 0, so it can be further bounded by some constant $C.$ 

Finally, we have the bound for sum 4.  For any $m,$
\begin{align*}
\sum_{\ln(T)^{\eta + \epsilon} < |n|} 
&\leq \ln(T)^{-(\eta + \epsilon) m} \left\langle |X|^{p + m} (T)\right\rangle \\
&\leq C_{p + m} \ln(T)^{-(\eta + \epsilon) m} \ln(T)^{\eta (p + m)}.
\end{align*}
By taking $m > \eta p/\epsilon,$ we have
\begin{equation*}
\sum_{\ln(T)^{\eta + \epsilon} < |n|} < C.
\end{equation*}

Putting everything together, we have
\begin{equation}
\left\langle |X|^p (T)\right\rangle < C + C \ln(T)^{p \alpha} + C \ln(T)^{p \alpha^{+}_{\ln} + p \epsilon/2} P(\ln(T)^\alpha - 2,T).
\end{equation}

Taking $\ln$ throughout, and letting $$f(T,p,\alpha, \epsilon) = \max\set{\alpha p \ln\ln(T),  (p \alpha^{+}_{\ln} + \frac{p \epsilon} 2)\ln\ln(T) + \ln(P(\ln(T)^\alpha - 2, T))},$$ we have
\begin{equation}
\ln\left(\left\langle |X|^p (T)\right\rangle\right) < C + f(T, p, \alpha, \epsilon)
\end{equation}
so
\begin{equation}
\beta^{+}_{\ln}(p) \leq \max\set{\alpha, \alpha^{+}_{\ln} + \frac \epsilon {2}- \frac{S^{+}_{\ln}(\alpha)}{p}}.
\end{equation}
Taking $p \to \infty$ yields our result for $\beta^+_{\ln}(p).$ The proof for $\beta^-_{\ln}(p)$ is similar.

\end{proof}

The major roadblock to using this result to obtain bounds on $\beta^{\pm}_{\ln}(p)$ is that it requires an a priori finite estimate on $\beta^{\pm}_{\ln}(p)$ for every $p > 0,$ which we do not have in general. This differs from the situation arising when we merely want to bound $\beta^\pm(p),$ since in that case we usually have a trivial ballistic upper bound: $\beta^\pm(p) \leq 1.$ To remedy this, we have the following, which provides a sufficient condition for $\beta^{\pm}(p) < C < \infty$ for every $p > 0.$

\begin{mythm}\label{THM:APrioriEstimate}
Let $H$ be an operator of the form \eqref{SOpsDef} with bounded potential and suppose that $\alpha^{\pm}_{\ln} < +\infty.$ Moreover, suppose that, for some $\xi > 0,$
\begin{equation}
P(\ln(T)^\xi, T) = O(T^{-a})
\end{equation}
for every $a > 1,$ and for some $\gamma < \infty$ we have
\begin{equation}\label{AsymptoticBoundStandard}
\left\langle |X|^p (T)\right\rangle < C_p T^{\gamma p}.
\end{equation} 
Then for some $\eta < \infty$ \eqref{AsymptoticBoundLog} holds.
\end{mythm}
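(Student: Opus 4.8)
The plan is to split the moment sum $\langle |X|^p(T)\rangle = \sum_n (|n|+1)^p a(n,T)$ at the scales $|n| \le \ln(T)^\xi$ and $|n| > \ln(T)^\xi$, and to feed each piece one of the two hypotheses (the outside-probability decay and the a priori polynomial bound \eqref{AsymptoticBoundStandard}) in the spirit of the $f(T,p,\alpha,\epsilon)$-bookkeeping used in the proof of Theorem \ref{THM:BetaAlphaBound}. The goal is to produce some finite $\eta$ (to be identified below in terms of $\xi$ and $\gamma$, and possibly $\alpha^{\pm}_{\ln}$) for which \eqref{AsymptoticBoundLog} holds for every $p>0$.

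First I would handle the inner part. For $|n| \le \ln(T)^\xi$ one has the trivial bound $\sum_{|n|\le \ln(T)^\xi}(|n|+1)^p a(n,T) \le C \ln(T)^{\xi p}\sum_n a(n,T) \le C\ln(T)^{\xi p}$, since $\sum_n a(n,T) \le 1$ (a fixed multiple of the total probability, using the two initial vectors $\delta_0,\delta_1$). So the inner part already contributes the desired form with exponent $\xi p$. The work is all in the tail $|n|>\ln(T)^\xi$.

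For the tail I would use a Chebyshev-type trick exactly as in sum 4 of the proof of Theorem \ref{THM:BetaAlphaBound}, but now trading against the polynomial hypothesis \eqref{AsymptoticBoundStandard} instead of against \eqref{AsymptoticBoundLog} itself, which is what we are trying to prove and may not assume. Pick an auxiliary exponent $m>0$. Then
\begin{align*}
\sum_{|n|>\ln(T)^\xi}(|n|+1)^p a(n,T) &\le \ln(T)^{-\xi m}\sum_{|n|>\ln(T)^\xi}(|n|+1)^{p+m}a(n,T)\\
&\le \ln(T)^{-\xi m}\left\langle |X|^{p+m}(T)\right\rangle \le C_{p+m}\,\ln(T)^{-\xi m}\,T^{\gamma(p+m)}.
\end{align*}
This last quantity is a positive power of $T$, not a power of $\ln T$, so the naive choice of $m$ does not kill it. The remedy is to localize the tail further: split $\ln(T)^\xi < |n| \le \ln(T)^{K}$ and $|n|>\ln(T)^K$ for a large constant $K$ to be chosen. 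On the middle range $\ln(T)^\xi < |n|\le \ln(T)^K$ bound $(|n|+1)^p \le \ln(T)^{Kp}$ and then use the outside-probability hypothesis: $\sum_{\ln(T)^\xi<|n|}a(n,T) = P(\ln(T)^\xi,T) = O(T^{-a})$ for every $a>1$, so this middle piece is $\le \ln(T)^{Kp}\cdot O(T^{-a}) = O(1)$, in fact $o(1)$, hence bounded by a constant $C$. On the far range $|n|>\ln(T)^K$, apply the Chebyshev trick above with $\xi$ replaced by $K$: the far piece is $\le C_{p+m}\ln(T)^{-Km}T^{\gamma(p+m)}$, and now choose $K$ large enough (depending on $\gamma$ and on how we pick $m$) — concretely, first fix $m$, then require $Km > $ (something) — wait, $\ln(T)^{-Km}T^{\gamma(p+m)} \to \infty$ regardless of $K,m$ since $T^{\gamma(p+m)}$ dominates any power of $\ln T$. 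So this still fails.

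The correct fix, and the step I expect to be the main obstacle, is that one cannot beat a genuine power of $T$ with powers of $\ln T$ by Chebyshev alone; instead one must use that \emph{both} hypotheses hold simultaneously and interpolate. The right move is: on the range $|n|>\ln(T)^\xi$, write $(|n|+1)^{p} a(n,T) = \big((|n|+1)^{p}a(n,T)^{1/2}\big)\cdot a(n,T)^{1/2}$ and apply Cauchy--Schwarz, or more simply observe that $a(n,T)\le 1$ together with $P(\ln(T)^\xi,T)=O(T^{-a})$ forces the tail of $a(n,T)$ to be summable with huge decay; but to control the $(|n|+1)^p$ weight one does need an upper cutoff on $|n|$, and the only available polynomial-in-$T$ cutoff comes from \eqref{AsymptoticBoundStandard}, which says $a(n,T)$ is negligible once $|n| \gg T^{\gamma+1}$, say, in the Chebyshev sense. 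So the honest three-region split is: (i) $|n|\le \ln(T)^\xi$ — trivial, gives $\ln(T)^{\xi p}$; (ii) $\ln(T)^\xi < |n|\le T^{2\gamma+2}$ — here $(|n|+1)^p \le T^{p(2\gamma+2)}$ but we only need the \emph{probability}, and $P(\ln(T)^\xi,T)=O(T^{-a})$ for \emph{every} $a>1$, so choosing $a = p(2\gamma+2)+2 > 1$ makes this piece $O(T^{-2}) = o(1) \le C$; (iii) $|n|>T^{2\gamma+2}$ — apply Chebyshev with exponent $m$: the piece is $\le \ln(T)^{?}$... no: $\le (T^{2\gamma+2})^{-m}\langle |X|^{p+m}(T)\rangle \le C_{p+m}T^{-(2\gamma+2)m}T^{\gamma(p+m)} = C_{p+m}T^{\gamma p - (\gamma+2)m}\to 0$ once $m > \gamma p/(\gamma+2)$, hence $\le C$. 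Summing the three regions gives $\langle |X|^p(T)\rangle \le C + C + C\ln(T)^{\xi p} \le C_p \ln(T)^{\xi p}$, i.e. \eqref{AsymptoticBoundLog} holds with $\eta = \xi$. Finally, since $\alpha^{\pm}_{\ln}<\infty$ is assumed and \eqref{AsymptoticBoundLog} has now been established, the hypotheses of Theorem \ref{THM:BetaAlphaBound} are met and one concludes $\beta^{\pm}_{\ln}(+\infty)\le \alpha^{\pm}_{\ln}$ as a bonus, though the statement only asks for \eqref{AsymptoticBoundLog} itself. The one delicate point worth double-checking is that the constants $C_{p+m}$ from \eqref{AsymptoticBoundStandard} are allowed to depend on $p$ and $m$ (they are, since \eqref{AsymptoticBoundLog} also permits a $p$-dependent constant $C_p$), so there is no uniformity obstruction in choosing $m$ large depending on $p$.
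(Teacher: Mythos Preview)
Your final three-region argument is correct and is essentially the paper's proof. The paper recycles the four-sum decomposition from the proof of Theorem~\ref{THM:BetaAlphaBound} (taking $\eta=\xi$) and then further splits the outermost sum at $T^{\gamma+\epsilon}$ into pieces I and II, handling I via $(|n|+1)^p\le T^{(\gamma+\epsilon)p}$ together with $P(\ln(T)^{\xi+\epsilon},T)=O(T^{-a})$ for $a>p(\gamma+\epsilon)$, and II via the Chebyshev trick against \eqref{AsymptoticBoundStandard} with $m>\gamma p/\epsilon$. Your regions (ii) and (iii), cut at $T^{2\gamma+2}$ instead of $T^{\gamma+\epsilon}$, are exactly these two pieces; the particular polynomial cutoff is immaterial. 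Your region (i) collapses the paper's sums 1--3 into the single trivial bound $C\ln(T)^{\xi p}$, which is a harmless simplification since the statement only asks for \eqref{AsymptoticBoundLog}, not the sharper $\beta^{\pm}_{\ln}\le\alpha^{\pm}_{\ln}$ conclusion. Both arguments yield $\eta=\xi$.
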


\begin{Remark}
As noted above, \eqref{AsymptoticBoundStandard} always holds with $\gamma = 1$ when the potential is bounded.
\end{Remark}

\begin{proof}
The proof proceeds the same as before, expressing $\left\langle |X|^p (T)\right\rangle$ as a sum, and decomposing that sum into four further sums, except we take $\eta$ to be $\xi.$ With this modification, the bounds for sums 1 - 3 still hold, but we need to be more careful with the fourth sum.

We have:
\begin{equation}
\sum_{\ln(T)^{\xi + \epsilon} < |n|} = \sum_{\ln(T)^{\xi + \epsilon} < |n| \leq T^{\gamma + \epsilon}} + \sum_{T^{\gamma + \epsilon} < |n|}.
\end{equation}
Let us denote the first sum by I and the second sum by II. We can bound sum I by
\begin{align}
\sum_{\ln(T)^{\xi + \epsilon} < |n| \leq T^{\gamma + \epsilon}} &\leq T^{(\gamma + \epsilon)p} P(\ln(T)^{\xi + \epsilon}, T) \\
&\leq T^{p(\gamma + \epsilon) - a}
\end{align}
for large $T,$ where we can take any $a > 1.$ Taking $a > p(\gamma + \epsilon),$ we have 
$
\sum_{\ln(T)^{\xi + \epsilon} < |n| \leq T^{\gamma + \epsilon}} < C.
$
For sum II, we have
\begin{align}
\sum_{T^{\gamma + \epsilon} < |n|} &= T^{-m(\gamma + \epsilon)} \sum_{T^{\gamma + \epsilon} < |n|}(|n| + 1)^{p + m} a(n,T)\\
&\leq T^{-m(\gamma + \epsilon)} \left\langle |X|^{p + m} (T)\right\rangle\\
&\leq C_{m + p} T^{(p + m)\gamma - m(\gamma + \epsilon)} < C.
\end{align}
for $m > \gamma p/\epsilon.$
With these two bounds, we may proceed as before to conclude that $\beta^{+}_{\ln}(p) < C < +\infty.$ 
\end{proof}

We will now turn our attention to the proof of Theorem \ref{THM:DynamicsAverages}. We start with a lemma due to Damanik and Tcheremchantsev:
\begin{mylemma}[\cite{DamanikTcherem} Theorem 7] \label{DamanikTcherProbIdentity} Suppose $H$ is of the form \eqref{SOpsDef}, where $V$ is a bounded real-valued function, and $K \geq 4$ is such that $\sigma(H) \subset [-K + 1, K - 1].$ Then 
\begin{align}
P_r(N,T) &\lesssim e^{-cN} + T^3 \int_{-K}^K \left(\max_{1 \leq n \leq N} \norm{A_n^{f, E + i/T}}^2\right)^{-1} dE\\
P_l(N,T) &\lesssim e^{-cN} + T^3 \int_{-K}^K \left(\max_{1 \leq n \leq N} \norm{A_{-n}^{f, E + i/T}}^2\right)^{-1} dE
\end{align}
\end{mylemma}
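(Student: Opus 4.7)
The plan is to follow the standard Damanik--Tcheremchantsev scheme that relates outside probabilities to the resolvent at complex energies and then to transfer matrix norms. First I would apply Plancherel in the time variable to the definition of $a(n,T)$: since $e^{-2t/T}\mathbf{1}_{t>0}$ has a simple Fourier transform and $\langle e^{itH}\delta_j,\delta_n\rangle$ is the (inverse) Fourier transform of the spectral measure, one obtains
$$a(n,T) \;\lesssim\; \frac{1}{T}\int_{\R} \bigl(|G(n,0;E+i/T)|^2 + |G(n,1;E+i/T)|^2\bigr)\,dE,$$
where $G(n,m;z) = \langle (H-z)^{-1}\delta_m,\delta_n\rangle$. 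This reduces $P_r(N,T)$ to an integral over $E\in\R$ of $\sum_{n>N}|G(n,j;E+i/T)|^2$.

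Next I would truncate the $E$-integral to $[-K,K]$. Because $\sigma(H)\subset[-K+1,K-1]$, for $E\notin[-K,K]$ the point $E+i/T$ lies at distance at least $1$ from the spectrum, so a Combes--Thomas estimate gives $|G(n,j;E+i/T)|\lesssim e^{-c|n|}$, which produces the $e^{-cN}$ term after summing on $n>N$ and integrating.

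The heart of the proof is the pointwise bound, for $z=E+i/T$ and any $1\le n\le N$,
$$\sum_{k>N}\bigl(|G(k,0;z)|^2+|G(k,1;z)|^2\bigr) \;\lesssim\; \frac{T^2}{\|A_n^{f,z}(x)\|^2}.$$
To establish it I would use the Weyl factorization: let $u_\pm$ be the (up to scalar) unique solutions of $(H-z)u=0$ which are $\ell^2$ at $\pm\infty$, normalized so that the Wronskian equals $1$; then $G(k,m;z) = u_-(\min)\,u_+(\max)$. For $k>N\ge 1$ and $m\in\{0,1\}$ this factors as $u_-(m)\,u_+(k)$, so $\sum_{k>N}|G(k,m;z)|^2 \le |u_-(m)|^2\,\|u_+\|_{\ell^2(\Z_{>0})}^2$. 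The standard Herglotz identity $\mathrm{Im}\,z\cdot\|u_+\|_{\ell^2(\Z_{\ge 0})}^2 = \mathrm{Im}\,m_+(z)\cdot|u_+(0)|^2$ bounds $\|u_+\|^2$ by $T\cdot|u_+(0)|^2$ (using $|\mathrm{Im}\,m_+|\lesssim 1$). Finally, since $A_n^{f,z}$ propagates $(u_+(1),u_+(0))^T$ to $(u_+(n+1),u_+(n))^T$, one gets the deterministic inequality $|u_+(0)|^2+|u_+(1)|^2 \le \|A_n^{f,z}\|^2\bigl(|u_+(n)|^2+|u_+(n+1)|^2\bigr) \le \|A_n^{f,z}\|^2\|u_+\|_{\ell^2(\Z_{>0})}^2$. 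Combining these yields $\sum_{k>N}|G|^2 \lesssim \|A_n^{f,z}\|^{-2}\cdot T^2$ after careful bookkeeping; taking the max over $n\in[1,N]$ gives the sharpest bound. Together with the $1/T$ from Plancherel and the $dE$-integral over an interval of length $2K$, this yields the advertised $T^3$ prefactor.

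The main obstacle is bookkeeping the powers of $T=1/\mathrm{Im}\,z$: each invocation of the Herglotz/Weyl identity costs one factor of $T$, Plancherel returns a compensating factor, and the square on $\|A_n\|$ must appear inverted so that taking the max (rather than summing) over $n$ is legitimate. Once the estimate for $P_r$ is in place, the identical argument applied to the left half-line (solutions $\ell^2$ at $-\infty$, transfer matrices $A_{-n}$) gives the estimate for $P_l$.
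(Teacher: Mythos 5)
Your outline is in the right spirit—the cited result of Damanik and Tcheremchantsev is indeed proved via Plancherel on the time integral, a Combes--Thomas truncation of the energy integral to $[-K,K]$, and then a pointwise bound on the tail of the Green's function in terms of transfer matrix norms. (The paper does not reprove this lemma; it is quoted directly from \cite{DamanikTcherem}.) However, your ``heart of the proof'' does not actually close, and the target power of $T$ is also inconsistent with the statement.

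On the power of $T$: since Plancherel gives $a(n,T)\sim T^{-1}\int |G|^2\,dE$, matching the stated $T^3$ prefactor requires the pointwise bound $\sum_{k>N}|G(k,m;E+i/T)|^2\lesssim T^4\,\big(\max_{1\le n\le N}\|A_n\|^2\big)^{-1}$, not $T^2\,\|A_n\|^{-2}$ as you claim. So even if your chain of inequalities worked, the bookkeeping is off by $T^2$.

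More seriously, your chain of inequalities is circular. You use three facts: (i) $\sum_{k>N}|G(k,m)|^2\le |u_-(m)|^2\|u_+\|^2_{\ell^2(\Z_{>0})}$; (ii) the Herglotz identity $\|u_+\|^2_{\ell^2(\Z_{\ge 1})}=\tfrac{\mathrm{Im}\,m_+}{\mathrm{Im}\,z}|u_+(0)|^2\lesssim T\,|u_+(0)|^2$; and (iii) the transfer-matrix inequality $|u_+(0)|^2+|u_+(1)|^2\le\|A_n\|^2\,\|u_+\|^2$. But (iii) is a \emph{lower} bound on $\|u_+\|^2$: it says $\|u_+\|^2\ge (|u_+(0)|^2+|u_+(1)|^2)/\|A_n\|^2$. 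Feeding this into (ii) only produces the tautology $1\lesssim T\,\mathrm{Im}\,m_+\,\|A_n\|^2$, a constraint on $\|A_n\|$, not an upper bound on the product $|u_-(m)|^2\|u_+\|^2$ appearing in (i). Nothing in your argument makes $\|A_n\|^{-2}$ appear as a \emph{factor} multiplying the Green's function tail. Incidentally, the parenthetical ``using $|\mathrm{Im}\,m_+|\lesssim 1$'' is false: $\mathrm{Im}\,m_+$ can be as large as $1/\mathrm{Im}\,z=T$; the correct version of (ii) is $\|u_+\|^2_{\ell^2(\Z_{\ge 1})}\le\tfrac{1}{2\mathrm{Im}\,z}(|u_+(0)|^2+|u_+(1)|^2)$, which follows from $\mathrm{Im}\,m_+\,|u_+(0)|^2\le|u_+(0)u_+(1)|$ rather than a bound on $\mathrm{Im}\,m_+$ itself. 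The actual proof in \cite{DamanikTcherem} requires additional structure: one restarts the Wronskian/Herglotz tail identity from an interior site chosen by a pigeonhole argument on $\sum_{n\le N}\|\Phi^+_n\|^2\lesssim T\|\Phi^+_0\|^2$, and controls the prefactor $|\phi^-(m)/W|$ by identifying $\phi^-(m)\phi^+(k)/W$ with diagonal and near-diagonal Green's function entries that are $\lesssim T$. Without these steps there is no way to extract the inverse transfer-matrix factor, so the lemma is not established by the argument as written.
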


With this lemma, and the preceding theorems, we will prove Theorem \ref{THM:DynamicsAverages}.

\begin{proof}[{\bf{Proof of Theorem \ref{THM:DynamicsAverages} (a)}}]
In light of Theorem \ref{THM:BetaAlphaBound}, it suffices to show that $\alpha^{\pm}_{\ln} \leq \gamma.$ We will do this for $\alpha^+_{\ln}$ and observe that the proof for $\alpha^-_{\ln}$ is the same.

Using \eqref{DynamicsIntegralBound} and Lemma \ref{DamanikTcherProbIdentity}, since $\gamma > 1,$ we have
\begin{equation}
P(\ln(T)^\gamma, T) = O(T^{-\delta})
\end{equation}
for every $\delta < \infty.$ Thus
\begin{equation}
\frac{\ln\left(P(\ln(T)^\gamma, T)\right)}{\ln\ln(T)} \leq \frac{-\delta\ln(T)}{\ln\ln(T)}.
\end{equation}
We are left with
\begin{equation}
S^+_{\ln}(\gamma) = +\infty,
\end{equation}
so $\alpha^+_{\ln} \leq \gamma.$
\end{proof}

We will now prove the second part.

\begin{proof}[{\bf Proof of Theorem \ref{THM:DynamicsAverages}(b)}]
Fix $H_1$ and $H_2$ of the form \eqref{SOpsDef} with bounded potentials, $v_1$ and $v_2,$ and let $K \geq 4$ be such that $\sigma(H_{i}) \subset [-K + 1, K - 1]$ for $i = 1,2.$ Denote the corresponding transfer matrices by $A^{v_1}$ and $A^{v_2}$ and the corresponding transport exponents by $\beta^\pm_{\ln,1}(p), \beta^\pm_{\ln,2}(p).$ Suppose that there is $\gamma < \infty$ such that, for every $M > 0$ and $T > T_0(M),$ $$\int_{-K}^K\left(\max_{0 \leq |n| \leq \ln(T)^{\gamma}} \norm{A_n^{v_1}(x,E + i/T)}^2\right)^{-1} dE \leq CT^{-M}.$$  Moreover, suppose that there exists $A > 0$ such that for all $E\in [-K +1, K - 1], 0 < \epsilon \leq 1,$ and $|n| \leq \ln(\epsilon^{-1}),$
\begin{equation}
\epsilon^A \norm {A_{n}^{v_1, E + i \epsilon}} \lesssim \norm{A_{n}^{v_2, E + i \epsilon}} \lesssim \epsilon^{-A}\norm{A_{n}^{v_1, E + i\epsilon}}.
\end{equation}
Let $P_1(N,T)$ and $P_2(N,T)$ be the corresponding outside probabilities.

Observe, by Lemma \ref{THM:BetaAlphaBound} and our assumptions above, that for any $M > 0,$ and $T > T_0(M),$
\begin{align}
P_2(\ln(T)^\gamma,T) &\leq e^{-C\ln(T)^\gamma} + T^3 \int\int_{-K}^K\left(\max_{0 \leq |n| \leq \ln(T)^{\gamma}} \norm{A_n^{v_2}(x,E + i/T)}^2\right)^{-1} dE \\
&\leq e^{-C\ln(T)^\gamma} + T^{3+ A} \int\int_{-K}^K\left(\max_{0 \leq |n| \leq \ln(T)^{\gamma}} \norm{A_n^{v_1}(x,E + i/T)}^2\right)^{-1} dE\\
&\leq CT^{-M},
\end{align}
and thus 
\begin{equation}
\frac{\ln(P_2(\ln(T)^\gamma,T))}{\ln\ln(T)} \leq \frac{-M \ln(T) + \ln(C)}{\ln\ln(T)}.
\end{equation}
We conclude as before.

\end{proof}



\section{Semialgebraic sets}\label{section:SA}
Here we obtain an explicit estimate on the $\delta$ from Theorem \ref{DiscrepancyBound}.
\begin{mythm}\label{ExplicitDiscrepancyBound} When $T_\omega$ is the shift on $\T^n,$ and $\omega \in DC(A,c),$ we can take $\delta \leq \frac{1}{A + n}$ in Theorem \ref{DiscrepancyBound}. When $T_\omega$ is the skew-shift on $\T^n,$ and $\omega \in SDC(A,c),$ we can take $\delta < \frac{1}{n 2^{n - 1}(1 + \epsilon)}$ for any $\epsilon > 0.$
\end{mythm}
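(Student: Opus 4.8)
The plan is to extract the explicit value of $\delta$ directly from the proof mechanism of Theorem \ref{DiscrepancyBound}, tracking constants through Bourgain's semialgebraic covering argument (Theorem \ref{SACovering}) in the shift case, and through the reduction to the shift case (following Lemma 8.4 of \cite{WencaiDisc}) in the skew-shift case. For the shift: given a semialgebraic set $\mathcal{S}\subset[0,1]^n$ of degree $B$ with $|\mathcal{S}|<\eta$, one covers $\mathcal{S}$ by roughly $B^C\epsilon^{1-n}$ balls of radius $\epsilon$, where $\epsilon$ is chosen as a small power of $\eta$ (say $\epsilon = \eta^{1/n}$ so that the measure hypothesis $|\mathcal{S}|<\epsilon^n$ of Theorem \ref{SACovering} is met). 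The Diophantine condition $\omega\in DC(A,c)$ then controls how often the orbit $T_\omega^k(x_0) = x_0 + k\omega$ can return to a single $\epsilon$-ball: consecutive returns to a ball of radius $\epsilon$ are separated by a gap of order at least $\epsilon^{-1/A}$ (this is exactly the quantitative statement that $\|k\cdot\omega\|>c|k|^{-A}$ prevents near-periodicity on scale $\epsilon$), so each ball is hit at most $\sim N\epsilon^{1/A}$ times out of the first $N$ steps. Multiplying the number of balls by the per-ball hit count and optimizing the choice of $\epsilon$ as a power $N^{-\theta}$ against the constraint $\ln B\le \ln N<\ln\frac1\eta$ yields the bound $N^{1-\delta}B^C$ with $\delta$ determined by balancing the exponent $1$ coming from the ball count (through $\epsilon^{1-n}$, i.e. a factor $N^{\theta(n-1)}$) against the gain $N^{-\theta/A}$ from the Diophantine gaps; the balance point gives $\delta = \frac{1}{A+n}$.

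For the skew-shift, the plan is to run the standard induction on the number of torus coordinates $\nu$: the skew-shift on $\T^n$ projects to the skew-shift on $\T^{n-1}$, and the new coordinate evolves by a shift whose "frequency" at step $k$ is itself moving, which at scale $N$ behaves like an affine (indeed degree-$2$ polynomial) progression. Applying the one-dimensional shift-type estimate in each coordinate and composing the resulting losses over the $n$ coordinates produces a geometric factor $2^{n-1}$ in the denominator: each layer of the induction roughly squares the relevant scale because the $SDC$ condition $\|k\cdot\omega\|>c/(|k|(\ln|k|)^A)$ is only logarithmically better than $|k|^{-1}$, so the effective Diophantine exponent per coordinate is essentially $1$ up to the $(\ln)^A$ and $(1+\epsilon)$ slack. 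Carrying the $(1+\epsilon)$ through the $\ln$ factors (to absorb the $(\ln|k|)^A$ into $|k|^\epsilon$) gives $\delta < \frac{1}{n2^{n-1}(1+\epsilon)}$ for every $\epsilon>0$.

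The main obstacle I expect is the bookkeeping in the skew-shift induction: keeping the degree bound $B$, the measure bound $\eta$, and the scale constraint $\ln B\le \ln N<\ln\frac1\eta$ mutually consistent as one passes from $\T^n$ to $\T^{n-1}$, since the slice sets one obtains after fixing a coordinate are again semialgebraic but of controlled (polynomially larger) degree, and one must verify the induction hypothesis applies with the new parameters without the admissible window for $N$ collapsing. A secondary subtlety is making the Diophantine-gap count for a single $\epsilon$-ball genuinely uniform in $x_0$ — but this is immediate from translation invariance of the shift and is standard for the skew-shift via the same slicing. The analytic/Gevrey and large-deviation machinery plays no role here; this theorem is purely a combinatorial-geometric extraction of constants, so once the window constraints are checked the optimization is routine.
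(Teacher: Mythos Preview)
Your shift-case argument has a real gap. The per-ball count you propose---via the minimal-gap observation that consecutive returns to an $\epsilon$-ball are separated by at least $\sim \epsilon^{-1/A}$, hence at most $\sim N\epsilon^{1/A}$ hits---is correct as stated, but it is too crude to yield $\delta = \tfrac{1}{A+n}$ after multiplying by the ball count $B^C\epsilon^{1-n}$. With $\epsilon = N^{-\theta}$ your product is
\[
B^C\,\epsilon^{1-n}\cdot N\epsilon^{1/A} \;=\; B^C\,N^{1 - \theta(1/A - (n-1))},
\]
and for $n\ge 2$ with $A\ge 1/(n-1)$ (in particular, for the generic Diophantine range $A>n$) the exponent $1/A-(n-1)$ is nonpositive, so the bound gets \emph{worse} as $\epsilon$ shrinks and no positive $\delta$ emerges. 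There is no ``balance point'' here: the expression is monotone in $\theta$, and your claimed output $\delta=1/(A+n)$ simply does not follow from this count.

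The paper's argument differs in exactly the place where yours fails. Instead of the minimal-gap bound, it majorizes $\chi_{B(0,\epsilon)}$ by a product of Fej\'er kernels and sums the resulting exponential sum, obtaining for a single $\epsilon$-ball
\[
\#\{1\le j\le N:\ x_0+j\omega\in B(0,\epsilon)\}\ \lesssim\ N\epsilon^n + \epsilon^{-A},
\]
where the first term is the equidistribution (volume) contribution and the second is the Diophantine error coming from $\max_{0<|m|<R}\|m\cdot\omega\|^{-1}\lesssim R^A$ with $R\sim 1/\epsilon$. It is the balance between \emph{these two terms of the single-ball count} that forces $\epsilon=N^{-1/(A+n)}$, giving per-ball count $\lesssim N\epsilon^n$; only then does multiplication by $B^C\epsilon^{1-n}$ collapse to $B^C N\epsilon = B^C N^{1-1/(A+n)}$. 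Your gap argument detects only the minimal return time and never sees the volume term $N\epsilon^n$, which is precisely what makes the optimization work in dimension $n\ge 2$. For the skew-shift part, both you and the paper defer to \cite{WencaiDisc}, so there is nothing to compare beyond noting that your inductive sketch is too schematic to verify the constants.
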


\begin{Remark}
The general idea of the proof is the same in both cases. We first prove a bound of the form $\#\set{k = 1,...,N: T_\omega(x_0) \in B_\epsilon} \leq N^{-\zeta},$ where $B_\epsilon$ is a ball of radius $\epsilon.$ Then we use the covering lemma for semialgebraic sets (Theorem \ref{SACovering}) to cover the desired semialgebraic set by by $\epsilon$-balls. Because of this similarity, we will only give a proof for the shift. The details for the skew-shift can be found in \cite{WencaiDisc} (Lemma 8.4 and Theorem 8.7).
\end{Remark}

\begin{proof}
Fix $\epsilon = N^{-\delta}$ and let $\chi(x) = \chi_{B(0,\epsilon)}(x)$ be the characteristic function of the ball of radius $\epsilon$ centered at 0. Let $R = \frac 1 {10\epsilon}$ and let
$$F_R(x_j) = \frac 1 R \left(\frac{\sin(Rx/2)}{\sin(x/2)}\right)^2 = \sum_{|m| < R} \left(1 - \frac{|m|}{R}\right) e^{imx_j} = \sum_{|m| < R} \widehat{F_R}(m) e^{imx_j}$$
be the usual Fejer kenel on $\R.$ 

If $\chi(x) = 0,$ then $\chi(x) \leq CR^{-n}\prod_{j = 1}^n F_R(x_j)$ holds trivially. On the other hand, by our choice of $\epsilon$ and $R,$ if $\chi(x) = 1,$ then $F_R(x_j) \sim R,$ since, for small $x_j,$
$$F_R(x_j) = \frac 1 R \left(\frac{\sin(Rx_j/2)}{\sin(x_j/2)}\right)^2 \sim \frac 1 R R^2 = R,$$
and we also have $\chi(x) \leq CR^{-n}\prod_{j = 1}^n F_R(x_j).$ Thus we have
\begin{align}
\begin{split}
\prod_{j = 1}^n F_R(x_j) &= \prod_{j = 1}^n \sum_{|m|<R} \widehat{F_R}(m)e^{imx_j} \\
&= \sum_{|m| < R} \widehat{F_R}(m_1)\cdots\widehat{F_R}(m_n) e^{im\cdot x}.
\end{split}
\end{align}
Hence, if we set $m = (m_1,...,m_n),$ we have
\begin{align}
\sum_{j = 1}^N \chi(x_0 + j\omega) &\leq &&CR^{-n}\sum_{j = 1}^N \sum_{|m_k| < R; 1 \leq k \leq n} \widehat{F_R}(m_1)\cdots\widehat{F_R}(m_n) e^{im\cdot (x_0 + j\omega)}\\
&\leq &&CR^{-n}\sum_{|m_k|<R; 1 \leq k \leq n}\left( \widehat{F_R}(m_1)\cdots\widehat{F_R}(m_n) e^{im\cdot x}\left(\sum_{j = 1}^N e^{i j m\cdot \omega}\right)\right)\\
&\leq &&CR^{-n}\sum_{|m_k|<R; 1 \leq k \leq n}\left( \widehat{F_R}(m_1)\cdots\widehat{F_R}(m_n) \left|\sum_{j = 1}^N e^{i j m\cdot \omega}\right|\right).\label{WHOLEPIECE}
\end{align}
At this point, we can split the sum into two parts: either $m_k = 0$ for all $1 \leq k \leq n,$ or at least one $m_k \ne 0.$ Thus we can write \eqref{WHOLEPIECE} = \eqref{FIRSTPIECE} + \eqref{SECONDPIECE}, where \eqref{FIRSTPIECE} and \eqref{SECONDPIECE} are given by
\begin{equation}\label{FIRSTPIECE}
CR^{-n}\widehat{F_R}(0)^n \left|\sum_{j = 1}^N e^{i j 0\cdot \omega}\right|
\end{equation}
and
\begin{equation}\label{SECONDPIECE}
CR^{-n} \sum_{0 \leq |m_k|<R; 1 \leq k \leq n; \text{ some } m_k \ne 0}\left( \widehat{F_R}(m_1)\cdots\widehat{F_R}(m_n) \left|\sum_{j = 1}^N e^{i j m\cdot \omega}\right|\right).
\end{equation}

Since $0 < \widehat{F_R}(m) \leq 1$ and $\left|\sum_{j = 1}^N e^{i j m\cdot \omega}\right| \leq N,$ we have for any $x_0$
\begin{align*}
\sum_{j = 1}^N \chi(x_0 + j\omega) & \leq CR^{-n} N + CR^{-n}\sum_{0 < |m|<R} \left|\sum_{j = 1}^N e^{i j m\cdot \omega}\right|\\
&=CR^{-n}N + CR^{-n}\sum_{0 < |m|<R} \left|\frac{1 - e^{iNm \cdot \omega}}{1 - e^{im\cdot \omega}}\right|\\
&\leq CR^{-n}N + CR^{-n}\sum_{0 < |m|<R} 2|1 - e^{im\cdot\omega}|^{-1}\\
&\leq CR^{-n}N + C\max_{0 < |m|<R} 2 |1 - e^{im\cdot \omega}|^{-1}.
\end{align*}

Since $\omega \in DC(c,A),$ we know $\norm{m\cdot \omega} > c|m|^{-A},$ for every $m \ne 0,$ so $|1 - e^{im\cdot \omega}|^{-1} \lesssim R^A,$ and we conclude
\begin{align*}
\sum_{j = 1}^N \chi(x_0 + j\omega) &\leq CR^{-n}N + CR^A \\
&\leq CN(R^{-n} + R^AN^{-1})\\
&\leq CN(\epsilon^n + \epsilon^{-A}N^{-1}).
\end{align*}

Now, if we take $\delta = \frac 1 {n + A},$ then by our choice of $\epsilon$ we have
\begin{align*}
\epsilon^{-A}N^{-1} &= \epsilon^{-A}\epsilon^{A + n}\\
&= \epsilon^n,
\end{align*}
so
\begin{align*}
\sum_{j = 1}^N \chi(x_0 + j\omega) & \leq CN\epsilon^n.
\end{align*}

We conclude the proof by observing that, by Theorem \ref{SACovering}, it is possible to cover $\mathcal{S}$ using no more than $B^C\epsilon^{1 - n}$ $\epsilon$-balls, where $C = C(n).$ Thus the above computation shows that 
\begin{align*}
\# \set{k = 1,...,N: x_0 + k\omega \in \mathcal{S}} &\leq C N \epsilon^n B^C \epsilon^{1 - n}\\
&= CN B^C \epsilon\\
&\leq N^{1-\delta} B^C.
\end{align*}

For the skew-shift, we have, by Lemma 8.3 and Theorem 8.7 from \cite{WencaiDisc}, that for any $\epsilon' > 0,$ 
$$\#\set{k = 1,...,N: T_\omega^k(x_0) \in B_\epsilon} \leq CN^{-\frac{1}{2^{n - 1}(1 + \epsilon)} + \epsilon'}.$$
Applying Theorem \ref{SACovering}, we have
\begin{align*}
\# \set{k = 1,...,N: T_\omega^k(x_0) \in \mathcal{S}} &\leq CB^C \epsilon^{1 - n} N^{-\frac{1}{2^{n - 1}(1 + \epsilon)} + \epsilon'}
\end{align*}

\end{proof}


\section{Technical lemmas}\label{section:TechLemmas}
We will prove our results for right cocycles and observe that the exact same arguments establish the same results for left cocycles.


Let us define 
$$V_k^f(E, a) := \set{x \in \T^\nu: \frac 1 k \ln\norm{A_k^{f,E}(x)} \geq a}.$$

We will begin with the following lemma, which reduces everything to the study of semialgebraic sets. Fix $\tau < 1$ and $1 - \tau/16 > a > c > d > 1 - \tau/8 > 1 - \tau.$
\begin{mylemma}\label{GoodSetInclusion}
Let $f\in G^\sigma(\T^\nu).$ There is some $k_\tau(E) < \infty$ so that for $k > k_\tau(E)$ and $|E - z| < e^{-\frac{k\tau L(E)}{\norm{f}_\infty}},$ we can find $N_1 < \infty$ so that we have the following sequence of inclusions:
\begin{equation}
V_k^f(E, a L(E)) \subset V_k^{\tilde{f}_{N_1}}(E, c L(E)) \subset V_k^f(z, d L(E))
\end{equation}
where 
$\tilde{f}_{N_1}(x)$ is a certain polynomial of degree $N_1,$ so $V_k^{\tilde{f}_{N_1}}(E, c L(E))$ is semialgebraic of degree at most $k N_1.$
\end{mylemma}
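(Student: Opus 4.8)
The plan is to build the chain of inclusions by approximating $f$ with a truncation of its Fourier series and using the Gevrey decay to control the error, then verify that the resulting set is semialgebraic of the stated degree. First I would fix the truncation: for a parameter $N_1$ to be chosen, set $\tilde f_{N_1}(x) = \sum_{|n| \le N_1} \hat f(n) e^{2\pi i n\cdot x}$, which by the Gevrey characterization $|\hat f(n)| \le e^{-|n|^{1/\sigma}}$ satisfies $\norm{f - \tilde f_{N_1}}_\infty \le \sum_{|n| > N_1} e^{-|n|^{1/\sigma}} \le e^{-\frac12 N_1^{1/\sigma}}$ (up to polynomial-in-$N_1$ factors that get absorbed). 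So choosing $N_1$ comparable to a suitable power of $k$ — roughly $N_1 \sim (\log(1/\text{gap}))^\sigma \sim k^\sigma$ up to constants depending on $L(E), \tau, \norm{f}_\infty$ — makes $\norm{f - \tilde f_{N_1}}_\infty$ smaller than any prescribed $e^{-\theta k}$.

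Second I would propagate this sup-norm bound through the transfer matrices. The map $g \mapsto A_k^{g,E}(x)$ is a product of $k$ matrices each depending linearly on $g(T_\omega^j x)$, so a standard telescoping estimate gives $\norm{A_k^{f,E}(x) - A_k^{\tilde f_{N_1},E}(x)} \le k \,(2K)^{k}\, \norm{f - \tilde f_{N_1}}_\infty$ where $2K$ bounds the norm of a single transfer matrix (using $\norm{f}_\infty$ and $|E|$ bounded); similarly the dependence on $E$ (or $z$) is Lipschitz with constant of the same exponential type on the relevant energy window. Since $\norm{A_k^{f,E}(x)} \ge e^{k a L(E)}$ on $V_k^f(E, aL(E))$ and we have chosen $N_1$ and the radius $e^{-k\tau L(E)/\norm{f}_\infty}$ so the perturbation is $e^{-\theta k}$ with $\theta$ beating the loss from $a L(E)$ down to $cL(E)$ (this is exactly why the exponents satisfy $a > c > d$, leaving room $\tau/16$ and $\tau/8$ of slack), one concludes $\norm{A_k^{\tilde f_{N_1},E}(x)} \ge e^{k c L(E)}$, i.e. $x \in V_k^{\tilde f_{N_1}}(E, cL(E))$; running the same estimate in reverse with the $z$-perturbation gives $V_k^{\tilde f_{N_1}}(E, cL(E)) \subset V_k^f(z, dL(E))$. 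The bookkeeping of which slack ($\tau/16$ vs $\tau/8$) absorbs which error (Fourier truncation vs. $E \to z$ vs. the matrix-norm loss) is where one has to be careful, but it is the constraint $1 - \tau/16 > a > c > d > 1 - \tau/8$ together with the radius $e^{-k\tau L(E)/\norm{f}_\infty}$ that makes it go through for $k$ large.

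Third, for the semialgebraic claim: the entries of $A_k^{\tilde f_{N_1},E}(x)$ are polynomials in the coordinates of $x$ once we pass to the variables $u_j = \cos(2\pi x_j), v_j = \sin(2\pi x_j)$ with the relations $u_j^2 + v_j^2 = 1$ — each factor $\tilde f_{N_1}(T_\omega^j x) - E$ is a trigonometric polynomial of degree $\le N_1$, hence a polynomial of degree $\le C N_1$ in $(u,v)$, and the $k$-fold product has degree $\le C k N_1$. The condition $\frac1k \ln\norm{A_k^{\tilde f_{N_1},E}(x)}^2 \ge 2cL(E)$, i.e. $\norm{A_k}^2 \ge e^{2kcL(E)}$, is then a single polynomial inequality (sum of squares of the four entries minus a constant $\ge 0$) together with the defining relations of the torus, so $V_k^{\tilde f_{N_1}}(E, cL(E))$ is semialgebraic of degree $\lesssim k N_1$; absorbing constants into $N_1$ (or into the meaning of "degree at most $kN_1$") gives the stated bound.

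\textbf{Main obstacle.} The genuinely delicate point is the second step — tracking the exponential loss. The transfer-matrix product amplifies the uniform error by $(2K)^k$, so a naive estimate would require $\norm{f - \tilde f_{N_1}}_\infty \ll (2K)^{-k} e^{-kaL(E)}$, which forces $N_1 \sim (k \log(2K))^\sigma$; one must check this is still polynomial in $k$ (it is, since $\sigma$ is fixed) so that the final degree $kN_1$ remains $k^{O(1)}$, and one must check that the radius $e^{-k\tau L(E)/\norm{f}_\infty}$ for $|E - z|$ is small enough — via the Lipschitz-in-energy bound, again with an $(2K)^k$-type factor — to bridge $cL(E)$ down to $dL(E)$. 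Getting all three exponential budgets (Fourier tail, energy perturbation, matrix amplification) to fit inside the fixed gaps $a - c$ and $c - d$, uniformly for $k > k_\tau(E)$, is the crux; everything else is routine.
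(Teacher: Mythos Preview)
Your perturbation and telescoping analysis in the second step is essentially what the paper does (it also invokes a telescoping bound, sharpened by upper semicontinuity so that the amplification factor is $e^{k(L(E)+\epsilon)}$ rather than $(2K)^k$, but this only affects constants). The inclusion chain and the bookkeeping of the gaps $a>c>d$ are handled the same way.

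The genuine difference is in how $\tilde f_{N_1}$ is built. You take $\tilde f_{N_1}$ to be the Fourier truncation itself---a trigonometric polynomial---and then obtain semialgebraicity by lifting to the variables $(u_j,v_j)=(\cos 2\pi x_j,\sin 2\pi x_j)$ with the constraints $u_j^2+v_j^2=1$. The paper instead performs a \emph{two-stage} approximation: first the Fourier truncation $f_{N_0}$ with $N_0\sim k^{\sigma+}$, and then a Taylor-polynomial approximation of each exponential $e^{in_jx_j}$ to produce a genuine polynomial $\tilde f_{N_1}(x)$ in the real coordinates $x\in[0,1]^\nu$, of degree $N_1\sim k^{\sigma\nu+}$. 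This is why the lemma is stated with ``$\tilde f_{N_1}$ is a certain polynomial of degree $N_1$'' and why Remark~\ref{N1Choice} gives $N_1\sim k^{\sigma\nu+}$ rather than the $k^\sigma$ your single truncation would suggest.

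What each route buys: the paper's choice makes $V_k^{\tilde f_{N_1}}$ a semialgebraic subset of $[0,1]^\nu$ in the \emph{same} coordinates where the shift $x\mapsto x+\omega$ acts, so Theorem~\ref{DiscrepancyBound} applies verbatim in Lemma~\ref{LEMMA:Discrepancy1}. Your $(\cos,\sin)$ lift yields a semialgebraic set in $\R^{2\nu}$, but there the shift becomes a coordinatewise rotation, not a translation, and the sublinear bound as stated in the paper is for the shift/skew-shift on $[0,1]^n$; one would need to either cite a variant of the discrepancy bound adapted to that lifted action or redo the covering argument there. This is standard and not hard, but it means your construction does not literally prove the lemma as phrased and would require small adjustments downstream.
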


\begin{Remark}\label{N1Choice}
We may take $N_1(k) \sim k^{\sigma\nu +}$ in the above lemma.
\end{Remark}

\begin{proof}
Let us fix $k \in \N$ large and $\epsilon > 0$ small. First, since $f \in G^\sigma(\T^\nu),$ we know that 
\begin{equation}\label{FourierDecay}
|\hat f (n)| \leq C_1e^{- |n|^{1/(\sigma +)}}.
\end{equation}
Let $f_{N_0}(x) = \sum_{|n| \leq N_0} \hat{f}(n) e^{i n\cdot x}.$ For $N_0 \geq k^{\sigma + \epsilon},$ we have 
$$|f(x) - f_{N_0}(x)| \leq e^{-k^{1 + \epsilon}} \leq e^{-k(1 - c)L(E)}.$$
Now for such $N_0,$ there exists a polynomial $\tilde{f}_{N_1}(x)$ of degree $N_1$ with $N_1 = k^{\sigma\nu + \epsilon}$ so that 
$$|f_{N_0}(x) - \tilde f_{N_1}(x)| \leq e^{-k(1 - d)L(E)}.$$ This can be seen by approximating $e^{in_jx_j}$ by a Taylor polynomial of degree $k^{\sigma +}$ and then bounding the error as usual.
Note that these two inequalities hold for $k$ sufficiently large (dependent only on the dimension $\nu$ and $\epsilon$).

By upper semicontinuity, compactness considerations, and a standard telescoping argument, we have
\begin{align}
\label{eq70}\norm{A_k^{f,E}(x) - A_k^{f_{N_0},E}(x)} &< e^{-k^{1 + \epsilon}} \\
\label{eq71}\norm{A_k^{f,E}(x) - A_k^{\tilde f_{N_1}(x),z}} &< e^{-k(1 - d + \tau)L(E)}e^{k(L(E) + \epsilon)} < e^{k(L(E)/2 + \epsilon)}
\end{align}
for $k$ sufficiently large and $|E - z| < e^{-\frac{k\tau(L(E) + \epsilon)}{\norm{f}_\infty}}.$ The first inclusion can now be established by observing that, for $x \in V_k^f(E, a L(E)),$ we have
\begin{align*}
\norm{A_k^{f_{N_0},E}(x)} &\geq \norm{A_k^{f,E}(x)} - \norm{A_k^{f,E}(x) - A_k^{f_{N_0},E}(x)}\\
\geq e^{ckL(E)}.
\end{align*}
The other inclusion is proved in the same way. 

The semialgebraic bound on $V_k^{\tilde{f}_{N_1}}(E, c L(E))$ follows from the fact that $V_k^{\tilde{f}_{N_1}}(E, c L(E))$ is given by a single inequality involving a polynomial of degree $kN_1.$
\end{proof}

Now we have
\begin{mylemma} \label{lemma:MeasureBound} Let $k, E, z, d,$ and $V_k^f(z, d L(E))$ be as in Lemma \ref{GoodSetInclusion}. Then $|V_k^f(z, d L(E))| > 1/2,$ where $|\cdot|$ represents Lebesgue measure.
\end{mylemma}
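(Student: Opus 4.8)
The plan is to deduce the measure bound $|V_k^f(z, dL(E))| > 1/2$ from the lower bound on the Lyapunov exponent together with the large deviation theorem, via Lemma \ref{GoodSetInclusion}. The key point is that, for $k$ large, $L_k(E)$ is close to $L(E)$ (by definition $L(E) = \inf L_k(E) = \lim L_k(E)$), and the large deviation estimate says that the set of phases $x$ where $\frac{1}{k}\ln\norm{A_k^{f,E}(x)}$ deviates below $L_k(E)$ by more than $k^{-\tau}$ (or $k^{-\alpha}$ in the analytic case) has exponentially small measure. Since $d < 1$, for $k$ large we have $dL(E) < L_k(E) - k^{-\tau}$, so on the complement of that small exceptional set — which has measure at least $1 - e^{-k^\alpha} > 1/2$ — we get $\frac 1 k \ln\norm{A_k^{f,E}(x)} \geq L_k(E) - k^{-\tau} > dL(E)$, i.e. $x \in V_k^f(E, dL(E))$.

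First I would note that the relevant large deviation theorem is Theorem \ref{MultiDGevreyLDT} (for Gevrey $f = \lambda f_0$ with the transversality condition and $\omega$ satisfying the appropriate Diophantine condition, after taking $\lambda > \lambda_0$), or Theorem \ref{THM:AnalyticLDT} in the analytic case; in either case, for $N = k > N_0$, the set where $|\frac 1 k \ln\norm{A_k^{f,E}(x)} - L_k(E)| < k^{-\tau}$ has measure $< e^{-k^\alpha}$. Then I would invoke Lemma \ref{GoodSetInclusion}: the chain of inclusions gives, in particular, $V_k^f(z, dL(E)) \supset V_k^{\tilde f_{N_1}}(E, cL(E)) \supset V_k^f(E, aL(E))$ — wait, the inclusion we want runs the other way, so I would instead use directly that $V_k^f(z, dL(E))$ contains the set of $x$ with good growth of $A_k^{f,z}$. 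Actually the cleanest route: apply the LDT directly at the energy $z$ (which also has $L(z) > 0$ and, for $k$ large and $|E-z|$ small, $L_k(z)$ close to $L_k(E)$ close to $L(E)$), so that $\{x : \frac 1 k \ln\norm{A_k^{f,z}(x)} \geq dL(E)\} = V_k^f(z, dL(E))$ has measure at least $1 - e^{-k^\alpha}$, hence $> 1/2$ for $k > k_\tau(E)$.

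The one subtlety — and the only real obstacle — is matching the constant $d$ against $L_k$ uniformly in the relevant range of energies: one needs $dL(E) < L_k(z) - k^{-\tau}$ for all $z$ with $|E - z| < e^{-k\tau L(E)/\norm{f}_\infty}$ and all $k > k_\tau(E)$. Since $d < 1 - \tau/8 < 1$, and $L_k(z) \to L(z)$ with $L(z)$ continuous (in fact $L$ is continuous where positive, or one can use a crude bound like $|L_k(z) - L_k(E)| \lesssim |z - E|$ which is tiny on this exponentially small neighborhood, plus $L_k(E) \to L(E)$), this holds for $k$ large; I would simply enlarge $k_\tau(E)$ if necessary to absorb this. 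I would also make sure the $N_0(\lambda, c, f_0, \sigma, \nu)$ threshold in Theorem \ref{MultiDGevreyLDT} is incorporated into $k_\tau(E)$. Writing $|V_k^f(z, dL(E))| \geq 1 - e^{-k^\alpha} > 1/2$ then completes the argument.

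\begin{proof}
By Theorem \ref{MultiDGevreyLDT} (in the analytic case, Theorem \ref{THM:AnalyticLDT}), there are constants $\tau, \alpha > 0$ and $N_0 < \infty$ so that for every energy $z'$ with $L(z') > 0$ and every $N > N_0,$
$$\left|\set{x \in \T^\nu: \left|\frac 1 N \ln\norm{A_N^{f,z'}(x)} - L_N(z')\right| < N^{-\tau}}\right| < e^{-N^\alpha}.$$
In particular, for such $z'$ and $N = k,$ the set of $x$ with $\frac 1 k \ln\norm{A_k^{f,z'}(x)} \geq L_k(z') - k^{-\tau}$ has measure at least $1 - e^{-k^\alpha}.$

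Now let $k, E, z, d$ be as in Lemma \ref{GoodSetInclusion}. Since $L(E) > 0$ and $L$ depends continuously on the energy where it is positive, by shrinking the neighborhood (which is already of exponentially small radius $e^{-k\tau L(E)/\norm{f}_\infty}$) and enlarging $k_\tau(E)$ if needed, we have $L(z) > 0,$ so the large deviation estimate applies at $z.$ Moreover, since $L_k(z) \to L(z)$ and $L(z)$ is within $o(1)$ of $L(E)$ (indeed $|L_k(z) - L_k(E)|$ is bounded by a multiple of $|z - E|,$ which is exponentially small), and since $d < 1 - \tau/8 < 1,$ for $k > k_\tau(E)$ we have
$$dL(E) < L_k(z) - k^{-\tau}.$$
Therefore
$$V_k^f(z, dL(E)) = \set{x \in \T^\nu : \tfrac 1 k \ln\norm{A_k^{f,z}(x)} \geq dL(E)} \supset \set{x : \tfrac 1 k \ln\norm{A_k^{f,z}(x)} \geq L_k(z) - k^{-\tau}},$$
and the latter set has measure at least $1 - e^{-k^\alpha}.$ Hence, enlarging $k_\tau(E)$ once more so that $e^{-k^\alpha} < 1/2,$ we conclude
$$|V_k^f(z, dL(E))| \geq 1 - e^{-k^\alpha} > 1/2.$$
\end{proof}
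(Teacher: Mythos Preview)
Your approach via the large deviation theorem is quite different from the paper's, which is completely elementary: the paper simply writes
\[
L(E) \leq L_k(E) = \frac{1}{k}\int \ln\norm{A_k^{f,E}(x)}\,dx \leq |V_k^f(E,aL(E))|\,(L(E)+\epsilon) + \bigl(1 - |V_k^f(E,aL(E))|\bigr)\,aL(E),
\]
using only that $\frac{1}{k}\ln\norm{A_k^{f,E}} \leq L(E)+\epsilon$ for large $k$ (upper semicontinuity) and the definition of $V_k^f(E,aL(E))$. Solving gives $|V_k^f(E,aL(E))| \geq 1/2$, and then the inclusion $V_k^f(E,aL(E)) \subset V_k^f(z,dL(E))$ from Lemma~\ref{GoodSetInclusion} finishes. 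No LDT, no hypotheses on $\lambda$ or on the arithmetic of $\omega$.

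Your route, by contrast, imports hypotheses that are not in the lemma and are not always available where the lemma is used. Theorem~\ref{MultiDGevreyLDT} needs $\lambda > \lambda_0(f_0,\omega)$, and both LDTs need a Diophantine $\omega$; but Lemma~\ref{lemma:MeasureBound} is invoked in Section~\ref{section:1DCase} for arbitrary irrational $\omega$ (Theorem~\ref{THM3}(1)) and without any largeness assumption on the coupling. So as written your argument does not establish the lemma in the generality stated. There is also a smaller technical gap: you apply the LDT at the complex parameter $z$, whereas Theorems~\ref{THM:AnalyticLDT} and~\ref{MultiDGevreyLDT} are stated for real energies. This is easily repaired (apply the LDT at the real energy $E$ to get $|V_k^f(E,aL(E))|$ large, then use the inclusion from Lemma~\ref{GoodSetInclusion} exactly as the paper does), but the hypothesis mismatch above is not. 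The elementary argument is both simpler and strictly more general here.
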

\begin{proof}
By definition of $L(E)$ we have 
\begin{align*}
L(E) &\leq \frac 1 k \int \ln \norm{A_k^{f,E}(x)} dx\\
&\leq  |V_k^f(E, aL(E))|(L(E) + \epsilon) +  (1 - |V_k^f(E, a L(E))|)(aL(E)) \\
&\leq |V_k^f(E, a L(E))|((1 - a)L(E) + \epsilon) + aL(E).
\end{align*}
Thus, by choosing $\epsilon$ appropriately (which can be done by upper semicontinuity and taking $k > k_0(\epsilon)$ sufficiently large), and the fact that $a < 1,$ we have
\begin{equation}
|V_k^f(E, a L(E))| \geq \frac 1 2 .
\end{equation}
The set inclusion proved above now yields the result.
\end{proof}

Our next goal is to show that for $T_\omega$ either the shift or skew-shift, there is some $N_k < \infty$ such that, for every $x \in \T^\nu, T_\omega(x) \in V_k^f(z, d L(E))$ for some $1 \leq j \leq N_k,$ and then obtain the required transfer matrix bounds. We will split the remaining argument up into three cases: the shift with $\nu = 1,$ the shift with $\nu > 1,$ and the skew shift with $\nu > 1.$


\section{The case $\nu = 1$} \label{section:1DCase}

Our goal is to first establish the following estimates. Let $d$ be as in Lemma \ref{GoodSetInclusion}.
\begin{mythm} \label{THM11D}
Let $f\in G^\sigma(\T), \omega \in \R\backslash \Q,$ and $E \in \C$ such that $L(E) > 0.$ For any $0 < \tau < 1,$ there exist $k_\tau = k_\tau(E) < \infty$ such that for any $\epsilon > 0, k > k_\tau,$ and $x\in \T,$ there is $1 \leq j \leq C k^{1 + \sigma + \epsilon}$ so that for any $z\in \C$ with $|z - E| < e^{-\frac{\tau k L(E)}{\norm{f}_\infty}}$ we have
\begin{equation}
\norm{A_k^{f,z}(x + j\omega)}^2 > e^{d kL(E)}.
\end{equation}
\end{mythm}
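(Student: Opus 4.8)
The plan is to combine the set inclusion of Lemma \ref{GoodSetInclusion}, the measure bound of Lemma \ref{lemma:MeasureBound}, and the almost-sure unique ergodicity of the irrational rotation in a quantitative form. First, fix $0 < \tau < 1$ and choose $a > c > d$ as in Lemma \ref{GoodSetInclusion}. By that lemma, for $k > k_\tau(E)$ and $|z - E| < e^{-\tau k L(E)/\norm{f}_\infty}$ we have the inclusion $V_k^f(E, aL(E)) \subset V_k^{\tilde f_{N_1}}(E, cL(E)) \subset V_k^f(z, dL(E))$, where the middle set is semialgebraic of degree at most $kN_1$ with $N_1 \sim k^{\sigma\nu +} = k^{\sigma +}$ (since $\nu = 1$); call it $\mathcal{S} = \mathcal{S}(k,E)$. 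By Lemma \ref{lemma:MeasureBound}, $|\mathcal{S}| \geq |V_k^f(E,aL(E))| \geq 1/2$, hence its complement $\mathcal{S}^c$ has measure at most $1/2$.

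The heart of the argument is then to produce, for every $x \in \T$, an index $1 \leq j \leq Ck^{1+\sigma+\epsilon}$ with $x + j\omega \in \mathcal{S}$; since $\mathcal{S} \subset V_k^f(z, dL(E))$ this immediately gives $\norm{A_k^{f,z}(x+j\omega)}^2 \geq e^{2dkL(E)} > e^{dkL(E)}$ after adjusting $d$ slightly. To find such a $j$, I would apply Theorem \ref{DiscrepancyBound} to the complement $\mathcal{S}^c$, which is also semialgebraic of comparable degree $B \leq (kN_1)^{C} \sim k^{C(\sigma+1)}$ and has measure at most $1/2 < 1 = \eta$, so the required condition $\ln B \leq \ln N < \ln(1/\eta)$ is an issue — the trivial $\eta = 1$ gives no room. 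The fix, exactly as in the one-frequency continued-fraction tradition of \cite{LanaLast2, DamanikTcherem, mavi2} that the introduction says is used here, is that for $\nu = 1$ one does not need the full sublinear discrepancy bound: the elementary three-distance/equidistribution fact for an irrational rotation suffices. Concretely, I would take $N = Ck^{1+\sigma+\epsilon}$: since $N_1 \sim k^{\sigma+}$, the semialgebraic set $\mathcal{S}$ (a finite union of intervals, being semialgebraic in one real variable of degree $\lesssim kN_1 \lesssim k^{1+\sigma+}$) is a union of at most $O(k^{1+\sigma+})$ intervals whose total length is $\geq 1/2$, hence contains an interval of length $\gtrsim k^{-(1+\sigma+)}$; and an orbit $\{x + j\omega : 1 \leq j \leq N\}$ of an irrational rotation with $N \gg k^{1+\sigma+\epsilon}$ points is $o(1/N)$-dense, in particular denser than any interval of length $k^{-(1+\sigma+)}$, so it must hit $\mathcal{S}$. (If one wants $\omega$ merely irrational with no Diophantine condition, one uses that the orbit of length $N$ hits every interval of length exceeding the largest gap, and the largest gap of an $N$-point rotation orbit is always $< 2/N$ — no arithmetic needed — which is why conclusion (1) of Theorem \ref{THM3} holds for all irrational $\omega$.)

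I would then assemble the pieces: choose $\epsilon' > 0$, set $N_1 = \lceil k^{\sigma + \epsilon'}\rceil$ so Lemma \ref{GoodSetInclusion} applies and $\deg \mathcal{S} \lesssim k^{1+\sigma+\epsilon'}$; deduce $\mathcal{S}$ is a union of $\lesssim k^{1+\sigma+\epsilon'}$ intervals of total length $\geq 1/2$, so it contains an interval $I$ with $|I| \gtrsim k^{-(1+\sigma+\epsilon')}$; pick $N = \lceil C k^{1+\sigma+\epsilon}\rceil$ with $\epsilon > \epsilon'$ and $C$ large enough that $2/N < |I|$; then for any $x$ the rotation orbit $\{x + j\omega\}_{j=1}^N$ meets $I \subset \mathcal{S} \subset V_k^f(z, dL(E))$, yielding the stated transfer matrix lower bound. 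The main obstacle, as flagged, is handling the degenerate parameter regime in the discrepancy/equidistribution input: the natural tool Theorem \ref{DiscrepancyBound} formally requires $\eta < 1$ with $\ln N < \ln(1/\eta)$, which fails here since we only know $|\mathcal{S}^c| \leq 1/2$; the resolution is to bypass it for $\nu = 1$ via the elementary structure of semialgebraic subsets of $\R$ (finitely many intervals) together with the gap bound for a single rotation orbit, which is precisely where the one-dimensional case is genuinely easier than the multi-frequency case and does not even need a Diophantine hypothesis for the $\liminf$ statement. A secondary bookkeeping point is tracking the slightly-larger-than-$\sigma$ exponents (the "$+$" in $k^{\sigma+}$, $k^{1+\sigma+}$) so that the final power of $k$ in $N$ is $1 + \sigma + \epsilon$ with $\epsilon$ absorbing all the accumulated losses, and noting that for analytic $f$ one has geometric Fourier decay so $N_0 \sim k^{1+}$, $N_1 \sim k^{1+}$ and the exponent improves to $1 + \epsilon$, matching $\sigma = 1$.
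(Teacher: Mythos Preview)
Your overall strategy coincides with the paper's: pass to the semialgebraic set $V_k^{\tilde f_{N_1}}(E,cL(E))\subset\T$ of degree $\lesssim k^{1+\sigma+}$ and measure $\geq 1/2$, extract from it a single interval $\Delta$ of length $\gtrsim k^{-(1+\sigma+)}$, and then argue that a rotation orbit of the right length must enter $\Delta$. The error is in this last step. Your assertion that ``the largest gap of an $N$-point rotation orbit is always $<2/N$ --- no arithmetic needed'' is false. Take any irrational $\omega$ with $0<\omega<10^{-100}$ and $N=10$: the points $0,\omega,\dots,9\omega$ all lie in $[0,10^{-99}]$, leaving a gap of length $>1-10^{-99}\gg 2/N$. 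In general the maximal gap of $\{j\omega\}_{j=0}^{N-1}$ is governed by the largest continued-fraction denominator $q_n\leq N$, not by $N$, and for Liouville $\omega$ the ratio $N/q_n$ is unbounded. The same objection defeats your earlier claim that the length-$N$ orbit is ``$o(1/N)$-dense''.

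The paper's remedy is exactly the continued-fraction input you allude to but never actually invoke. Lemma~\ref{Lemma9} (Lemma~9 of \cite{LanaLast2}) says that if $|\Delta|>1/q_n$ then \emph{every} orbit segment of length $q_n+q_{n-1}-1$ meets $\Delta$. One therefore takes $k\sim q_n^{1/(1+\sigma+\epsilon)}$, so that the interval produced above has length $>1/q_n$, and obtains $j\leq q_n+q_{n-1}-1\lesssim k^{1+\sigma+\epsilon}$; this is Lemma~\ref{lemma:1Ddiscrepancy}. For merely irrational $\omega$ this yields the conclusion only along the sequence $k\sim q_n^{1/(1+\sigma+\epsilon)}$, which is precisely why Theorem~\ref{THM3}(1) is a $\liminf$ statement --- so your parenthetical claiming the irrational case needs no arithmetic has it backwards. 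For $\omega\in DC(A,c)$ the ratios $q_{n+1}/q_n$ are polynomially bounded, so every large $k$ lies within a bounded factor of some $q_n^{1/(1+\sigma+\epsilon)}$, and one recovers the conclusion for all $k>k_\tau$.
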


\begin{mythm}\label{THM21D}
Fix $\epsilon > 0.$ Let $f \in G^\sigma(\T), \omega \in DC(A,c),$ and $L(E) > 0.$ Then for any $\xi, \zeta > 1,$ there is $C, c > 0$ and $T_E < \infty$ such that for $T > T_E,$
\begin{equation}
\inf\set{\min_{\iota = \pm 1} \max_{1 \leq \iota m \leq C(\ln T)^{\zeta (1 + \sigma + \epsilon)}} \norm{A_m^{f,z}(x)}^2 T^{-\xi}} > c
\end{equation}
where 
the infimum is over all $x \in \T$ and $z \in \C$ with $|z - E| < T^{-\zeta}.$ Moreover, $T_E$ is uniformly bounded below for $E$ in compact sets with positive $L(E).$

In particular, for $E\in [-K,K],$ we have $\max_{1 \leq n \leq C\ln(T)^{\zeta (1 + \sigma)}} \norm{A_n^{f, E + i/T}}^2 \geq c T^{\xi}$ for every $\xi > 1$ and large $T.$

If $\omega \in \R\backslash \Q,$ then the above holds for a sequence, $T_n$ for $n > n_E$ for all $E,$ and for $n > n_0$ for $E \in[-K,K].$
\end{mythm}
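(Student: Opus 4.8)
The plan is to combine three ingredients already in hand: the set-inclusion and measure estimates of Lemmas \ref{GoodSetInclusion} and \ref{lemma:MeasureBound}, the large deviation theorems of Section \ref{subsection:LDT} (specifically Theorem \ref{MultiDGevreyLDT} with $\nu = 1$, or Theorem \ref{THM:AnalyticLDT} in the analytic case), and the sublinear hitting bound Theorem \ref{DiscrepancyBound} specialized to the one-dimensional shift. The logic is as follows. Fix an energy $E$ with $L(E) > 0$ and a scale $k$. By Lemma \ref{lemma:MeasureBound}, the set $V_k^{\tilde f_{N_1}}(E, cL(E))$ has measure $> 1/2$; equivalently, its complement — the set of ``bad'' phases where the transfer matrix at scale $k$ fails to grow like $e^{ckL(E)}$ — has measure $< 1/2$. (In fact the large deviation estimate gives the much stronger bound $e^{-k^{\alpha}}$ on the bad set, but the weak bound $< 1/2$ is all that is logically needed, since Theorem \ref{DiscrepancyBound} only requires $|\mathcal S| < \eta$ with $\ln B \le \ln N < \ln(1/\eta)$.) By Lemma \ref{GoodSetInclusion} and Remark \ref{N1Choice}, this bad set is semialgebraic of degree at most $B \sim k N_1 \sim k^{\sigma\nu + 1 +} = k^{\sigma + 1 +}$.

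Next I would apply Theorem \ref{DiscrepancyBound} to the bad semialgebraic set $\mathcal S$ with the shift $T_\omega$, $\omega \in DC(A,c)$, and $N := \lfloor C k^{1+\sigma+\epsilon}\rfloor$ chosen so that $\ln B \le \ln N < \ln(1/\eta)$; here one needs only that the number of hits is $< N^{1-\delta}B^C < N$, i.e.\ there is at least one $j \in \{1,\dots,N\}$ with $T_\omega^j(x) \notin \mathcal S$, equivalently $x + j\omega \in V_k^{\tilde f_{N_1}}(E, cL(E))$. By the second inclusion in Lemma \ref{GoodSetInclusion}, this gives $\|A_k^{f,z}(x+j\omega)\| \ge e^{dkL(E)/2}$ — hence $\|A_k^{f,z}(x+j\omega)\|^2 \ge e^{dkL(E)}$ — for every $z$ with $|z-E| < e^{-\tau k L(E)/\|f\|_\infty}$. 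This is exactly the content of Theorem \ref{THM11D}, which I regard as the substantive step; Theorem \ref{THM21D} then follows from it by a change of variables from the ``scale'' parameter $k$ to the ``time'' parameter $T$. Specifically, given $T$ large, set $k = k(T)$ to be the largest integer with $e^{-\tau k L(E)/\|f\|_\infty} \ge T^{-\zeta}$, so that $k \sim \frac{\zeta \|f\|_\infty}{\tau L(E)} \ln T$ and thus $N \sim C k^{1+\sigma+\epsilon} \sim C'(\ln T)^{\zeta(1+\sigma+\epsilon)}$ after absorbing constants into $\zeta$ and $\epsilon$; then $\|A_k^{f,z}(x+j\omega)\|^2 \ge e^{dkL(E)} \ge T^{\xi}$ for any $\xi$ once $T$ is large (since $dkL(E) \sim \frac{d\zeta\|f\|_\infty}{\tau}\ln T$ and $d, \zeta$ can be taken to make $\frac{d\zeta\|f\|_\infty}{\tau} > \xi$). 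Reindexing $m = k$ and noting the transfer matrices at a point and its shift are related by a bounded number of single steps handles the ``$\max$ over $1 \le m \le N$'' and the $\iota = \pm 1$ (left/right cocycle) dichotomy, using the remark at the start of Section \ref{section:TechLemmas} that left cocycles are treated identically.

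The main obstacle is bookkeeping the chain of parameters $\tau < d < c < a < 1$ against the two independent ``budget'' inequalities: the analytic radius $e^{-\tau k L(E)/\|f\|_\infty}$ in the $z$-variable must be at least $T^{-\zeta}$ (forcing $k \lesssim \ln T$ from above), while the discrepancy hypothesis $\ln B \le \ln N < \ln(1/\eta)$ must hold simultaneously (forcing $N$ neither too small relative to $B \sim k^{\sigma+1+}$ nor too large relative to $1/\eta$); one must check these are compatible, which they are since $\eta$ can be taken as small as $e^{-k^\alpha}$ while $N$ is only polynomial in $k$. A secondary technical point is the uniformity in $E$: the threshold $k_\tau(E)$ from Lemma \ref{GoodSetInclusion} depends on $E$ only through $L(E)$ and the modulus of continuity of $L_k(E) \to L(E)$, both of which are controlled uniformly on compact sets where $L > 0$ by upper semicontinuity and compactness, giving the claimed uniform lower bound for $T_E$. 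Finally, for $\omega \in \R\setminus\Q$ (not necessarily Diophantine) one loses the quantitative $DC(A,c)$ input to Theorem \ref{DiscrepancyBound}, but a qualitative version — there exists at least one miss in sublinear time along a subsequence of scales $k_n$, hence along a subsequence $T_n$ — still holds by Theorem \ref{DiscrepancyBound} applied with the $\delta$ depending only on $\omega$ through a sequence of good denominators, which yields the final sentence of the theorem.
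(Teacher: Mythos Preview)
Your approach differs from the paper's in a way that creates a genuine gap. For $\nu=1$ the paper does \emph{not} use the semialgebraic sublinear bound (Theorem \ref{DiscrepancyBound}); it uses the continued-fraction Lemma \ref{Lemma9} instead. The point is that $V_k^{\tilde f_{N_1}}(E,cL(E))\subset\T$ is a union of at most $kN_1\sim k^{1+\sigma+}$ intervals of total measure $>1/2$, so it contains an interval $\Delta$ with $|\Delta|\gtrsim k^{-(1+\sigma+\epsilon)}$. Taking $k\sim q_n^{1/(1+\sigma+\epsilon)}$ and applying Lemma \ref{Lemma9} gives a hit within $q_n+q_{n-1}\lesssim k^{1+\sigma+\epsilon}$ steps. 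Your route via Theorem \ref{DiscrepancyBound} with $N=\lfloor Ck^{1+\sigma+\epsilon}\rfloor$ does not work: since $B\sim kN_1\sim k^{1+\sigma+}$ is of the same order as $N$, the needed inequality $N^{1-\delta}B^C<N$ reads $k^{C(1+\sigma)}<k^{\delta(1+\sigma+\epsilon)}$, which fails whenever $C(1)>\delta\le 1/(A+1)$. The semialgebraic bound only guarantees a miss after $N\sim B^{C/\delta}\sim k^{C(1+\sigma)(A+1)}$ steps, giving an exponent depending on $A$, not the sharp $1+\sigma$ of Theorem \ref{THM21D}. The continued-fraction argument also explains the last sentence of the theorem: it needs no Diophantine condition, so for arbitrary $\omega\in\R\setminus\Q$ one obtains the conclusion along the subsequence determined by $q_n$. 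Your appeal to Theorem \ref{DiscrepancyBound} in that case is circular, since that theorem assumes $\omega\in DC(A,c)$.

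There is a second gap in the passage from $\|A_k^{f,z}(x+j\omega)\|$ large to $\max_{1\le m\le N}\|A_m^{f,z}(x)\|$ large. This cannot be done by ``a bounded number of single steps'': $j$ ranges up to $k^{1+\sigma+\epsilon}$, not $O(1)$. The paper (proof of Theorem \ref{THM2}) instead uses the cocycle identity $A_{k+j}^{f,z}(x)=A_k^{f,z}(x+j\omega)A_j^{f,z}(x)$ together with the $SL_2$ relation $\|A\|=\|A^{-1}\|$ to get $\|A_k^{f,z}(x+j\omega)\|\le\|A_{k+j}^{f,z}(x)\|\,\|A_j^{f,z}(x)\|$, so at least one of $\|A_{k+j}^{f,z}(x)\|$, $\|A_j^{f,z}(x)\|$ exceeds $M_k^{\xi/2}$, and hence $\max_{1\le m\le (\ln M_k)^{(\gamma+\epsilon)\zeta}}\|A_m^{f,z}(x)\|^2\ge M_k^\xi$. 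This factorization, and the associated need for the range of $m$ to accommodate both $j$ and $k+j$ (whence the condition $(\ln M_k)^{(\gamma+\epsilon)\zeta}>k^{\gamma+}+k$), is the substance of the change of variables from $k$ to $T$, and is missing from your sketch.
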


When $\nu = 1,$ we can write $\omega$ as a continued fraction. Let $\frac{p_n}{q_n}$ be the denominators of the approximations. We then have the following lemma.
\begin{mylemma}[Lemma 9 from \cite{LanaLast2}] \label{Lemma9}Suppose $\Delta \subset \T$ is an interval with $|\Delta| > 1/q_n.$ Then for every $x \in \T,$ there exists $1 \leq j \leq q_n + q_{n - 1} - 1$ such that $x + j\omega \in \Delta.$ 
\end{mylemma}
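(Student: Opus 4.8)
The plan is to reduce the statement to the classical three-distance (three-gap) theorem for the orbit $\{j\omega\}_{j=0}^{q_n+q_{n-1}-1}$ on $\T$. First I would recall the standard fact from continued fraction theory: if $\frac{p_n}{q_n}$ are the convergents of $\omega$, then the $q_n+q_{n-1}$ points $0, \omega, 2\omega, \dots, (q_n+q_{n-1}-1)\omega$, reduced mod $1$, partition $\T$ into arcs whose lengths take at most two distinct values, the larger of which is bounded above by $\frac{1}{q_n}$ (in fact by $\|q_{n-1}\omega\|+\|q_n\omega\|$, and $\|q_{n-1}\omega\| < 1/q_n$). The key consequence is that every arc in this partition has length strictly less than $1/q_n$, hence strictly less than $|\Delta|$.

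Given this, I would argue as follows. Fix $x\in\T$ and consider the $q_n+q_{n-1}$ points $x, x+\omega,\dots, x+(q_n+q_{n-1}-1)\omega$; translating the three-distance partition by $x$ shows these points also cut $\T$ into arcs each of length $<1/q_n\le |\Delta|$. If none of the points $x+j\omega$ with $1\le j\le q_n+q_{n-1}-1$ lay in $\Delta$, then $\Delta$ would have to be contained in the single arc of the partition containing $x$ itself (the only point among $x+j\omega$, $0\le j\le q_n+q_{n-1}-1$, not excluded), but that arc has length $<|\Delta|$, a contradiction. Hence some $x+j\omega$ with $1\le j\le q_n+q_{n-1}-1$ lies in $\Delta$, which is the claim.

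The only mild subtlety — and the step I would be most careful about — is the off-by-one bookkeeping around the index $j=0$: the three-distance theorem is a statement about the full set of $q_n+q_{n-1}$ translates including $j=0$, and one must make sure that excluding $j=0$ still forces a hit in $\Delta$, which works precisely because $\Delta$ is longer than every gap, so it cannot be swallowed by the single gap adjacent to $x$. Everything else is a direct invocation of the three-distance theorem together with the continued-fraction bound $\|q_{n-1}\omega\| < 1/q_n$; since this is quoted as Lemma 9 of \cite{LanaLast2}, I would simply cite that source for the detailed verification rather than reproduce it.
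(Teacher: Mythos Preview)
The paper does not give its own proof of this lemma; it is simply quoted from \cite{LanaLast2}. Your approach via the three-distance theorem is the standard one, and the core of your argument --- that the $q_n+q_{n-1}$ points $x,x+\omega,\dots,x+(q_n+q_{n-1}-1)\omega$ partition $\T$ into arcs of the two lengths $\|q_{n-1}\omega\|$ and $\|q_n\omega\|$, both strictly less than $1/q_n$ --- is correct and already yields the statement for the range $0\le j\le q_n+q_{n-1}-1$.

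Your handling of the exclusion $j=0$, however, does not go through. If no $x+j\omega$ with $j\ge 1$ lies in $\Delta$, then (since every gap of the full partition is shorter than $|\Delta|$) one must have $x\in\Delta$, and $\Delta$ sits inside the arc of the reduced partition $\{x+j\omega:1\le j\le q_n+q_{n-1}-1\}$ that contains $x$. That arc is the union of the two gaps adjacent to $x$, and by best approximation those neighbors are $x+q_n\omega$ and $x+q_{n-1}\omega$, so the merged arc has length $\|q_{n-1}\omega\|+\|q_n\omega\|$. From $q_n\|q_{n-1}\omega\|+q_{n-1}\|q_n\omega\|=1$ one gets
\[
\|q_{n-1}\omega\|+\|q_n\omega\|=\frac{1+(q_n-q_{n-1})\|q_n\omega\|}{q_n}\ \ge\ \frac{1}{q_n},
\]
with strict inequality whenever $q_n>q_{n-1}$. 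So this arc can exceed $|\Delta|$, and your contradiction fails. In fact the lemma with lower index $1$ is a slight misquotation: for the golden mean and $n=2$ one has $q_2+q_1-1=2$, and an interval of length $0.51>1/q_2$ can be placed so as to miss both $x+\omega$ and $x+2\omega$. The correct range (as in \cite{LanaLast2}) is $0\le j\le q_n+q_{n-1}-1$, for which your main three-distance argument already suffices without any off-by-one step; the discrepancy is immaterial for the paper's applications.
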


Lemmas \ref{GoodSetInclusion} and \ref{lemma:MeasureBound}, along with Remark \ref{N1Choice}, imply $V_k^f(z, d L(E))$ contains an open set, $\Delta,$ of measure
$$\frac 1 {2k^{1 + \sigma + \epsilon}} \lesssim |\Delta|.$$

Now if we take $k > C q_n^{1/(1 + \sigma + \epsilon)},$ we have $|\Delta| > 1/q_n,$ and so, by Lemma \ref{Lemma9}, 
\begin{mylemma} \label{lemma:1Ddiscrepancy} Let $f, E, z,$ and $d$ be as in Lemma \ref{GoodSetInclusion}. For $k \sim q_n^{1/(1 + \sigma + \epsilon)},$ there exists $1 \leq j \lesssim k^{1 + \sigma + \epsilon}$ such that $x + j\omega \in V_k^f(z, d L(E)).$
\end{mylemma}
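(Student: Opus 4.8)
The statement to prove is Lemma \ref{lemma:1Ddiscrepancy}, which combines the set-inclusion machinery from Lemma \ref{GoodSetInclusion}, the measure bound from Lemma \ref{lemma:MeasureBound}, and the three-distance-type covering property in Lemma \ref{Lemma9} (from \cite{LanaLast2}).

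\textbf{Proof proposal.} The plan is to assemble three already-established facts and check that the size scales line up. First, fix $\tau \in (0,1)$ and the constants $a>c>d$ as in Lemma \ref{GoodSetInclusion}, and take $E$ with $L(E)>0$ and $z\in\C$ with $|E-z|<e^{-k\tau L(E)/\norm{f}_\infty}$. By Lemma \ref{lemma:MeasureBound} we have $|V_k^f(z,dL(E))| > 1/2$ for $k > k_\tau(E)$. But we need not just positive measure: we need an \emph{interval} (open arc) inside $V_k^f(z,dL(E))$ of controlled length, because Lemma \ref{Lemma9} is a statement about intervals $\Delta$. This is where Lemma \ref{GoodSetInclusion} re-enters: $V_k^f(z,dL(E))$ contains $V_k^{\tilde f_{N_1}}(E,cL(E))$, which is semialgebraic of degree at most $kN_1$ with $N_1 \sim k^{\sigma\nu+\epsilon} = k^{\sigma+\epsilon}$ (since $\nu=1$), and which still has measure $>1/2$ by the same inclusion chain. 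A semialgebraic subset of $\T$ of degree $D$ is a finite union of at most $O(D)$ arcs; since its total measure exceeds $1/2$, at least one of these arcs has length at least $\gtrsim 1/D \gtrsim 1/(kN_1) \gtrsim k^{-(1+\sigma+\epsilon)}$. Call this arc $\Delta$; it is contained in $V_k^f(z,dL(E))$ and has $|\Delta| \gtrsim 1/(2k^{1+\sigma+\epsilon})$, exactly as asserted in the paragraph preceding the lemma statement.

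Second, I feed this into Lemma \ref{Lemma9}. If $q_n$ is a continued-fraction denominator of $\omega$ chosen so that $1/q_n < |\Delta|$ — equivalently $q_n \gtrsim k^{1+\sigma+\epsilon}$, i.e. $k \lesssim q_n^{1/(1+\sigma+\epsilon)}$ — then for every $x\in\T$ there is $1\le j \le q_n+q_{n-1}-1 \le 2q_n$ with $x+j\omega \in \Delta \subset V_k^f(z,dL(E))$. Choosing $k \sim q_n^{1/(1+\sigma+\epsilon)}$ (the largest admissible scale) makes $q_n \sim k^{1+\sigma+\epsilon}$, so the hit time satisfies $j \lesssim q_n \lesssim k^{1+\sigma+\epsilon}$, which is the claimed bound. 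Membership in $V_k^f(z,dL(E))$ unravels by definition to $\frac 1k \ln\norm{A_k^{f,z}(x+j\omega)} \ge dL(E)$, i.e. $\norm{A_k^{f,z}(x+j\omega)} \ge e^{dkL(E)}$, and hence $\norm{A_k^{f,z}(x+j\omega)}^2 \ge e^{2dkL(E)} \ge e^{dkL(E)}$ — this last step is the cheap passage from the lemma's $V_k^f$-bound to the squared transfer-matrix bound used later in Theorem \ref{THM11D}.

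\textbf{Main obstacle.} The genuinely substantive point — and the one place a careful proof must not be glib — is extracting a \emph{long arc} from a positive-measure semialgebraic set in $\T$, i.e. controlling the number of connected components of $V_k^{\tilde f_{N_1}}(E,cL(E))$ by its semialgebraic degree. For subsets of $\R$ (or $\T$) this is elementary: a set cut out by polynomial (in)equalities of total degree $D$ has at most $O(D)$ boundary points hence at most $O(D)$ components, so measure $>1/2$ forces a component of length $\gtrsim 1/D$. One must be slightly careful that $\tilde f_{N_1}$ being a polynomial of degree $N_1$ makes $V_k^{\tilde f_{N_1}}$ a \emph{single} polynomial inequality in $x$ of degree $\lesssim kN_1$ (the $k$-fold matrix product contributes the factor $k$), so $D \lesssim kN_1 \sim k^{1+\sigma+\epsilon}$; the rest is bookkeeping of the $\epsilon$'s, which are harmless since $\sigma>1$ and $\epsilon$ is arbitrarily small. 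A secondary, purely cosmetic matter is matching constants between "$k \sim q_n^{1/(1+\sigma+\epsilon)}$'' and "$k > Cq_n^{1/(1+\sigma+\epsilon)}$'' as written in the running text; since the $q_n$ form an increasing sequence with bounded ratios unnecessary here, for each large $k$ one simply picks the unique $n$ with $q_n \lesssim k^{1+\sigma+\epsilon} < q_{n+1}$, and the continued-fraction growth $q_{n+1} \le (a_{n+1}+1)q_n$ is not even needed because we only require an \emph{upper} bound on $j$ in terms of $k$.
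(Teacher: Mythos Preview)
Your proof is correct and follows essentially the same route as the paper. The paper's argument, given in the paragraph immediately preceding the lemma, is terser: it simply asserts that Lemmas \ref{GoodSetInclusion} and \ref{lemma:MeasureBound} together with Remark \ref{N1Choice} produce an interval $\Delta \subset V_k^f(z,dL(E))$ with $|\Delta| \gtrsim 1/(2k^{1+\sigma+\epsilon})$, and then applies Lemma \ref{Lemma9}. Your version spells out the one step the paper leaves implicit --- namely, that the semialgebraic middle set $V_k^{\tilde f_{N_1}}(E,cL(E))\subset\T$ has at most $O(kN_1)$ connected components, so measure $>1/2$ forces one arc of length $\gtrsim 1/(kN_1)$ --- which is exactly the content behind the paper's bare assertion.
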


Theorem \ref{THM11D} now follows by the set inclusion we proved in the previous section.

Since the proof of Theorem \ref{THM21D} is identical to the proof of Theorem \ref{THM2} in the next section, we omit it and refer readers to the next section for the details.

With Theorem \ref{THM21D}, we can prove Theorem \ref{THM3}.

\begin{proof}[{\bf Proof of Theorem \ref{THM3}}]

Let us begin by fixing $x \in \T$ and $f\in G^\sigma(\T).$ Moreover suppose that $L(E) > 0$ for every $E\in\R.$ First, we will consider the case $\omega \in DC(A,c).$ Fix $\epsilon > 0$ and set $\gamma = 1 + \sigma.$  The hypotheses of Theorem \ref{THM21D} are satisfied, and we can combine the conclusion of Theorem \ref{THM21D} with the conclusion of Lemma \ref{DamanikTcherProbIdentity} to obtain  
$$P((\ln T)^{\gamma + \epsilon} - 2, T) \leq e^{-C \ln(T)^{\zeta(\gamma + \epsilon)}} + C T^{-\delta}$$ for every $\zeta, \delta > 1.$ Since $\gamma > 1,$ we can further bound this by
$$P((\ln T)^{\gamma + \epsilon} - 2, T) \leq C T^{-\delta},$$ using a different constant $C.$
As before, we obtain $\alpha^{+}_{\ln} \leq 1 + \sigma < +\infty.$

We can now appeal to Theorem \ref{THM:APrioriEstimate} to establish the hypotheses of Theorem \ref{THM:BetaAlphaBound}, so $\beta^{+}_{\ln}(p) \leq \alpha^{+}_{\ln} \leq 1 + \sigma.$

Now we turn to the case $\omega \in \R\backslash\Q.$ We can appeal to Theorem \ref{THM21D} to obtain the above for a sequence $T_n \to \infty.$ With a sequence, we have analagous statements as above, but for $S^-$ and $\alpha^-.$ Thus we obtain $\beta^-_{\ln}(p) \leq 1 + \sigma.$

\end{proof}


\section{The case $\nu > 1$}\label{section:MultiDCase}

As in the case $\nu = 1,$ our goal is to first establish the following estimates:

\begin{mythm}\label{THM1}
Let $f = \lambda f_0\in G^\sigma(T^\nu), \nu > 1, \omega \in DC(A,c), \lambda > \lambda_0(f_0,\omega),$ and $E \in \R$ such that $L(E) > 0.$ For any $0 < \tau < 1,$ there exist $k_\tau = k_\tau(E) < \infty, \delta = \delta(\omega, \nu),$ and $\gamma = \gamma(\sigma, \nu, \delta)$ such that for any $\epsilon > 0, k > k_\tau,$ and $x\in \T^\nu,$ there is $1 \leq j \leq k^{\gamma + \epsilon}$ so that for any $z\in \C$ with $|z - E| < e^{-\frac{\tau kL(E)}{\norm{f}_\infty}}$ we have
\begin{equation}
\norm{A_k^{f,z}(x + j\omega)} > e^{k(1 - \tau)L(E)}.
\end{equation}
\end{mythm}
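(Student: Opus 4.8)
\textbf{Proof proposal for Theorem \ref{THM1}.}

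The plan is to combine the three ingredients already assembled: the set inclusions of Lemma \ref{GoodSetInclusion}, the measure bound of Lemma \ref{lemma:MeasureBound}, and the sublinear discrepancy bound of Theorem \ref{DiscrepancyBound} (with the explicit $\delta$ from Theorem \ref{ExplicitDiscrepancyBound}), playing the role of Lemma \ref{lemma:1Ddiscrepancy} in the $\nu=1$ case. Fix $0 < \tau < 1$ and choose $1 - \tau/16 > a > c > d > 1 - \tau/8$ as in Lemma \ref{GoodSetInclusion}. For $k > k_\tau(E)$, Lemma \ref{GoodSetInclusion} gives, for $|E-z| < e^{-k\tau L(E)/\norm{f}_\infty}$, the inclusions
\begin{equation*}
V_k^f(E, aL(E)) \subset V_k^{\tilde f_{N_1}}(E, cL(E)) \subset V_k^f(z, dL(E)),
\end{equation*}
where by Remark \ref{N1Choice} we may take $N_1 \sim k^{\sigma\nu+}$, so the middle set $\mathcal{S} := V_k^{\tilde f_{N_1}}(E, cL(E))$ is semialgebraic of degree at most $kN_1 \lesssim k^{\sigma\nu + 1 +}$. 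By Lemma \ref{lemma:MeasureBound} we have $|V_k^f(z, dL(E))| > 1/2$; the point I actually need is the \emph{complementary} estimate, namely that the \say{bad} set $\mathcal{S}^c = \T^\nu \setminus \mathcal{S}$ has measure $|\mathcal{S}^c| < \eta$ for a suitable small $\eta$. Here I should run the averaging argument of Lemma \ref{lemma:MeasureBound} slightly differently: since $L_k(E) \to L(E)$ and, for the Gevrey multi-frequency case, Theorem \ref{MultiDGevreyLDT} gives an exponentially small bound $e^{-k^\alpha}$ on the measure where $\frac 1 k \ln\norm{A_k^{f,E}(x)}$ deviates from $L_k(E)$ by more than $k^{-\tau}$, I can take $\eta = e^{-k^\alpha}$ (this is where $\lambda > \lambda_0(f_0,\omega)$ enters, to license the LDT); the complement of the good set is contained in this exceptional set once $k$ is large enough that $k^{-\tau} < (1-a)L(E)$ roughly.

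Now apply Theorem \ref{DiscrepancyBound} to $\mathcal{S}^c$: it is semialgebraic of degree $B \lesssim k^{\sigma\nu+1+}$ and has measure $< \eta = e^{-k^\alpha}$, and with $N$ a free integer parameter chosen so that $\ln B \le \ln N < \ln\frac 1\eta$ — i.e. roughly $k^{\sigma\nu+1+} \lesssim N \ll e^{k^\alpha}$, which is a nonempty window for large $k$ — we get
\begin{equation*}
\#\set{j = 1,\dots,N : T_\omega^j(x) \in \mathcal{S}^c} < N^{1-\delta} B^C < N^{1-\delta} k^{C(\sigma\nu+1+)}.
\end{equation*}
Choosing $N = k^{\gamma + \epsilon}$ for $\gamma$ large enough (depending on $\sigma,\nu$ and $\delta$; quantitatively $\gamma = C(\sigma\nu+1)/\delta$ as in the Remark after Theorem \ref{THM:SkewShiftResult}) makes the right side strictly smaller than $N$, so there exists at least one $j$ with $1 \le j \le k^{\gamma+\epsilon}$ and $T_\omega^j(x) = x + j\omega \in \mathcal{S} \subset V_k^f(z, dL(E))$. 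By definition of $V_k^f(z,dL(E))$ this says $\frac 1 k \ln\norm{A_k^{f,z}(x+j\omega)} \ge dL(E)$; adjusting the constants in the choice of $d$ (taking $d > 1-\tau$, which is consistent with the constraints $d > 1-\tau/8 > 1-\tau$) upgrades this to $\norm{A_k^{f,z}(x+j\omega)} > e^{k(1-\tau)L(E)}$, which is the claimed bound.

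The main obstacle I anticipate is bookkeeping the window $\ln B \le \ln N < \ln\frac 1\eta$ against the competing demands on $N$: the discrepancy bound only gives a genuine \say{miss} (one $j$ outside $\mathcal{S}^c$) when $N^{1-\delta}B^C < N$, i.e. $N^\delta > B^C$, forcing $N \gtrsim B^{C/\delta} \sim k^{C(\sigma\nu+1)/\delta}$, and this must still satisfy $N < 1/\eta = e^{k^\alpha}$; since the left side is polynomial in $k$ and the right side is $e^{k^\alpha}$, the window is eventually nonempty, but the threshold $k_\tau(E)$ and the final exponent $\gamma$ must be tracked carefully, and the uniformity of $k_\tau(E)$ and $\lambda_0$ over $E$ in compact sets with $L(E)>0$ needs the upper semicontinuity / compactness remarks already used in Lemmas \ref{GoodSetInclusion} and \ref{lemma:MeasureBound}. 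The analytic case ($f$ analytic, Theorem \ref{THM:AnalyticLDT}) and the skew-shift case (using $\omega \in SDC(A,c)$ and the skew-shift value of $\delta$ from Theorem \ref{ExplicitDiscrepancyBound}, via Theorem \ref{WencaiDisc}'s Lemma 8.4) go through verbatim with only the LDT and the value of $\delta$ swapped, which is exactly why the paper says the proofs of Theorems \ref{THM4Analytic} and \ref{THM:SkewShiftResult} are essentially identical.
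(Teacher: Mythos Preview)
Your proposal is correct and follows essentially the same route as the paper: the paper packages the key step as Lemma \ref{LEMMA:Discrepancy1}, where it uses the large deviation estimate (Theorem \ref{MultiDGevreyLDT}, transferred to $\tilde f_{N_1}$ via \eqref{eq70}) to get $|\T^\nu\setminus V_k^{\tilde f_{N_1}}(E,cL(E))| < e^{-k^\alpha}$, then applies Theorem \ref{DiscrepancyBound} with $B = kN_1 \sim k^{\sigma\nu+1+}$ and $N = B^{C/(\delta-\epsilon)}$ to produce the required $j$, and finally uses the inclusion into $V_k^f(z,dL(E))$ with $d > 1-\tau$. Your only superfluous detour is invoking Lemma \ref{lemma:MeasureBound} for the $>1/2$ bound before (correctly) realizing that the LDT is what actually supplies the small $\eta$ needed for the discrepancy argument; the paper goes straight to the LDT without that intermediate step.
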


\begin{mythm}\label{THM2}
Fix $\epsilon > 0.$ Let $f = \lambda f_0 \in G^\sigma(\T^\nu), \nu > 1, \omega \in DC(c,A), \lambda > \lambda_0(f_0,\omega),$ and $L(E) > 0.$  Then for any $\xi, \zeta > 1,$ there is $c > 0$ and $T_E < \infty$ such that for $T > T_E,$
\begin{equation}
\inf\set{\min_{\iota = \pm 1} \max_{1 \leq \iota m \leq (\ln T)^{\zeta (\gamma + \epsilon)}} \norm{A_m^{f,z}(x)}^2 T^{-\xi}} > c
\end{equation}
where $\gamma$ and $\delta$ are as above, and 
the infimum is over all $x \in \T^\nu$ and $z \in \C$ with $|z - E| < T^{-\zeta}.$ Moreover, the dependence of $T_E$ on $E$ is through $L(E),$ as in Theorem \ref{THM21D}. Thus,
as before, $T_E$ is uniformly bounded below for $E$ in compact sets with positive $L(E).$
\end{mythm}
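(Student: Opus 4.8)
The plan is to reduce Theorem \ref{THM2} to Theorem \ref{THM1} in exactly the same way that Theorem \ref{THM21D} should follow from Theorem \ref{THM11D}. The key observation is that Theorem \ref{THM1} already gives, for each large length scale $k$, each phase $x$, and each $z$ within $e^{-\tau k L(E)/\norm{f}_\infty}$ of $E$, a time $1 \leq j \leq k^{\gamma+\epsilon}$ with a good transfer matrix bound $\norm{A_k^{f,z}(x+j\omega)} > e^{k(1-\tau)L(E)}$. Concatenating the first $j$ steps with the following $k$ steps via the cocycle identity $A_{j+k}^{f,z}(x) = A_k^{f,z}(x+j\omega)\,A_j^{f,z}(x)$, and using the trivial a priori lower bound $\norm{A_j^{f,z}(x)} \geq \norm{A_j^{f,z}(x)^{-1}}^{-1} \gtrsim e^{-Cj}$ (the transfer matrices have determinant $1$ and bounded entries for $z$ in a compact set), one gets $\norm{A_{j+k}^{f,z}(x)} \gtrsim e^{k(1-\tau)L(E)} e^{-Cj} \gtrsim e^{k(1-\tau)L(E) - C k^{\gamma+\epsilon}}$. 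This last bound is useless as stated, so the actual mechanism has to be slightly different: one applies Theorem \ref{THM1} at the matrix $A_k$ starting from the phase $x+j\omega$, i.e. one directly reads off that some $m$ in the window $1 \le m \le k^{\gamma+\epsilon}$ (namely $m=j$, thinking of the relevant "good" index as the endpoint of a block of length $k$) produces the desired growth; more precisely, the quantity $\max_{1\le m \le k^{\gamma+\epsilon}}\norm{A_m^{f,z}(x)}$ is realized by a matrix whose growth inherits the $e^{k(1-\tau)L(E)}$ bound through the inclusion chain $V_k^f(E,aL(E)) \subset V_k^{\tilde f_{N_1}}(E,cL(E)) \subset V_k^f(z,dL(E))$ of Lemma \ref{GoodSetInclusion}.

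Concretely, I would proceed as follows. First, fix $\xi,\zeta>1$ and set $T$ large; choose $k = k(T)$ so that $(\ln T)^{\zeta} \asymp k^{1+\epsilon'}$, equivalently $k \asymp (\ln T)^{\zeta/(1+\epsilon')}$, so that $k^{\gamma+\epsilon} \lesssim (\ln T)^{\zeta(\gamma+\epsilon)}$ — this matches the window size in the statement. Second, observe that for $z$ with $|z-E|<T^{-\zeta}$ we have $|z-E| < e^{-\tau k L(E)/\norm{f}_\infty}$ for this choice of $k$ (since $T^{-\zeta} = e^{-\zeta \ln T}$ and $\zeta \ln T \gg \tau k L(E)/\norm{f}_\infty$ once $\epsilon'$ is small), so the hypotheses of Theorem \ref{THM1} hold. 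Third, apply Theorem \ref{THM1}: there is $1 \leq j \leq k^{\gamma+\epsilon} \lesssim (\ln T)^{\zeta(\gamma+\epsilon)}$ with $\norm{A_k^{f,z}(x+j\omega)} > e^{k(1-\tau)L(E)}$. Fourth, translate this into a bound on $\max_{1\le \iota m \le (\ln T)^{\zeta(\gamma+\epsilon)}}\norm{A_m^{f,z}(x)}$: writing $A_{j+k}^{f,z}(x) = A_k^{f,z}(x+j\omega) A_j^{f,z}(x)$ and using $\norm{A_j^{f,z}(x)^{-1}} \lesssim e^{Cj} \le e^{Ck^{\gamma+\epsilon}}$, and choosing $\tau$ small enough that $(1-\tau)L(E) \cdot k$ dominates... — here is where care is needed. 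The cleaner route, which I would actually take, is to absorb the loss by choosing the relation between $k$ and $T$ so that $k(1-\tau)L(E)$ is at least, say, $\xi \ln T + C$, i.e. take $k \asymp \ln T$ (not a fractional power), and then note $j+k \le 2k^{\gamma+\epsilon} \lesssim (\ln T)^{\gamma+\epsilon}$; adjusting the exponent $\zeta(\gamma+\epsilon)$ to swallow this, we obtain $\norm{A_{j+k}^{f,z}(x)}^2 \ge e^{2k(1-\tau)L(E) - 2Cj} \ge T^{2\xi}$ once $T$ is large, provided the $e^{-2Cj}$ factor is controlled — which forces $j = O(k)$, not $O(k^{\gamma+\epsilon})$.

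The main obstacle, therefore, is exactly this tension between the \emph{window length} $k^{\gamma+\epsilon}$ in which the good phase is guaranteed and the \emph{growth rate} $e^{k(1-\tau)L(E)}$ at scale $k$: if $j$ is as large as $k^{\gamma+\epsilon}$ then the prefix $A_j^{f,z}(x)$ can in principle decay like $e^{-Ck^{\gamma+\epsilon}}$, overwhelming the gain. The resolution — and the key step I expect to spend the most effort on — is to apply Theorem \ref{THM1} not with a fixed block length $k$ but with block length comparable to the \emph{window}, or equivalently to iterate: one shows that within $j \le k^{\gamma+\epsilon}$ steps the orbit enters the good set $V_k^f(z,dL(E))$, and then the \emph{good matrix itself} is $A_k^{f,z}(x+j\omega)$ with $m = j+k \le k^{\gamma+\epsilon}+k \le 2k^{\gamma+\epsilon}$, so the quantity being maximized, $\norm{A_m^{f,z}(x)}$, need not literally be $A_{j+k}$ — rather one uses that $\norm{A_{j}^{f,z}(x)} \cdot \norm{A_k^{f,z}(x+j\omega)} \le \norm{A_{j+k}^{f,z}(x)} \cdot \text{(const)}$ only in one direction, and instead exploits that $\norm{A_{j+k}^{f,z}(x)} \ge \norm{A_k^{f,z}(x+j\omega)}/\norm{A_j^{f,z}(x)^{-1}}$. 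To make this work one takes $k \asymp \ln T$ (so $k^{\gamma+\epsilon}$ is power-log in $T$, matching the statement) and $\tau$ small; then $e^{k(1-\tau)L(E)} \ge T^{(1-\tau)L(E)/\ln\text{-normalization}}$, and one must verify $(1-\tau)L(E)\cdot k - C k^{\gamma+\epsilon} \ge \xi \ln T$ — which \emph{fails} unless the decay constant $C$ on the prefix is itself controlled by $L(E)$. This is handled by noting that for $z$ in a compact set $\norm{A_j^{f,z}(x)^{-1}} = \norm{A_j^{f,z}(x)}$ (unimodularity) and invoking the uniform upper bound $\limsup \frac1n \ln\norm{A_n^{f,z}(x)} \le L(E) + \epsilon$ (upper semicontinuity / compactness, as already used in the proof of Lemma \ref{GoodSetInclusion}, cf. \eqref{eq71}), giving $\norm{A_j^{f,z}(x)^{-1}} \le e^{j(L(E)+\epsilon)} \le e^{k^{\gamma+\epsilon}(L(E)+\epsilon)}$ — still too big. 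The genuine fix, and the one I would commit to in writing, is to choose $k$ as a function of $T$ so that the \emph{output} scale $j+k$, which is at most $k^{\gamma+\epsilon}$, equals the prescribed window $(\ln T)^{\zeta(\gamma+\epsilon)}$, i.e. $k \asymp (\ln T)^{\zeta}$, and simultaneously to require $z$ within $T^{-\zeta} \le e^{-\tau k L(E)/\norm f_\infty}$; then the good matrix $A_k^{f,z}(x+j\omega)$ at scale $k \asymp (\ln T)^\zeta$ has norm $\ge e^{(1-\tau)L(E)(\ln T)^\zeta} \gg T^\xi$, and since $\zeta > 1$ this exponential-in-$(\ln T)^\zeta$ growth dwarfs any polynomial-in-$(\ln T)^{\gamma+\epsilon}$ prefix loss — making the $e^{-Cj}$ factor negligible after all. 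Finally, the uniformity of $T_E$ for $E$ in compact sets with $L(E)$ bounded below, and the reduction of $\beta^\pm_{\ln}$-bounds to this estimate via Lemma \ref{DamanikTcherProbIdentity}, go through verbatim as in the one-frequency case.
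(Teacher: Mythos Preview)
Your proposal has a genuine gap: the ``resolution'' you commit to in the final paragraph does not work, and the key idea that makes the paper's proof go through is missing.

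Concretely, your final choice $k \asymp (\ln T)^{\zeta}$ is inconsistent with the complex-neighborhood hypothesis you need from Theorem \ref{THM1}. You want $T^{-\zeta} \le e^{-\tau k L(E)/\norm{f}_\infty}$, i.e.\ $\zeta \ln T \ge \tau k L(E)/\norm{f}_\infty$; with $k \asymp (\ln T)^{\zeta}$ and $\zeta>1$ the right side grows like $(\ln T)^{\zeta}$, which eventually exceeds the left side $\zeta \ln T$. Moreover your claim that the gain $e^{(1-\tau)L(E)(\ln T)^{\zeta}}$ ``dwarfs any polynomial-in-$(\ln T)^{\gamma+\epsilon}$ prefix loss'' misstates the loss: the prefix factor is $e^{C j}$ with $j$ up to $k^{\gamma+\epsilon} \asymp (\ln T)^{\zeta(\gamma+\epsilon)}$, which is \emph{exponential} in $(\ln T)^{\zeta(\gamma+\epsilon)}$ and, since $\gamma+\epsilon>1$, strictly larger than the gain. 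So the prefix-decay problem you correctly identified is not overcome by your argument.

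The paper avoids the prefix altogether by a one-line product trick. From the cocycle identity and unimodularity one has
\[
\norm{A_k^{f,z}(x+j\omega)} \le \norm{A_{k+j}^{f,z}(x)}\,\norm{A_j^{f,z}(x)},
\]
hence
\[
\Big(\max\big\{\norm{A_{k+j}^{f,z}(x)},\ \norm{A_j^{f,z}(x)}\big\}\Big)^{2} \ge \norm{A_k^{f,z}(x+j\omega)} \ge M_k^{\xi}.
\]
Since both indices $j$ and $j+k$ lie in the window $[1,(\ln M_k)^{\zeta(\gamma+\epsilon)}]$, this immediately gives $\max_{1\le m \le (\ln M_k)^{\zeta(\gamma+\epsilon)}}\norm{A_m^{f,z}(x)}^{2} \ge M_k^{\xi}$ with no control on the prefix whatsoever; this is exactly why the statement carries $\norm{A_m}^{2}$ rather than $\norm{A_m}$. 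With this device the correct scaling is $k \asymp \ln T$ (not $(\ln T)^{\zeta}$): one fixes $0<\tau<\frac{\zeta\norm{f}_\infty}{\zeta\norm{f}_\infty+\xi}$ and chooses $M_k=T$ in the range $e^{k\tau L(E)/(\zeta\norm{f}_\infty)}<M_k<e^{k(1-\tau)L(E)/\xi}$, which simultaneously guarantees $T^{-\zeta}<e^{-\tau k L(E)/\norm{f}_\infty}$ and $e^{k(1-\tau)L(E)}\ge T^{\xi}$, while $\zeta>1$ ensures $(\ln T)^{\zeta(\gamma+\epsilon)}>k^{\gamma+\epsilon}+k$. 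Your outline of the uniformity in $E$ and the handling of the backward cocycle is fine.
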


\begin{Remark}
If we consider just $E\in [-K,K]$ in the above theorem, then continuity of $L(E),$ which was established for our situation in \cite{SKleinMultiD}, and compactness of $[-K,K]$ yields the desired uniform lower bound on $T.$
\end{Remark}

When $\nu > 1,$ we need to do a bit more work to obtain an analogue of Lemma \ref{Lemma9}. 

We may appeal to Theorems \ref{MultiDGevreyLDT} and \ref{DiscrepancyBound} to obtain:
\begin{mylemma}\label{LEMMA:Discrepancy1}
Let $\omega \in DC(A,c).$ For $f = \lambda f_0 \in G^\sigma(\T^\nu),$ there exists $\lambda_0(f_0,\omega)$ such that, for $\lambda > \lambda_0$ and every $x \in \T^\nu$ there exists $1 \leq j \leq k^{C(\nu + A)(\sigma\nu + 1) + }$ such that $x + j \omega \in V_k^{\tilde{f}_{N_1}}(E, c L(E)).$
\end{mylemma}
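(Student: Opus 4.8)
The plan is to combine the large deviation estimate (Theorem~\ref{MultiDGevreyLDT}) with the sublinear discrepancy bound for the shift (Theorem~\ref{DiscrepancyBound}, with the explicit $\delta \leq \frac{1}{A+\nu}$ from Theorem~\ref{ExplicitDiscrepancyBound}) applied to the semialgebraic set from Lemma~\ref{GoodSetInclusion}. First I would set $\mathcal{S} = \T^\nu \setminus V_k^{\tilde{f}_{N_1}}(E, cL(E))$, the ``bad'' set of phases where the truncated-polynomial transfer matrix grows too slowly. By Lemmas~\ref{GoodSetInclusion} and~\ref{lemma:MeasureBound} (the latter gives $|V_k^f(z,dL(E))| > 1/2$, hence via the inclusions $|V_k^{\tilde f_{N_1}}(E,cL(E))| \geq 1/2$), we have $|\mathcal{S}| \leq 1/2$. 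But to run the discrepancy bound we need $|\mathcal{S}|$ \emph{small}, not just bounded away from $1$; this is where the large deviation theorem enters. For $\lambda > \lambda_0(f_0,\omega)$, Theorem~\ref{MultiDGevreyLDT} gives $|\{x: |\frac1k \ln\|A_k^{f,E}(x)\| - L_k(E)| < k^{-\tau}\}| < e^{-k^\alpha}$; since $L_k(E) \to L(E)$, for $k$ large the complement of $V_k^{f}(E, cL(E))$ (for $c<1$) is contained in this small-measure exceptional set, so $|V_k^f(E,cL(E))| > 1 - e^{-k^\alpha}$, and the set inclusions of Lemma~\ref{GoodSetInclusion} transfer this to give $|\mathcal{S}| < e^{-k^\alpha}$, i.e. $\eta := |\mathcal{S}|$ satisfies $\ln\frac1\eta \gtrsim k^\alpha$.

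Next I would record the degree bound: by Lemma~\ref{GoodSetInclusion} and Remark~\ref{N1Choice}, $V_k^{\tilde f_{N_1}}(E,cL(E))$ is semialgebraic of degree at most $kN_1$ with $N_1 \sim k^{\sigma\nu+}$, so $B := \deg\mathcal{S} \lesssim k^{\sigma\nu+1+}$, whence $\ln B \lesssim \ln k$. To apply Theorem~\ref{DiscrepancyBound} I need an integer $N$ with $\ln B \leq \ln N < \ln\frac1\eta$, i.e. roughly $\ln k \lesssim \ln N \lesssim k^\alpha$; I would choose $N$ to be a suitable power of $k$, say $N \sim k^{M}$ for a large constant $M = M(\nu,A,\sigma)$ to be fixed. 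This forces $N^{1-\delta} B^C < N$ for $N$ (hence $k$) large, since $1-\delta < 1$ and $B^C \lesssim k^{C'} = N^{C'/M}$ is subdominant once $M$ is chosen with $C'/M < \delta$. Then Theorem~\ref{DiscrepancyBound} gives
\begin{equation*}
\#\{j = 1,\dots,N: x + j\omega \in \mathcal{S}\} < N^{1-\delta}B^C < N,
\end{equation*}
so there exists $1 \leq j \leq N$ with $x + j\omega \notin \mathcal{S}$, i.e. $x + j\omega \in V_k^{\tilde f_{N_1}}(E, cL(E))$. With $N \sim k^M$ and $\delta \leq \frac{1}{A+\nu}$, unwinding the constant $M$ gives the claimed bound $j \leq k^{C(\nu+A)(\sigma\nu+1)+}$.

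The main obstacle I anticipate is bookkeeping the compatibility of the three scales: the measure $\eta$ of the bad set must beat the discrepancy window (needs $\ln\frac1\eta > \ln N$, supplied by the LDT exponent $k^\alpha$ dominating $\ln N \sim M\ln k$), while simultaneously $N$ must be large enough that the covering/degree factor $B^C$ is absorbed into $N^{\delta}$ (needs $N$ polynomially large in $k$ with exponent $M > C'/\delta$). One must check these two requirements on $M$ are consistent — they are, since the first only needs $M\ln k \ll k^\alpha$, which holds for \emph{any} fixed $M$ once $k$ is large — and then verify that the resulting $j \lesssim k^M$ has $M$ of the advertised form $C(\nu+A)(\sigma\nu+1)$ up to an arbitrarily small additive loss in the exponent. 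A secondary point requiring care is that Theorem~\ref{DiscrepancyBound} is stated for $\mathcal{S} \subset [0,1]^n$ of small measure, so I must confirm the hypothesis $|\mathcal{S}| < \eta$ with the $\eta$ from the LDT plays well with the constraint $\ln B \leq \ln N < \ln\frac1\eta$; this is exactly why the large coupling $\lambda > \lambda_0$ is needed, as it is the only source of the genuinely exponentially small measure estimate.
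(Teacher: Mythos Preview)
Your proposal is correct and follows essentially the same route as the paper: set $\mathcal{S}$ to be the complement of the polynomial good set, use the Gevrey LDT (for $\lambda>\lambda_0$) to get $|\mathcal{S}|<e^{-k^\alpha}$, record the semialgebraic degree $B\sim k^{\sigma\nu+1}$, and apply the sublinear hit bound with $N$ a suitable power of $k$ so that $N^{1-\delta}B^C<N$ forces a good $j\le N\sim k^{C(\nu+A)(\sigma\nu+1)+}$. The only cosmetic difference is that the paper transfers the LDT measure bound to $\tilde f_{N_1}$ directly via the norm approximation \eqref{eq70} (and takes the explicit $N=B^{C/(\delta-\epsilon)}$), whereas you route it through the first inclusion of Lemma~\ref{GoodSetInclusion} and leave $N=k^M$ abstract; these are equivalent since that inclusion was itself proved from the same approximation.
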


\begin{proof}
Recall that by Theorem \ref{MultiDGevreyLDT}, combined with \eqref{eq:70}, with $N_1$ as in Lemma \ref{GoodSetInclusion}, there exists a $\lambda_0$ so that, for all $\lambda > \lambda_0$ and $f = \lambda f_0,$ we have
\begin{equation}
\left|\set{x \in \T^\nu: \left|\frac 1 k \ln \norm{A^{\tilde{f}_{N_1},E}_k(x)} - L_k(E) \right| > 2k^{-\tau}}\right| < e^{-k^{\alpha}}.
\end{equation} 
This implies
\begin{equation}
\left|\set{x \in \T^\nu: \frac 1 k \ln \norm{A^{\tilde{f}_{N_1},E}_k(x)} - L(E) < -2k^{-\tau}}\right| < e^{-k^{\alpha}},
\end{equation}
since $L_k(E) \geq L(E).$ Thus, for $k$ sufficiently large, and $N_1(k) \sim k^{\sigma\nu +},$ by Remark \ref{N1Choice},
\begin{equation}
\left|\T^\nu \backslash V_k^{\tilde{f}_{N_1}}(E, c L(E))\right| < e^{-k^\alpha}.
\end{equation}
Since the left hand side is the complement of a semialgebraic set of degree at most $kN_1,$ it is itself semialgebraic of degree at most $kN_1.$ By Theorem \ref{ExplicitDiscrepancyBound}, for fixed $0 < \epsilon < \delta = \frac{1}{\nu + A},$ we can thus set $\mathcal{S} =  \left(\T^\nu \backslash V_k^{\tilde{f}_{N_1}}(E, c L(E))\right), \eta = e^{-k^\alpha},$ $B = kN_1,$ and $N = B^{C/(\delta - \epsilon)},$ and then appeal to Theorem \ref{DiscrepancyBound} to obtain, for any $0 < \epsilon < \delta,$
\begin{equation}
\#\set{1 \leq j \leq N: x + j\omega \in \mathcal{S}} < B^{C\frac{1 - \delta}{\delta - \epsilon}}B^C = B^{C\frac{1 - \epsilon}{\delta - \epsilon}}.
\end{equation}
Thus, for every $x \in \T^\nu$ there is a $1 \leq j \leq (kN_1)^{C\frac{1 - \epsilon}{\delta - \epsilon}} < N^{1 - \epsilon}$ so that $x + j\omega \in V_k^{\tilde{f}_{N_1}}(E, c L(E)).$ The result now follows from our choice of $N_1 \sim k^{\sigma\nu +}$ in Lemma \ref{GoodSetInclusion}.

\end{proof}

Theorem \ref{THM1} now follows from the fact that $V_k^{\tilde{f}_{N_1}}(E, c L(E)) \subset V_k^f(z, d L(E)),$ and observing that $d > 1 - \tau,$ just as in the case $\nu = 1.$

Theorem \ref{THM2} can now be proved using Theorem \ref{THM1}.

\begin{proof}[{\bf{Proof of Theorem \ref{THM2}}}] 
Fix $\xi, \zeta > 1$ and $0 < \tau < \frac{\zeta\norm{f}_\infty}{\zeta\norm{f}_\infty + \xi} < 1.$ Consider any $M_k = M_k(\xi, \zeta)$ such that the following holds:
\begin{equation}\label{FirstMBound}
e^{k\tau L(E) / (\zeta\norm{f}_\infty)} < M_k < e^{k (1 - \tau) L(E) / \xi}
\end{equation}
and 
\begin{equation}\label{SecondMBound}
(\ln M_k)^{(\gamma + \epsilon) \zeta} > k^{\gamma +} + k.
\end{equation}
Both conditions can be satisfied by taking $k$ sufficiently large due to our choice of $\tau$ and $\zeta > 1.$ Appealing to Theorem \ref{THM1}, for every $x\in \T^\nu$ there is $1 \leq j \leq (\ln M_k)^{(\gamma + \epsilon) \zeta} - k$ so that for $|z - E| < M_k^{-\zeta}$ we have
\begin{equation}\label{GOODBOUND}
\norm{A_k^{f, z}(x + j\omega)} \geq M_k^\xi.
\end{equation}

Now recall that, by definition,
\begin{equation}
A_{k+ j}^{f,z}(x) = A_k^{f,z}(x + j\omega) A_j^{f,z}(x).
\end{equation}
Moreover, $A$ is an $SL_2(\R)$ cocycle, so $\norm{A_k} = \norm{A_k^{-1}},$ and thus
\begin{equation}
\norm{A_k^{f,z}(x + j\omega)} \leq \norm{A_{k+ j}^{f,z}(x)}\norm{A_j^{f,z}(x)}.
\end{equation}
This together with \eqref{GOODBOUND} implies
\begin{equation}
\max_{1 \leq j \leq (\ln M_k)^{(\gamma + \epsilon) \zeta} - k}\set{\norm{A_{k+ j}^{f,z}(x)}, \norm{A_j^{f,z}(x)}} \geq M_k^{\xi}.
\end{equation}
Thus we must have
\begin{equation}
\max_{1 \leq j \leq (\ln M_k)^{(\gamma + \epsilon) \zeta}}\norm{A_j^{f,z}(x)}^2 \geq M_k^{\xi}.
\end{equation}

It is not difficult to show that for some $T_0 = T_0(E) < \infty,$ and any $T > T_0,$ we can find $k < \infty$ and $M_k = T$ satisfying \eqref{FirstMBound} and \eqref{SecondMBound}. Thus, we have, for any $\xi, \zeta > 1,$ 
\begin{equation}
\inf_{|z - E| < T^{-\zeta}; x \in \T^\nu}\set{
\max_{1 \leq \iota j \leq (\ln T)^{(\gamma + \epsilon) \zeta}} \norm{A_j^{f,z}(x)}^2T^{-\xi}} > c > 0.
\end{equation}
It remains to show that we can also use the same $M_k$ to obtain an analogous bound for the left transfer matrix. Note that for an ergodic invertible cocycle, the Lyapunov exponent of the forward cocycles and the Lyapunov exponent of the backward cocycles agree. Moreover, if $A_k(\omega, x)$ is the cocycle over rotations by $\omega,$ then $A_{-k}(\omega,x) = A_k(-\omega, x + \omega).$ Since $\omega$ and $-\omega$ obey the same Diophantine condition, Lemma \ref{LEMMA:Discrepancy1} also holds for $A_{-k}^{f,z}(x),$ which means we can use the exact same $M_k$ to obtain a bound as above. 

\end{proof}

Now we can turn to the proof of Theorem \ref{THM4}.

\begin{proof}[{\bf{Proof of Theorem \ref{THM4}}}]
We can follow the same idea as in the proof of Theorem \ref{THM3}, using Theorem \ref{THM2} in place of Theorem \ref{THM21D}. Let us fix $x\in \T^\nu, \omega \in DC(A,c) \subset \T^\nu,$ and $f = \lambda f_0 \in G^\sigma(\T^\nu),$ where $\lambda > \lambda_0(f_0,\omega)$ so that we satisfy the conclusions of Theorem \ref{MultiDGevreyLDT}. Moreover, suppose that $L(E) > 0$ so that we may appeal to Theorem \ref{THM2}. 

By Theorem \ref{THM2}, along with Theorem \ref{DamanikTcherProbIdentity}, we have 
$$P((\ln T)^{\gamma + \epsilon} - 2, T) \leq CT^{-\beta}$$ for some $\gamma = \gamma(A,c,\sigma,\nu) < +\infty$ and every $\beta > 1.$ Moreover, it is clear that 
\begin{equation}
\frac{\ln (P((\ln T)^{\gamma + \epsilon} - 2, T))}{\ln\ln(T)} \leq -\delta\frac{\ln(T)}{\ln\ln(T)},
\end{equation}
so by Theorems \ref{THM:APrioriEstimate} and \ref{THM:BetaAlphaBound},
$\beta^{\pm}_{\ln}(p) \leq \alpha^{\pm}_{\ln} \leq \gamma.$

\end{proof}


\section{The analytic case}\label{AnalyticCaseProof}

The proofs of our main results in the case of an analytic potential are morally the same as those for Gevrey potentials. Indeed, we can quickly obtain the following using the same proofs as the analogous results above.

\begin{mythm}\label{AnalyticTHM1}
Let $f$ be a non-constant analytic function on $\T^\nu, \nu \geq 1, \omega \in DC(A,c),$ and $E \in \R$ such that $L(E) > 0.$ For any $0 < \tau < 1,$ there exist $k_\tau = k_\tau(E) < \infty, \delta = \delta(\omega, \nu),$ and $\gamma = \gamma(\nu, \delta)$ such that for any $\epsilon > 0, k > k_\tau,$ and $x\in \T^\nu,$ there is $1 \leq j \leq k^{\gamma + \epsilon}$ so that for any $z\in \C$ with $|z - E| < e^{-\frac{\tau kL(E)}{\norm{f}_\infty}}$ we have
\begin{equation}
\norm{A_k^{f,z}(x + j\omega)} > e^{k(1 - \tau)L(E)}.
\end{equation}
\end{mythm}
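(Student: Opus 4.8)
## Proof Proposal for Theorem \ref{AnalyticTHM1}

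The plan is to mirror, essentially verbatim, the argument that establishes Theorem \ref{THM1} in Section \ref{section:MultiDCase}, since the analytic case is the ``easier'' (non-perturbative, no coupling-constant restriction) analogue of the Gevrey case. The three structural ingredients carry over: (i) the semialgebraic approximation and set-inclusion chain from Lemma \ref{GoodSetInclusion} and Lemma \ref{lemma:MeasureBound}; (ii) a large deviation estimate to control the measure of the ``bad'' set of phases; (iii) the discrepancy/sublinear-hitting bound of Theorem \ref{ExplicitDiscrepancyBound} applied to the complement of the good semialgebraic set. The only substitution needed is to use the Bourgain--Goldstein--Schlag analytic LDT (Theorem \ref{THM:AnalyticLDT}) in place of S. Klein's Gevrey LDT (Theorem \ref{MultiDGevreyLDT}), and to note that the analytic LDT holds for all non-constant analytic $f$ with $\omega \in DC(A,c)$ without any largeness hypothesis, so no $\lambda_0$ appears.

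First I would record the analytic analogue of Lemma \ref{GoodSetInclusion}: given $f$ analytic, its Fourier coefficients decay like $|\hat f(n)| \le C e^{-\rho|n|}$, so truncating at $N_0 \sim k$ already gives $|f - f_{N_0}| \le e^{-ck}$, and then approximating each exponential $e^{in_jx_j}$ by a Taylor polynomial of degree $\sim k$ yields a polynomial $\tilde f_{N_1}$ of degree $N_1 \sim k^{\nu+}$ with $|f_{N_0} - \tilde f_{N_1}| \le e^{-k(1-d)L(E)}$. This is the analytic case $\sigma = 1$ of Remark \ref{N1Choice}: $N_1 \sim k^{\nu+}$ rather than $k^{\sigma\nu+}$. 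The telescoping estimates \eqref{eq70}--\eqref{eq71}, the set inclusions $V_k^f(E,aL(E)) \subset V_k^{\tilde f_{N_1}}(E,cL(E)) \subset V_k^f(z,dL(E))$, and the measure bound $|V_k^f(z,dL(E))| > 1/2$ (Lemma \ref{lemma:MeasureBound}) go through unchanged, as their proofs never used Gevrey regularity beyond Fourier decay.

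Next I would prove the analytic analogue of Lemma \ref{LEMMA:Discrepancy1}. Applying Theorem \ref{THM:AnalyticLDT} to the polynomial potential $\tilde f_{N_1}$ (which is itself analytic, with the relevant analyticity width controlled) together with \eqref{eq70}, and using $L_k(E) \ge L(E)$, gives $|\T^\nu \setminus V_k^{\tilde f_{N_1}}(E,cL(E))| < e^{-k^{\alpha'}}$ for suitable $\alpha' > 0$ and $k$ large. This complement is semialgebraic of degree at most $kN_1 \sim k^{\nu+1+}$. Setting $\mathcal{S}$ to be this complement, $\eta = e^{-k^{\alpha'}}$, $B = kN_1$, and $N = B^{C/(\delta - \epsilon)}$ with $\delta = \frac{1}{\nu + A}$ from Theorem \ref{ExplicitDiscrepancyBound}, the hypothesis $\ln B \le \ln N < \ln\frac{1}{\eta}$ holds for large $k$, so Theorem \ref{DiscrepancyBound} yields $\#\{1 \le j \le N : x + j\omega \in \mathcal{S}\} < N^{1-\epsilon}$, hence some $1 \le j \le (kN_1)^{C\frac{1-\epsilon}{\delta-\epsilon}} = k^{\gamma + \epsilon}$ with $x + j\omega \in V_k^{\tilde f_{N_1}}(E,cL(E))$, where $\gamma = \gamma(\nu,\delta)$ is as in the statement. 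Then the inclusion $V_k^{\tilde f_{N_1}}(E,cL(E)) \subset V_k^f(z,dL(E))$ and $d > 1 - \tau$ give $\norm{A_k^{f,z}(x+j\omega)} > e^{k(1-\tau)L(E)}$, which is exactly the claimed bound.

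The only genuine point requiring care — and the step I expect to be the main obstacle — is checking that Theorem \ref{THM:AnalyticLDT} can legitimately be applied to the \emph{truncated polynomial} $\tilde f_{N_1}$ uniformly in $k$: one needs the strip of analyticity (and the BGS constants $\alpha$, and the threshold $N_0$) to not degenerate as $N_1 \to \infty$ with $k$. This is handled exactly as in the Gevrey argument — the polynomial approximants inherit a uniform lower bound on analyticity width from $f$, or alternatively one applies the LDT to $f$ itself and transfers it to $\tilde f_{N_1}$ via \eqref{eq70}, absorbing the $e^{-k^{1+\epsilon}}$ error into the $k^{-\tau}$ slack — so it is routine but should be stated. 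With that in hand, Theorem \ref{AnalyticTHM1} follows, and the downstream consequences (the analytic analogue of Theorem \ref{THM2}, and then Theorem \ref{THM4Analytic}) follow by the verbatim arguments of Section \ref{section:MultiDCase}, as the paper indicates.
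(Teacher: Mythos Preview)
Your proposal is correct and follows essentially the same approach as the paper: the paper explicitly says the proof proceeds exactly as for Theorem \ref{THM1}, replacing the Gevrey LDT (Theorem \ref{MultiDGevreyLDT}) with the analytic LDT (Theorem \ref{THM:AnalyticLDT}) and replacing the Fourier decay estimate \eqref{FourierDecay} with the exponential decay $|\hat f(n)| \le Ce^{-c|n|}$. Your treatment is in fact slightly more explicit than the paper's, in that you flag the uniformity issue when applying the LDT to the truncated polynomial $\tilde f_{N_1}$ and correctly note its resolution via \eqref{eq70}.
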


\begin{mythm}\label{AnalyticTHM2}
Fix $\epsilon > 0.$ Let $f$ be a non-constant analytic function on $\T^\nu,\nu \geq 1, \omega \in DC(c,A),$ and $L(E) > 0.$  Then for any $\xi, \zeta > 1,$ there is $c > 0$ and $T_E < \infty$ such that for $T > T_E,$
\begin{equation}
\inf\set{\min_{\iota = \pm 1} \max_{1 \leq \iota m \leq (\ln T)^{\zeta (\gamma + \epsilon)}} \norm{A_m^{f,z}(x)}^2 T^{-\xi}} > c
\end{equation}
where $\gamma$ and $\delta$ are as before, and 
the infimum is over all $x \in \T^\nu$ and $z \in \C$ with $|z - E| < T^{-\zeta}.$ 

Moreover, the dependence of $T_E$ on $E$ is through $L(E),$ as in Theorem \ref{THM21D}. Thus,
as before, $T_E$ is uniformly bounded below for $E$ in compact sets with positive $L(E).$
\end{mythm}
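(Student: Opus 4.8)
The plan is to mirror the proof of Theorem \ref{THM2} verbatim, substituting the analytic large deviation estimate (Theorem \ref{THM:AnalyticLDT}) for the Gevrey one (Theorem \ref{MultiDGevreyLDT}) and observing that the polynomial-approximation step in Lemma \ref{GoodSetInclusion} still works, only with a better degree bound. Concretely, first I would re-run the analogue of Lemma \ref{GoodSetInclusion}: since $f$ is analytic on $\T^\nu$, its Fourier coefficients decay like $|\hat f(n)|\leq C e^{-\rho|n|}$ for some $\rho>0$, so the truncation $f_{N_0}$ with $N_0\sim k^{1+\epsilon}$ already satisfies $|f-f_{N_0}|\leq e^{-k(1-c)L(E)}$, and Taylor-approximating each $e^{in_j x_j}$ to degree $\sim k^{1+\epsilon}$ produces a polynomial $\tilde f_{N_1}$ of degree $N_1\sim k^{1+\epsilon}$ with $|f_{N_0}-\tilde f_{N_1}|\leq e^{-k(1-d)L(E)}$. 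The same telescoping/upper-semicontinuity argument then yields the inclusions $V_k^f(E,aL(E))\subset V_k^{\tilde f_{N_1}}(E,cL(E))\subset V_k^f(z,dL(E))$ for $|E-z|<e^{-\tau kL(E)/\norm{f}_\infty}$, and $V_k^{\tilde f_{N_1}}(E,cL(E))$ is semialgebraic of degree at most $kN_1\sim k^{2+\epsilon}$.

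Next I would prove the analogue of Lemma \ref{LEMMA:Discrepancy1}. By Theorem \ref{THM:AnalyticLDT} applied to the analytic (polynomial) potential $\tilde f_{N_1}$ together with $\eqref{eq70}$, the complement $\T^\nu\setminus V_k^{\tilde f_{N_1}}(E,cL(E))$ has measure $<e^{-k^\alpha}$ for some $\alpha=\alpha(A)>0$ and is semialgebraic of degree $B=kN_1$. Setting $\eta=e^{-k^\alpha}$, $\delta=\frac{1}{\nu+A}$ from Theorem \ref{ExplicitDiscrepancyBound}, and $N=B^{C/(\delta-\epsilon)}$, Theorem \ref{DiscrepancyBound} gives at most $B^{C(1-\epsilon)/(\delta-\epsilon)}$ hits among the first $N$ shifts, hence for every $x\in\T^\nu$ there is $1\leq j\leq (kN_1)^{C(1-\epsilon)/(\delta-\epsilon)}$ with $x+j\omega\in V_k^{\tilde f_{N_1}}(E,cL(E))$; since $N_1\sim k^{1+\epsilon}$ this is $j\leq k^{\gamma+\epsilon}$ for a suitable $\gamma=\gamma(\nu,\delta)$. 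Combined with the inclusion $V_k^{\tilde f_{N_1}}(E,cL(E))\subset V_k^f(z,dL(E))$ and $d>1-\tau$, this is exactly Theorem \ref{AnalyticTHM1}.

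Finally, Theorem \ref{AnalyticTHM2} follows from Theorem \ref{AnalyticTHM1} by the identical bookkeeping argument used in the proof of Theorem \ref{THM2}: pick $\tau<\frac{\zeta\norm{f}_\infty}{\zeta\norm{f}_\infty+\xi}$, choose $M_k$ with $e^{k\tau L(E)/(\zeta\norm{f}_\infty)}<M_k<e^{k(1-\tau)L(E)/\xi}$ and $(\ln M_k)^{(\gamma+\epsilon)\zeta}>k^{\gamma+}+k$ (possible for large $k$ since $\zeta>1$), use $A_{k+j}^{f,z}(x)=A_k^{f,z}(x+j\omega)A_j^{f,z}(x)$ and $\norm{A_k}=\norm{A_k^{-1}}$ in $SL_2(\R)$ to get $\max_{1\leq j\leq(\ln M_k)^{(\gamma+\epsilon)\zeta}}\norm{A_j^{f,z}(x)}^2\geq M_k^\xi$, then put $M_k=T$. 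The left-cocycle bound comes from $A_{-k}(\omega,x)=A_k(-\omega,x+\omega)$ and the fact that $-\omega$ satisfies the same Diophantine condition; continuity of $L(E)$ (known for analytic potentials) plus compactness of $[-K,K]$ gives the uniform lower bound on $T_E$. I do not expect a genuine obstacle here—the point of this section is precisely that analytic potentials are the easy case, with a cheaper polynomial approximation and a cleaner LDT—so the only thing to be careful about is tracking that $N_1$ can be taken $\sim k^{1+\epsilon}$ rather than $k^{\sigma\nu+\epsilon}$, which improves the constants but does not change the structure of the argument.
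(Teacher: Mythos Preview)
Your proposal is correct and follows essentially the same approach as the paper: the paper's own ``proof'' in Section \ref{AnalyticCaseProof} is simply the remark that one should repeat the arguments for Theorems \ref{THM1} and \ref{THM2} verbatim, replacing the Gevrey large deviation estimate (Theorem \ref{MultiDGevreyLDT}) by the analytic one (Theorem \ref{THM:AnalyticLDT}), using the exponential Fourier decay $|\hat f(n)|\leq Ce^{-c|n|}$ in place of \eqref{FourierDecay}, and invoking continuity of $L(E)$ from \cite{BourgainBook}. Your write-up spells this out in slightly more detail (including the improved degree bound $N_1\sim k^{1+\epsilon}$) but is otherwise identical in structure.
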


The main difference between these two results and the variants from Sections \ref{section:1DCase} and \ref{section:MultiDCase} is the assumption on $f.$ Here, we do not need to assume $f = \lambda f_0$ for $\lambda > \lambda_0(f_0, \omega).$ Indeed, this condition is needed for the Gevrey case in order to use the large deviation estimate Theorem \ref{MultiDGevreyLDT}, but the analogous estimate for analytic potentials, Theorem \ref{THM:AnalyticLDT}, does not require such a condition. Once we have a large deviation estimate, the proofs proceed exactly as in the proof of Theorem \ref{THM1}, with \eqref{FourierDecay} replaced by $|\hat{f}(n)| \leq CE^{c|n|}.$ Note that continuity of $L(E),$ which is required in the uniform minoration of $T_E,$ was established in \cite{BourgainBook}.


\section{The skew-shift case, $\nu > 1$}\label{section:MultiDSkewShift}
Let $T_\omega$ denote the skew shift on $\T^\nu.$ As in the shift case, our goal is to first establish the following estimates:

\begin{mythm}\label{SkewTHM1}
Let $f = \lambda f_0 \in G^\sigma(T^\nu), \nu > 1,$ $\omega \in SDC(A,c), \lambda > \lambda_0(f_0,\omega)$ and $E \in \R$ such that $L(E) > 0.$ For any $0 < \tau < 1,$ there exist $k_\tau = k_\tau(E) < \infty, \delta = \delta(\omega, \nu),$ and $\gamma = \gamma(\sigma, \nu, \omega)$ such that for any $\epsilon > 0, k > k_\tau,$ and $x\in \T^\nu,$ there is $1 \leq j \leq k^{\gamma + \epsilon}$ so that for any $z\in \C$ with $|z - E| < e^{-\frac{\tau kL(E)}{\norm{f}_\infty}}$ we have
\begin{equation}
\norm{A_k^{f,z}(x + j\omega)} > e^{k(1 - \tau)L(E)}.
\end{equation}
\end{mythm}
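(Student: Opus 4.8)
The plan is to reprise the argument used for Theorem \ref{THM1} in Section \ref{section:MultiDCase}, replacing the shift discrepancy bound by the skew-shift version. The starting point is Lemma \ref{GoodSetInclusion}, which is stated for general $T_\omega$ and gives, for $k$ large and $|E-z| < e^{-\frac{k\tau L(E)}{\norm{f}_\infty}}$, the chain of inclusions $V_k^f(E,aL(E)) \subset V_k^{\tilde f_{N_1}}(E, cL(E)) \subset V_k^f(z, dL(E))$ with $\tilde f_{N_1}$ a polynomial of degree $N_1 \sim k^{\sigma\nu+}$, so that $V_k^{\tilde f_{N_1}}(E,cL(E))$ is semialgebraic of degree at most $kN_1$. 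Combined with the measure bound $|V_k^f(z,dL(E))| > 1/2$ from Lemma \ref{lemma:MeasureBound}, all that is needed is a skew-shift analogue of Lemma \ref{LEMMA:Discrepancy1}: for every $x\in\T^\nu$ there is $1\le j\le k^{\gamma+\epsilon}$ with $T_\omega^j(x) \in V_k^{\tilde f_{N_1}}(E,cL(E))$.

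To get that analogue I would first invoke the large deviation estimate for the skew-shift with Gevrey potential, Theorem \ref{MultiDGevreyLDT}, which applies under $\omega\in SDC(A,c)$ and $\lambda>\lambda_0(f_0,A)$; together with the transfer-matrix comparison \eqref{eq70} and $L_k(E)\ge L(E)$ this yields $\left|\T^\nu\setminus V_k^{\tilde f_{N_1}}(E,cL(E))\right| < e^{-k^\alpha}$ for $k$ large. This complement is semialgebraic of degree at most $kN_1$. Then I apply the discrepancy bound Theorem \ref{DiscrepancyBound} (skew-shift case, valid for $\omega\in SDC(A,c)$), exactly as in the proof of Lemma \ref{LEMMA:Discrepancy1}: set $\mathcal{S} = \T^\nu\setminus V_k^{\tilde f_{N_1}}(E,cL(E))$, $\eta = e^{-k^\alpha}$, $B = kN_1$, $\delta = \delta(\omega,\nu)$ the skew-shift discrepancy constant from Theorem \ref{ExplicitDiscrepancyBound} (so $\delta < \frac{1}{\nu 2^{\nu-1}(1+\epsilon)}$), and choose $N = B^{C/(\delta-\epsilon)}$; then $\ln B \le \ln N < \ln\frac1\eta$ holds for $k$ large, and Theorem \ref{DiscrepancyBound} gives $\#\set{1\le j\le N: T_\omega^j(x)\in\mathcal{S}} < N^{1-\delta}B^C < N^{1-\epsilon} = N - N^{1-\epsilon'}\cdot(\text{stuff})$, in particular strictly less than $N$, so some $1\le j\le N$ has $T_\omega^j(x)\in V_k^{\tilde f_{N_1}}(E,cL(E))$. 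Since $N = (kN_1)^{C/(\delta-\epsilon)}$ and $N_1\sim k^{\sigma\nu+}$, this $j$ is bounded by $k^{\gamma+\epsilon}$ with $\gamma = \gamma(\sigma,\nu,\omega)$ a function of $\sigma$, $\nu$ and $\delta$ (hence of $\omega$). Finally, the conclusion $\norm{A_k^{f,z}(x+j\omega)} > e^{k(1-\tau)L(E)}$ follows by feeding this $j$ into the inclusion $V_k^{\tilde f_{N_1}}(E,cL(E))\subset V_k^f(z,dL(E))$ and using $d > 1-\tau$, exactly as at the end of Section \ref{section:MultiDCase}.

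The one genuinely new ingredient compared to the shift case is checking that the $SDC$ Diophantine condition is still compatible with the chain $\ln B\le \ln N<\ln\frac1\eta$ when we pick the exponent $C/(\delta-\epsilon)$ governing $N$; since $\eta=e^{-k^\alpha}$ decays only stretched-exponentially in $k$ while $B = kN_1$ is polynomial in $k$, this is automatic for $k$ large, just as before, so there is really no obstacle here — it is a bookkeeping matter. I also need to note that the skew-shift is invertible and ergodic with the forward and backward Lyapunov exponents agreeing, and that $T_\omega^{-1}$ is again a skew-shift with parameter satisfying the same $SDC$ condition (the inverse skew-shift is $(x_1,\dots,x_\nu)\mapsto(x_1-\omega,x_2-x_1,\dots)$), so the identical argument produces the left-cocycle version; this is the point flagged at the start of Section \ref{section:TechLemmas}. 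The main subtlety to be careful about — and the step I'd expect to require the most attention — is tracking the precise form of $\gamma$ as a function of $\delta$: because $\delta$ for the skew-shift degrades like $2^{-(\nu-1)}$, the resulting $\gamma$ is exponentially large in $\nu$, but it is still finite and depends only on $\sigma,\nu,\omega$, which is all the statement asserts.
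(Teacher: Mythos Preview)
Your proposal is correct and follows essentially the same approach as the paper: the paper's proof of Theorem \ref{SkewTHM1} consists of stating the skew-shift analogue of Lemma \ref{LEMMA:Discrepancy1} (your Lemma \ref{LEMMA:Discrepancy}), noting its proof is identical except for invoking the skew-shift case of Theorem \ref{DiscrepancyBound}, and then concluding via the inclusion $V_k^{\tilde f_{N_1}}(E,cL(E))\subset V_k^f(z,dL(E))$ and $d>1-\tau$. Your mention of Lemma \ref{lemma:MeasureBound} is superfluous here (the LDT gives the much stronger measure bound you actually use), and the inverse-skew-shift remark is only needed for Theorem \ref{SkewTHM2}, not \ref{SkewTHM1}, but neither affects correctness.
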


\begin{mythm}\label{SkewTHM2}
Fix $\epsilon > 0.$ Let $f = \lambda f_0 \in G^\sigma(\T^\nu), \nu > 1, \omega \in SDC(c,A), \lambda > \lambda_0(f_0,\omega)$ and $L(E) > 0.$ Then for any $\xi, \zeta > 1,$ there is $c > 0$ and $T_E < \infty$ such that for $T > T_E,$
\begin{equation}
\inf\set{\min_{\iota = \pm 1} \max_{1 \leq \iota m \leq (\ln T)^{\zeta (\gamma + \epsilon)}} \norm{A_m^{f,z}(x)}^2 T^{-\xi}} > c
\end{equation}
where $\gamma$ and $\delta$ are as above, and 
the infimum is over all $x \in \T^\nu$ and $z \in \C$ with $|z - E| < T^{-\zeta}.$ Moreover, if we restrict our attention to $E$ in some compact interval $[-K, K],$ we can take $T_E$ uniformly bounded below.

In particular, for $E \in [-K,K],$ we have $\max_{1 \leq n \leq \ln(T)^{\zeta (\gamma + \epsilon)}} \norm{A_n^{f, E + i/T}}^2 \geq CT^{\xi}$ for every $\xi > 1$ and $T$ large.
\end{mythm}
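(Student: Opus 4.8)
The plan is to mirror exactly the argument used to prove Theorem~\ref{THM2} from Theorem~\ref{THM1} in Section~\ref{section:MultiDCase}, with the role of the shift discrepancy bound now played by the skew-shift discrepancy bound. First I would fix $\xi,\zeta > 1$ and choose $0 < \tau < \frac{\zeta\norm{f}_\infty}{\zeta\norm{f}_\infty + \xi} < 1$, and then pick a scale parameter $M_k = M_k(\xi,\zeta)$ satisfying the two constraints
\begin{equation}
e^{k\tau L(E)/(\zeta\norm{f}_\infty)} < M_k < e^{k(1-\tau)L(E)/\xi},
\end{equation}
\begin{equation}
(\ln M_k)^{(\gamma + \epsilon)\zeta} > k^{\gamma +} + k,
\end{equation}
both of which can be arranged for $k$ large because $\tau$ was chosen in the indicated range and $\zeta > 1$. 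Applying Theorem~\ref{SkewTHM1}, for every $x \in \T^\nu$ there is $1 \leq j \leq (\ln M_k)^{(\gamma+\epsilon)\zeta} - k$ so that $\norm{A_k^{f,z}(T_\omega^j x)} \geq M_k^{\xi}$ whenever $|z - E| < M_k^{-\zeta}$.

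Next I would use the cocycle identity $A_{k+j}^{f,z}(x) = A_k^{f,z}(T_\omega^j x)\, A_j^{f,z}(x)$ together with $\norm{A_k} = \norm{A_k^{-1}}$ for $SL_2$ cocycles to deduce
\begin{equation}
\norm{A_k^{f,z}(T_\omega^j x)} \leq \norm{A_{k+j}^{f,z}(x)}\,\norm{A_j^{f,z}(x)},
\end{equation}
hence at least one of $\norm{A_{k+j}^{f,z}(x)}$, $\norm{A_j^{f,z}(x)}$ is $\geq M_k^{\xi/2}$, so that
\begin{equation}
\max_{1 \leq m \leq (\ln M_k)^{(\gamma+\epsilon)\zeta}} \norm{A_m^{f,z}(x)}^2 \geq M_k^{\xi}.
\end{equation}
As in the proof of Theorem~\ref{THM2}, for some $T_0 = T_0(E) < \infty$ and any $T > T_0$ one can find $k$ and a valid choice $M_k = T$; substituting gives the desired lower bound $\inf \{\max \norm{A_m^{f,z}(x)}^2 T^{-\xi}\} > c > 0$ over $x \in \T^\nu$ and $|z - E| < T^{-\zeta}$. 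The ``in particular'' statement for $E \in [-K,K]$ with $z = E + i/T$ follows immediately since $|i/T| < T^{-\zeta}$ fails for $\zeta > 1$ in general, so here one simply takes $\zeta = 1$ (or absorbs the discrepancy into the admissible range), uses continuity of $L(E)$ on $[-K,K]$ — established in \cite{SKleinMultiD} for the relevant models — together with compactness to get $T_E$ uniformly bounded below.

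The one genuinely new ingredient relative to Section~\ref{section:MultiDCase} is establishing Theorem~\ref{SkewTHM1} itself, i.e.\ the skew-shift analogue of Lemma~\ref{LEMMA:Discrepancy1}: one needs that after at most $k^{\gamma+\epsilon}$ iterates of the skew-shift every orbit hits the good semialgebraic set $V_k^{\tilde f_{N_1}}(E, cL(E))$, whose complement has measure $< e^{-k^\alpha}$ by the large deviation estimate of Theorem~\ref{MultiDGevreyLDT} (valid for the skew-shift when $\omega \in SDC(A,c)$ and $\lambda > \lambda_0$). This is where the skew-shift discrepancy bound of Theorem~\ref{DiscrepancyBound}/Theorem~\ref{ExplicitDiscrepancyBound} enters, with $\delta < \frac{1}{\nu 2^{\nu-1}(1+\epsilon)}$ in place of $\frac{1}{\nu + A}$; the rest of the bookkeeping (choosing $\mathcal{S}$, $\eta = e^{-k^\alpha}$, $B = kN_1$, $N = B^{C/(\delta-\epsilon)}$) goes through verbatim. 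The main obstacle is therefore purely that the skew-shift lacks the invariance $A_{-k}(\omega,x) = A_k(-\omega, x+\omega)$ used in the shift case for the left cocycle; one instead notes that $-\omega$ again satisfies an $SDC$ condition and that the skew-shift discrepancy bound is insensitive to this sign, so Lemma~\ref{LEMMA:Discrepancy1}'s analogue holds for $A_{-k}^{f,z}$ as well and the same $M_k$ handles both $\iota = \pm 1$.
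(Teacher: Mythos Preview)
Your proposal is correct and follows the paper's approach exactly: the paper simply says Theorem~\ref{SkewTHM2} is proved from Theorem~\ref{SkewTHM1} ``in the same way that Theorem~\ref{THM2} was proved using Theorem~\ref{THM1}'', and you have spelled out precisely those steps with the appropriate substitutions ($T_\omega^j x$ for $x + j\omega$, and the skew-shift discrepancy bound with $\delta < \frac{1}{\nu 2^{\nu-1}(1+\epsilon)}$ in place of the shift bound).

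One small correction concerns your treatment of the left cocycle. Your justification via ``$-\omega$ again satisfies an $SDC$ condition and the skew-shift discrepancy bound is insensitive to this sign'' does not actually work as stated, because $T_\omega^{-1}$ is \emph{not} the skew-shift with frequency $-\omega$ (unlike the rotation case, where $T_\omega^{-1} = T_{-\omega}$). The clean fix is to observe that the backward orbit segment $\{T_\omega^{-j}x : 1 \leq j \leq N\}$ coincides, as a set, with a forward orbit segment of length $N$ starting at $T_\omega^{-N}x$, and the sublinear bound in Theorem~\ref{DiscrepancyBound} is uniform in the basepoint $x_0$; alternatively, since the iterates $T_\omega^k$ are affine maps on $\T^\nu$, the relevant translated good set remains semialgebraic of the same degree and the argument goes through directly. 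Either route gives the conclusion you state. The paper itself does not spell this out for the skew-shift, relying instead on the blanket remark at the start of Section~\ref{section:TechLemmas} that ``the exact same arguments establish the same results for left cocycles.''
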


An analogue of Lemma \ref{Lemma9} follows using the same argument as in the multifrequency shift case. The proof is identical to the proof of Lemma \ref{LEMMA:Discrepancy}, but we use the skew-shift bound from Theorem \ref{DiscrepancyBound} instead of the shift bound.

\begin{mylemma}\label{LEMMA:Discrepancy}
Let $\delta$ be defined as above. For $f = \lambda f_0 \in G^\sigma(\T^\nu),$ there exists $\lambda_0(f_0,\omega)$ such that, for $\lambda > \lambda_0,$ every $\epsilon > 0$ and $x \in \T^\nu$ there exists $1 \leq j \leq k^{C(1/\delta)(\sigma\nu + 1) + \epsilon}$ such that $T_\omega(x) \in V_k^{\tilde{f}_{N_1}}(E, c L(E)).$
\end{mylemma}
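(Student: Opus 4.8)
The plan is to mirror the proof of Lemma~\ref{LEMMA:Discrepancy1} almost verbatim, substituting the skew-shift discrepancy bound from Theorem~\ref{DiscrepancyBound} (equivalently, the explicit estimate $\delta < \frac{1}{n2^{n-1}(1+\epsilon)}$ for the skew-shift from Theorem~\ref{ExplicitDiscrepancyBound}) in place of the shift bound $\delta \le \frac{1}{\nu+A}$. First I would invoke the large deviation estimate for the skew-shift in the Gevrey class, Theorem~\ref{MultiDGevreyLDT}, which applies precisely when $f = \lambda f_0$ with $\lambda > \lambda_0(f_0,\omega)$; combining it with the telescoping bound \eqref{eq71} (and the choice $N_1 \sim k^{\sigma\nu+}$ from Remark~\ref{N1Choice}) gives
\begin{equation}
\left|\T^\nu \backslash V_k^{\tilde{f}_{N_1}}(E, cL(E))\right| < e^{-k^\alpha}
\end{equation}
for $k$ sufficiently large, exactly as in the shift case, and since $L_k(E) \ge L(E)$ the one-sided deviation is controlled.

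Next I would observe that the complement $\mathcal{S} := \T^\nu \backslash V_k^{\tilde{f}_{N_1}}(E, cL(E))$ is semialgebraic of degree at most $kN_1$ (the complement of a set defined by a single polynomial inequality of that degree). Setting $\eta = e^{-k^\alpha}$, $B = kN_1$, and $N = B^{C/(\delta-\epsilon)}$ for $0 < \epsilon < \delta$ (with $\delta$ the skew-shift discrepancy exponent), the hypothesis $\ln B \le \ln N < \ln\frac{1}{\eta}$ of Theorem~\ref{DiscrepancyBound} holds for $k$ large, since $\ln\frac{1}{\eta} = k^\alpha$ grows faster than $\ln N \sim \frac{C}{\delta-\epsilon}\ln(kN_1)$. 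Theorem~\ref{DiscrepancyBound} then yields
\begin{equation}
\#\set{1 \le j \le N : T_\omega^j(x) \in \mathcal{S}} < N^{1-\delta}B^C = B^{C\frac{1-\delta}{\delta-\epsilon}}B^C = B^{C\frac{1-\epsilon}{\delta-\epsilon}} < N^{1-\epsilon},
\end{equation}
so for every $x \in \T^\nu$ there is some $1 \le j \le (kN_1)^{C\frac{1-\epsilon}{\delta-\epsilon}}$ with $T_\omega^j(x) \in V_k^{\tilde{f}_{N_1}}(E, cL(E))$. Plugging in $N_1 \sim k^{\sigma\nu+}$ converts the bound on $j$ into $j \le k^{C(1/\delta)(\sigma\nu+1)+\epsilon}$, absorbing the $\epsilon$-losses and the factor $\frac{1-\epsilon}{\delta-\epsilon}$ into $\frac{1}{\delta}$ plus an arbitrarily small power; this gives the claimed exponent with $\gamma = \gamma(\sigma,\nu,\omega)$ since $\delta = \delta(\omega)$.

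The only genuine subtlety, as opposed to routine bookkeeping, is checking that the trajectory argument still applies to the \emph{skew-shift}: the semialgebraic sets $V_k^{\tilde{f}_{N_1}}$ are defined exactly as before (they involve only the transfer matrix entries as functions of the phase, which are polynomials once $\tilde{f}_{N_1}$ is a polynomial), so their semialgebraic degree bound is unaffected by which dynamics $T_\omega$ generates — and Theorem~\ref{DiscrepancyBound} is stated uniformly for both shift and skew-shift. Thus the substitution is legitimate, and the main obstacle is purely the verification of the scale hypothesis $\ln B \le \ln N < \ln\frac{1}{\eta}$ together with tracking how the worse skew-shift value of $\delta$ (which scales like $2^{-(\nu-1)}$ rather than $(\nu+A)^{-1}$) propagates into $\gamma$; since $\delta$ only enters multiplicatively as $1/\delta$ in the exponent, this is harmless for the qualitative statement, and I would simply record $\gamma = C(\sigma\nu+1)(1/\delta)$ as in the earlier remark. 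I would close by noting, as in the shift case, that the identical argument applies to the backward cocycle $A_{-k}$ since the skew-shift Diophantine condition $SDC(A,c)$ is preserved under the relevant change of variables, so the two-sided bound needed downstream is available.
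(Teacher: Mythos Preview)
Your proposal is correct and follows essentially the same approach as the paper: the paper's own ``proof'' of this lemma consists of a single sentence stating that the argument is identical to that of Lemma~\ref{LEMMA:Discrepancy1} with the skew-shift bound from Theorem~\ref{DiscrepancyBound} substituted for the shift bound, which is exactly what you carry out in detail. Your closing remark about the backward cocycle is not needed for this lemma itself (it belongs to the downstream Theorem~\ref{SkewTHM2}), but it does no harm.
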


Theorem \ref{SkewTHM1} now follows from the fact that $V_k^{\tilde{f}_{N_1}}(E, c L(E)) \subset V_k^f(z, d L(E)),$ and observing that $d > 1 - \tau,$ just as in the case $\nu = 1.$

Theorem \ref{SkewTHM2} can now be proved using Theorem \ref{SkewTHM1} in the same way that Theorem \ref{THM2} was proved using Theorem \ref{THM1}.

\begin{proof}[{\bf Proof of Theorem \ref{THM:SkewShiftResult}}]
We can use the same argument as the proof of Theorem \ref{THM4}, using the analogous results from this section rather than those from Section \ref{section:MultiDCase}.
\end{proof}

\section*{Acknowledgements}
We thank Wencai Liu for useful comments on the earlier version of the manuscript. This work was partially supported by NSF DMS-1901462, DMS-2052899, and Simons 681675. S.J. was a 2020-21 Simons fellow.

\bibliographystyle{alpha} 
\bibliography{QPDynamicsV10}

\end{document}